\def\usenix{0}
\def\notes{0}
\newcommand{\msf}{\mathsf}
\renewcommand{\cref}{\Cref}
\newtheorem{lemma}{Lemma}
\newtheorem{theorem}{Theorem}
\newtheorem{claim}{Claim}
\newtheorem{definition}{Definition}
\newcommand{\zo}{\{0,1\}}
\newcommand{\N}{\mathbb{N}}
\newcommand{\E}{\mathbb{E}}
\newcommand{\setrandomly}{\ensuremath{\xleftarrow{\$}}}
\DeclarePairedDelimiter\abs{\vert}{\rvert}
\newlist{todolist}{itemize}{2}
\setlist[todolist]{label=$\square$}
\newcommand{\cL}{\mathcal{L}}
\newcommand{\cA}{\mathcal{A}}
\newcommand{\cD}{\mathcal{D}}
\newcommand{\cC}{\mathcal{C}}
\newcommand{\cG}{\mathcal{G}}
\newcommand{\cO}{\mathcal{O}}
\newcommand{\cX}{\mathcal{X}}
\newcommand{\cY}{\mathcal{Y}}
\newcommand{\alg}{\cA}
\newcommand{\wdist}{\dist_{\mathcal{W}_1}}
\newcommand{\eps}{\varepsilon}
\newcommand{\selection}{\rho }
\newcommand{\univ}{\mathcal{X}}
\newcommand{\coherenceExp}{\mathsf{DemCoh}}
\newcommand{\dist}{\msf{dist}}
\definecolor{azure}{rgb}{0,0.49,1}
\newcommand{\gsk}[1]{\textsf{\textbf{\textcolor{azure}{[[GPK: #1]]}}}}
\newcommand{\gsknote}[1]{{\color{azure}\footnote{{\color{azure} {\bf GPK:} #1}}}}
\newcommand{\gsknote}[1]{}
\newcommand{\gsk}[1]{}
\definecolor{mypurple}{rgb}{0.58,0.23,0.94}
\newcommand{\pj}[1]
{{\color{mypurple}#1}}
\newcommand{\pjinline}[1]{{\color{mypurple} #1}}
\newcommand{\pjnote}[1]{{\color{mypurple}\footnote{{\color{mypurple} {\bf P:} #1}}}}
\newcommand{\pjtodo}[2]{{\color{mypurple}\sf
{\bf[[P:} #1{\bf]]} #2}
\footnote{{\color{mypurple} {[[\bf P:} #1 {\bf]]}}}
}
\newcommand{\pj}[1]{{\color{black}#1}}
\newcommand{\pjinline}[1]{}
\newcommand{\pjnote}[1]{}
\newcommand{\pjtodo}[2]{}%{{\color{mypurple}\sf
\newcommand{\mc}[1]
{{\color{orange}\sf\bf [[MC: #1]]}}
\newcommand{\mcinline}[1]{{\color{orange} #1}}
\newcommand{\mcnote}[1]{{\color{orange}\footnote{{\color{orange} {\bf MC:} #1}}}}
\newcommand{\mc}[1]{}
\newcommand{\mcinline}[1]{}
\newcommand{\mcnote}[1]{}
\newcommand{\mb}{\mbnote}
\newcommand{\mbnote}[1]{{\color{red}\footnote{{\color{red} {\bf MB:} #1}}}}
\newcommand{\mb}{\mbnote}
\newcommand{\mbnote}[1]{}
\newcommand{\ssnote}[1]{{\color{purple}\footnote{{\color{purple} {\bf SS:} #1}}}}
\newcommand{\ssnote}[1]{}
\renewcommand{\sout}[1]{}
\newcommand{\eg}[0]{\emph{e.g.,}\xspace}
\newcommand{\ie}[0]{\emph{i.e.,}\xspace}
 \newcommand{\cdf}{\msf{cdf}}
    \newcommand{\distw}{\dist_{\mathcal{W}_1}}
    \newcommand{\wass}{\distw}
\DeclareMathAlphabet{\mathbbold}{U}{bbold}{m}{n}
\newcommand{\coherent}{\xspace demographically coherent\xspace}
\newcommand{\restrict}[2]{\raisebox{0.17ex}{$\mathop{\boldsymbol{\large\pi}_{#2}}$}\left(#1\right)}
\newcommand{\defeq}{\stackrel{\text{def}}{=}}
\newcommand{\ignore}[1]{{ }}
\newcommand{\bigmid}{\,\,\,\Big\vert\,}
\DeclareMathOperator*{\Expectation}{\mathbb{E}}
\newcommand{\Ex}[2]{\Expectation_{#1}\left[#2\right]}
\DeclareMathOperator*{\Probability}{\mathrm{Pr}}
\newcommand{\prob}[1]{\mathrm{Pr}\left[#1\right]}
\newcommand{\Prob}[2]{\Probability_{#1}\left[#2\right]}
\newcommand{\bbx}{\mathbf{x}}
\newcommand{\bX}{\mathbf{X}}
\newcommand\numberthis{\addtocounter{equation}{1}\tag{\theequation}}
\newcommand\allbold[1]{{\boldmath\textbf{#1}}}
\newcommand{\mybox}[4]{
\ifnum\usenix=1
\begin{figure}[t]
\else
\begin{figure}[H]
\fi
\begin{center}
\fcolorbox{black}{mypurple!3}
{
\small
\hbox{\quad
\ifnum\usenix=1
\begin{minipage}{.75\columnwidth}
\else
\begin{minipage}{6in}
\fi
\vspace{0.3em}
\begin{center}
{\allbold{#1}}
\end{center}
#4
\vspace{0.2em}
\end{minipage}
}
}
\caption{\label{#3} #2}
\end{center}
\end{figure}
\vspace{-1em}
}
\title{Enforcing Demographic Coherence: A Harms Aware Framework for Reasoning about Private Data Release}
\author{Mark Bun, Marco Carmosino, Palak Jain, Gabriel Kaptchuk, Satchit Sivakumar}
\author{
{\rm Anonymous Author(s)}\\
Affiliation
}
\date{ }
\begin{document}

\maketitle

\begin{abstract}
The technical literature about data privacy largely consists of two complementary approaches: formal definitions of conditions sufficient for privacy preservation and attacks that demonstrate privacy breaches. Differential privacy is an accepted standard in the former sphere.  However, differential privacy's powerful adversarial model and worst-case guarantees may make it too stringent in some situations, especially when achieving it comes at a significant cost to data utility. 
Meanwhile, privacy attacks aim to expose real and worrying privacy risks associated with existing data release processes but often face criticism for being unrealistic. Moreover, the literature on attacks generally does not identify what properties are necessary to defend against them.

We address the gap between these approaches by introducing \emph{demographic coherence}, a condition inspired by privacy attacks that we argue is necessary for data privacy. This condition captures privacy violations arising from inferences about individuals that are incoherent with respect to the demographic patterns in the data. Our framework focuses on  confidence rated predictors, which can in turn be distilled from almost any data-informed process. Thus, we capture privacy threats that exist even when no attack is explicitly being carried out. Our framework not only provides a condition with respect to which data release algorithms can be analysed but suggests natural experimental evaluation methodologies that could be used to build practical intuition and make tangible assessment of risks. Finally, we argue that demographic coherence is weaker than differential privacy: we prove that every differentially private data release is also demographically coherent, and that there are demographically coherent algorithms which are not differentially private.
\end{abstract}

\ifnum\usenix=0
\newpage  
\tableofcontents
\newpage
\fi

% The file with all our old edits and comments is in old-files

\section{Introduction}
The collection of data and dissemination of aggregated statistics is a key function of government and civil society, driving critical data-driven decision making processes, \eg democratic apportionment, collective resource allocation, and documenting ongoing social ills.  Indeed, data has become an indispensable modern tool for producing knowledge. 
However, the collection of personal data---particularly mass scale collection---introduces the potential for the inappropriate disclosure of information that individuals might prefer to remain private. 
Thus, data curators must carefully apply privacy protection mechanisms to their data, ideally without compromising the utility of the eventual data release. 
The study of privacy preserving data releases started with \emph{attacks} that compellingly demonstrated that data releases which had not taken steps to ensure privacy could be weaponized for harm, \eg Latanya Sweeney’s infamous re-identification of the Massachusetts Governor Bill Weld’s medical records~\cite{sweeney}. Since then, there has been a robust literature describing increasingly sophisticated attacks which continue to motivate efforts towards privacy preserving data release \cite{transyouthintexas,censusreconstruction,censusreconstruction2,USENIX:Cohen22}. While these attacks have proved convincing enough to shift data protection practices in many fields, attack demonstrations do not provide a clear path towards designing data protection mechanisms themselves, \pj{even in the form of ``prevent all attacks like this one''; the attack demonstrations do not take on the task of distilling a set of agreed upon properties that make the attack convincing.} In the aftermath of these attacks, the research community has developed a set of formal approaches that aim to provide robust privacy guarantees. While early attempts, like k-anonymity, proved inadequate, differential privacy \cite{DworkMNS16j} has recently emerged as an accepted standard, seeing deployment in both industry \cite{CCS:ErlPihKor14,apple2017,ding2017collecting,tezapsidis2017uber} and government \cite{abowd2018us}. 
These formal approaches are often seen as \emph{sufficient} for ensuring data subject's privacy, 
in that they are ``one-size-fits-all,'' \ie data curators can apply best practice protections without needing to consider the intricacies of each deployment.

% providing robust properties that continue to hold under composition and post-processing. 
% Additinally, even if we focus just on leakage of individual information, there can be significant barriers to applying differential privacy.
% In practice, there can be significant barriers to applying DP, from the implementation to actually explaining its guarantees to data users and data subjects. Both of which stem from an inability to intuitively connect the guarantees of differential privacy with the kinds of privacy harms one might want to prevent.
In practice, however, there can be significant barriers to applying differential privacy, which stem from the need to strike a delicate balance between the benefits of privacy preservation and its cost to utility~\cite{amin2024practicalconsiderationsdifferentialprivacy}. Moreover, the generality of sufficient conditions means that they naturally lend themselves to being very abstract, which can make it far too easy to lose sight of the concrete privacy harms they are intended to prevent~\cite{CummingsHSS24}. 

The deficiencies inherent in each of the existing approaches compels us to explore an intermediary design philosophy:  \emph{necessary} conditions.  Within this approach, we can formally define (possibly many) properties that any private data release should guarantee without needing to provide a single, unifying, sufficient condition.  These necessary conditions can be seen as giving formal procedures for recognizing when an attacker has inflicted harm. Specifying necessary conditions promises to be an approach that simultaneously embraces the formality of sufficient conditions, while being just as concrete and convincing as attacks. Thinking in terms of necessary conditions has always been implicit in the practice of differential privacy (albeit, usually informally), where \pj{selecting the ``best'' privacy parameters $\eps,\delta,$ for a deployment requires a trade-off with other metrics, such as accuracy. This makes it necessary to understand how small the parameters \emph{must be} for the prevention of concrete privacy harms. Therefore, necessary conditions can provide a concrete methodology for justifying parameter choices by identifying parameter regimes that could enable specific harms.} 

\vspace{0.5em}
%\medskip
\noindent
\textbf{A new necessary condition: Demographic Coherence.} In this work, we design a novel necessary privacy notion rooted in three key insights: 
(1) privacy harms are increasingly going to come in the form of inferences at the hands of predictive algorithms.\footnote{In this work we intentionally us the term ``algorithm'' broadly to capture, \eg informal decision-making process made by humans that might not be explicitly codified as algorithms in the traditional sense.} That is, we should be interested in the predictions that these algorithms make about people---and the decisions organizations may make based on these predictions---even when predictive algorithms are not intentionally designed with causing harm in mind; (2) we should consider the confidence with which an algorithm can make predictions, because simply increasing the \emph{confidence} that an individual or a group has a certain attribute may be enough to result in harm;
and (3) The harms associated with breaches of privacy are not experienced uniformly among members of a population. This means that, if not defined carefully, an aggregate measure across an entire population could easily `hide' effects on vulnerable subgroups by averaging them away.

\pj{Our} resulting notion, which we call \emph{demographic coherence}, is intentionally designed to be \emph{ergonomic}\footnote{We use ergonomic in this context to mean ease of use by many different stakeholders.  We intentionally move away from the term ``usable,'' as this typically focuses only on end-users and we are interested in ease of use from a more diverse set of communities.} in many different contexts. For example, we provide sufficient formalism to enable rigorous analysis and provable realization, all while keeping the specific harms against which demographic coherence protects compellingly  salient.  Additionally, we provide a vision as to how demographic coherence can support the type of intuition building required to set real-world parameters. 

\subsection{Our Contributions}

In this work we make the following contributions:

\ifnum\usenix=1
\else
\begin{itemize}[--,leftmargin=*]
\fi

\ifnum\usenix=1
\smallskip
\noindent
\else
\item[--] 
\fi \textbf{Demographic Coherence.} In this work we introduce \emph{demographic coherence}, an analytical framework for reasoning about the privacy provided by data release algorithms. Demographic coherence has the following qualities:
\begin{itemize}
\item[--] \emph{Captures predictive harms.} Demographic coherence builds on conceptual tools from generalization ~\cite{vapnik1971, DworkFHPRR15, CummingsLNRW16, ImpLPS22, BunGHILPSS23} and multicalibration~\cite{HebertJohnsonKRR18, KimGZ19} to 
(1) evaluate the risk of predictive harms distributionally without relying on measuring accuracy with respect to an unknown (and possibly unknowable) ground truth and (2) evaluate the risk of predictive harms local to the different subgroups within a population. Evaluating risks distributionally allows the framework to remain applicable even when ground truth is unavailable, and evaluating risks for different subgroups allows the framework to identify effects specific to vulnerable subgroups.

\item[--] \emph{Lends itself to experimental auditing.} Demographic coherence has a natural translation to an experimental setup for comparing the effects of various algorithms for privacy preserving data release. 
In addition, demographic coherence is measured by computing a distance metric over two distributions, which facilitates quantification of the concrete risk. In this work we study an instantiation of demographic coherence measured using Wasserstein distance.

\item[--] \emph{Lends itself to analytical arguments.} Finally, the formalism we build supports rigorous analytical arguments about algorithms. For example, 
we show that all algorithms with bounded max information are also coherence enforcing.
\end{itemize}

\ifnum\usenix=1
\smallskip
\noindent
\else
\item[--] 
\fi
\textbf{Demographic coherence enforcement is \pj{achievable}.} 
% \mb{I prefer ``achievable'' to ``instantiable'' both here and in the first sentence below}
We prove that demographic coherence enforcement is \pj{achievable},
% instantiable, 
showing parameter conversions under which any pure differentially private (pure-DP) algorithm and any approximate differentially private (approx-DP) algorithm enforce demographic coherence. For an overview of these theorems, see \Cref{sec:overview-of-technical-results}.

\ifnum\usenix=1
\else
\end{itemize}
\fi

\subsection{Related Work}

The study of privacy-preserving data release broadly falls into two categories: demonstration of potential harms via concrete attacks, and the development of formal methodologies that provide robust guarantees. These two approaches provide complementary insights. Formal approaches provide a concrete path to implementing privacy-protections, and the motivation for their use is derived from attacks. In particular differential privacy provably protects against membership inference (\eg \cite{Homer08, DworkSSUV15, ShokriSSS17,YeomGFJ18}), reconstruction (\eg \cite{DinurN03, CohenN20a, C:HaiNukYog22, carlini2021private}), and reidentification (\eg \cite{sweeney, NarayananS08}), as shown by Dwork et. al.\cite{DworkSSU17}. In practice, however, there are fundamental challenges in using attacks to guide the many choices one must make when implementing privacy protections. These challenges arise from (1) identifying successful attacks, (2) identifying realistic attacks, and  (3) determining the privacy protections \emph{necessary} to prevent the attacks being considered.

\medskip\noindent
\textbf{Evaluating the success of an attack.} In using attacks to motivate formal methodologies, one must start by demonstrating the extent of potential vulnerabilities. For example, membership inference is an attack that relates directly to the definition of differential privacy---however, the potential to infer membership in a dataset isn't a convincing vulnerability in the case of large data collection efforts like the US decennial Census. Therefore, differential privacy frequently derives its motivations from re-identification and reconstruction attacks. Still, the success of these attacks is difficult to evaluate.\footnote{\eg reconstruction of features like gender can be carried out simply from knowing population statistics rather than breaking anonymity \cite{ruggles2022role}.}
In recent work, Dick et al. implemented a reconstruction attack \cite{pnas_reconstruction} along with robust evaluations of its success. Their work has since been cited by the US Census Bureau’s chief scientist as evidence that “database reconstruction does compromise confidentiality”~\cite{KellerA24}. The key insight in their evaluation comparing the results of the reconstruction to a baseline in which reconstruction is conducted with complete access to the distribution underlying the data. 
While the intuition behind this work---that an attack is much more concerning if it reveals more than what could be learned from a detailed knowledge of the distributional properties---applies to many attack paradigms, 
the baseline considered in their work is specific to reconstruction attacks.  Reconstruction attacks are not always possible to carry out, and, furthermore, conducting a reconstruction attack assumes malicious intent in a way that may or may not be convincing to all stakeholders. We introduce the demographic coherence framework which extends this intuition to the evaluation of a more general class of attacks. 

\pj{Another place where the efficacy of specific attacks is measured via comparison to baselines is the literature on auditing differentially private algorithms (\eg \cite{JayaramanE19, JagielskiUO20, NasrH0BTJCT23, JagielskyNS23}). Here, attacks are carried out on existing systems, and the efficacy of the attack is used to measure the maximum level of ``effective privacy'' that the system confers.}

\medskip\noindent
\textbf{Identifying realistic attacks.} Research into conducting privacy attacks makes a variety of assumptions about the setting in which those attacks could be conducted, including  the goal of the attacker, the power of the attacker, the type of system attacked, etc. 
These assumptions can radically change the extent to which an attack should be considered a realistic threat against real-world data releases; attacks that require unrealistic assumptions may not be concrete threats. 
The works of Rigaki \& Garcia~\cite{RigakiG24}, Salem et al.~\cite{SalemCEKPSTB23}, and Cummings et al.~\cite{CummingsHSS24} classify existing attack strategies by adversarial resources and goals in order to provide a structure for evaluating privacy risks.
In addition to this, Cohen~\cite{Cohen22} and Giomi et al.~\cite{giomi2022unified} take a different approach, appealing to the law to determine the goals of a realistic attacker.  Specifically, they contextualize the attacks they consider by tying them to existing privacy law. Still, individual attacks, even if successful and realistic, don't provide a clear path forward in terms of designing protections.

\medskip\noindent
\textbf{Identifying necessary conditions.} Some prior work has started to identify \emph{necessary} conditions for achieving privacy.
Cohen \& Nissim \cite{CohenN20b} introduce a necessary condition, called ``predicate singling out,'' inspired by the GDPR notion of singling out.
Balle et. al.~\cite{BalleCH22} introduce an alternative necessary condition called ``reconstruction robustness,'' which is closely related to reconstruction attacks. Cummings et. al.~\cite{CummingsHSS24} build on the notion of reconstruction robustness, extending it to a weaker adversarial setting. 
Our framework extends this general approach but applies to a much broader class of attacks---namely, any attacks from which a confidence rated predictor could be distilled.

Recent work by Cohen et al.~\cite{CohenKMMNST24} also recognizes the need to bridge the gap between formal privacy guarantees and practical attacks. Building on definitions in prior work \cite{BalleCH22, CohenN20b, CummingsHSS24} they introduce ``narcissus resiliency,'' a framework for establishing precise conditions under which an algorithm prevents various classes of existing attacks, including reconstruction attacks, singling out attacks, and membership inference attacks.
Our definition defines invulnerability against a different type of privacy loss, providing complementary insights in the form of necessary conditions that can be considered alongside their definitions.  Specifically, we believe that it is important to consider demographic coherence alongside their notion of narcissus singling out; the latter captures an important property that the former does not. \pj{(An algorithm that chooses a small subset of the data to publish in the clear does not meet the definition of \emph{singling out security} even though it may be \emph{demographic coherence enforcing} if the subset is small enough.)} 
Another key difference between our works is that the narcissus framework does not naturally lend itself to concrete experimental evaluation, whereas demographic coherence is intentionally designed with this use case in mind.
    
\medskip\noindent
Finally, \pj{most of the} works discussed above measure the success of an attack via its accuracy (\ie is the information extracted about the data subject \emph{true}?).  We observe that harm is not necessarily predicated on accuracy, and we design demographic coherence to be intentionally independent of accuracy.  One impact of this choice is that demographic coherence is a more natural fit for settings in which ground truth is difficult or impossible to measure.

\section{Overview of Technical Results}\label{sec:overview-of-technical-results}

In \Cref{sec:acheivingdc}, we show parameter conversions under which any pure-DP algorithm, and any approx-DP algorithm enforces demographic coherence. 
Here, we present informal statements of these technical results.

We start by presenting a simplified definition of \emph{coherence enforcement} (Definition~\ref{def:coherence},~\ref{def:coherenceEnforcing}). \pj{(This presentation is meant to allow the informal statements of our technical results. For a formal presentation, see \Cref{sec:formaldef}. Additionally, the concept of enforcing demographic coherence emerges from careful consideration of several key principles, which are discussed in detail in \Cref{sec:walkthrough}.)}
Informally, a coherence-enforcing $\alg$ guarantees that predictors trained using its private reports will be demographically coherent.\footnote{In reality, the property of demographic coherence applies to algorithms $\cL$ that use private reports to design predictors.} 
% (The concept of enforcing demographic coherence emerges from careful consideration of several key principles, which are discussed in detail in \Cref{sec:walkthrough}.)
\pjnote{I moved the commented out parenthetical up.}

\begin{definition}[Informal Version of Definitions~\ref{def:coherence}~and~\ref{def:coherenceEnforcing}
]\label{informaldef:coherence}
Consider a data universe $\univ$, and a data-curation algorithm $\alg: \univ^* \to \cY$. We say that $\alg$ enforces $(\alpha, \beta)$-demographic coherence, if for all algorithms $\cL: \cY \to (\univ \to [-1,1])$ that use the report produced by the curator to create a confidence-rated predictor $h: \univ \to [-1,1]$, the following condition is satisfied. For all datasets $X$, 
$$ 
\Pr_{X_a,X_b \setrandomly X, R_a \leftarrow \alg(X_a), h \leftarrow \cL(R_a)}[dist(h(X_a), h(X_b)) \geq \alpha] \leq \beta,$$
where $X_a, X_b$ represent a random split of the dataset $X$ into halves, report $R_a $ is produced by the data-curator using only $X_a$, and $h$ is created by running algorithm $\cL$ on the report, $h(X_a)$ represents the empirical distribution of predictions made on $X_a$, and $\dist(\cdot,\cdot)$ represents a metric distance between distributions. 
Here, $\beta$ is the probability that $h$ is not demographically coherent, and $\alpha$ represents how close the distributions of $h(X_a)$ and $h(X_b)$ are required to be.
\end{definition}

The formal definition of \emph{coherence enforcement} is more intricate than the one above. One key technical distinction is that the restriction on predictor $h$ applies not only to the full sets $X_a$ and $X_b$, but also across different subpopulations in those sets. For the remainder of this section, we specify the distance metric $\dist(\cdot,\cdot)$ as Wasserstein\nobreakdash-1 distance between distributions. In this context, we say that and algorithm $\alg$ \emph{enforces Wasserstein-coherence}.

The following theorem is an informal statement of \Cref{thm:max-info-implies-demographic-coherence}, which argues that any data-curation algorithm with bounded max-information \cite{DworkFHPRR15} (a notion that mathematically captures the dependence of algorithms' outputs to their inputs) also enforces Wasserstein coherence. 
\begin{theorem}[Informal Version of ~\Cref{thm:max-info-implies-demographic-coherence}]\label{introthm:max-info-implies-demographic-coherence}
    Let $n\in\N$,$\zeta>0$, $\beta \in (0,1)$, $\alpha \in (0,2]$.  %Let $\cC$ be a collection of categories $C \in \univ^*$. 
    
   Consider a data curation algorithm ${\alg:\univ^{n/2}\to \cY}$ with bounded max-information 
    $$I^{\beta/2}_{\infty}(\alg,n/2) < \zeta.$$
   Then, $\alg$ enforces $(\alpha,\beta)$-demographic coherence provided that $n \geq k\cdot\frac{\zeta + \ln(1/\beta)}{\alpha^2}$ for some constant $k$.
\end{theorem}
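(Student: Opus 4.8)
Fix an arbitrary dataset $X$ of size $n$ and an arbitrary learning algorithm $\cL$. The plan is to exploit the defining feature of bounded max-information: an event that is improbable when the predictor $h$ is \emph{decoupled} from the data it is evaluated on stays improbable---up to a multiplicative factor $2^{\zeta}$ and an additive $\beta/2$---when $h$ is actually produced from that data. First I would note that max-information is preserved under post-processing, so the pair $(X_a, h)$ with $h \leftarrow \cL(\alg(X_a))$ satisfies $I^{\beta/2}_{\infty}(\cL\circ\alg, n/2) < \zeta$. Since $X_b = X \setminus X_a$ is a deterministic function of the random split, the ``bad event''
\[
E \;=\; \bigl\{\, \wass\bigl(h(X_a),\, h(X_b)\bigr) \ge \alpha \,\bigr\}
\]
is a well-defined event of the joint law of $(X_a, h)$, and the max-information transfer inequality gives
\[
\Pr_{\mathrm{real}}[E] \;\le\; 2^{\zeta}\cdot \Pr_{\mathrm{dec}}[E] \;+\; \tfrac{\beta}{2},
\]
where in the decoupled process $h$ is drawn from its marginal, independently of the split of $X$.

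It now suffices to show $\Pr_{\mathrm{dec}}[E] \le \beta\cdot 2^{-(\zeta+1)}$, and this is a statement about a \emph{fixed} predictor: condition on $h$ and let $\mu = h(X)$ denote the empirical distribution of its predictions over all of $X$. By the triangle inequality for $\wass$ it is enough that $\wass(h(X_a),\mu) < \alpha/2$ and $\wass(\mu, h(X_b)) < \alpha/2$. Because every prediction lies in $[-1,1]$, the $\wass$ distance between two such distributions is at most twice the sup-norm distance of their CDFs, so it suffices to control $\lVert F_{h(X_a)} - F_{\mu}\rVert_{\infty}$ (and the analogue for $X_b$). Since $X_a$ is a uniformly random half of $X$, a Dvoretzky--Kiefer--Wolfowitz-type bound in its finite-population form---or, more crudely, Hoeffding--Serfling together with a union bound over the at most $n$ distinct prediction values---yields $\Pr\bigl[\lVert F_{h(X_a)} - F_{\mu}\rVert_{\infty} \ge \alpha/4\bigr] \le 2\exp\bigl(-\Omega(\alpha^2 n)\bigr)$, uniformly in $h$; the same holds for $X_b$. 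A union bound then gives $\Pr_{\mathrm{dec}}[E] \le 4\exp(-\Omega(\alpha^2 n))$, and combining with the display above, $\Pr_{\mathrm{real}}[E] \le \beta$ as soon as $2^{\zeta}\cdot 4\exp(-\Omega(\alpha^2 n)) \le \beta/2$, i.e.\ $n \ge k\cdot\frac{\zeta + \ln(1/\beta)}{\alpha^2}$ for an absolute constant $k$.

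To match the formal definition---which additionally requires $\wass(h(X_a^{c}), h(X_b^{c})) < \alpha$ for every subpopulation $c$---I would rerun the fixed-$h$ concentration argument conditioned on membership in $c$ and union bound over the subpopulations, absorbing the resulting $\log(\#\text{subpopulations})$ into the slack on $n$ (or into $\zeta$); if subpopulations are allowed to be data-dependent, I would instead replace the pointwise bound by a VC-type uniform-convergence bound over the family of ``threshold $\times$ subpopulation'' events.

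The step I expect to be the main obstacle is not the concentration argument but the invocation of the max-information transfer inequality itself: one must make sure it is applied with the input distribution that the \emph{formal} definition of coherence uses---a product distribution $\cD^{n/2}$ versus a uniformly random partition of a fixed $X$ (sampling without replacement)---since the max-information of $\alg$ is stated for product inputs. Reconciling this cleanly (by observing that without-replacement subsampling only attenuates input--output correlations, by passing through the two-sample / hold-out formulation of max-information, or simply by aligning with the sampling model baked into Definition~\ref{def:coherence}) is the delicate point; everything downstream is routine.
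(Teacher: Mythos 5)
Your proposal follows essentially the same route as the paper: a max-information ``transfer'' step that decouples the predictor from the evaluated split (the paper's Claim~1), followed by a fixed-predictor concentration argument for a uniformly random half-sample, carried out via the integral/CDF representation of $\wass$ with a union bound over the at most $n$ distinct prediction values and then over subpopulations (the paper's Claim~2, which uses the Hush--Scovel hypergeometric tail bound where you invoke Hoeffding--Serfling/finite-population DKW). The delicate point you flag at the end is handled in the paper exactly as you suggest: max-information is defined directly with respect to sampling without replacement, and the DP-to-max-information conversions are re-proved in that model.
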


We leverage the connection between differential privacy and max-information to show the exact parameter conversion under which differentially private algorithms enforce demographic coherence.  
Theorem~\ref{introthm:pure-DP-implies-demographic-coherence} is an informal statement of \Cref{thm:pure-dp-implies-coherence-enforcement}, the result for pure-DP. For the approximate-DP result, we point the reader to \Cref{thm:approx-dp-implies-coherence-enforcement} in \Cref{sec:acheivingdc}.

\begin{theorem}[Informal Version of Theorem~\ref{thm:pure-dp-implies-coherence-enforcement}]\label{introthm:pure-DP-implies-demographic-coherence}
    Let $n\in\N$, $\beta, \eps \in (0,1)$, $\alpha \in (0,2]$.
    
   Consider an $\eps$-DP data curation algorithm ${\alg:\univ^{n/2}\to \cY}$.
   Then, $\alg$ enforces $(\alpha,\beta)$-demographic coherence provided that $\eps \leq k \cdot \frac{\alpha}{\ln(1/\beta)}$ for some constant $k$.
\end{theorem}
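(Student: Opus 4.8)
I would obtain the pure-DP statement as a corollary of \Cref{introthm:max-info-implies-demographic-coherence} by plugging in the standard quantitative relationship between differential privacy and approximate max-information~\cite{DworkFHPRR15}. Concretely, for an $\eps$-DP mechanism $\alg : \univ^{m} \to \cY$ one has a bound of the shape $I^{\beta'}_{\infty}(\alg, m) \le \log_2(e)\left(\eps^2 m/2 + \eps\sqrt{m\ln(2/\beta')/2}\right)$ when $\alg$'s input is drawn from the distribution used by the coherence experiment; taking $m = n/2$ and $\beta' = \beta/2$, I would set $\zeta$ to be an upper bound on the right-hand side. Then \Cref{introthm:max-info-implies-demographic-coherence} already delivers $(\alpha,\beta)$-demographic coherence — including its refinement to subpopulations, which \Cref{introthm:max-info-implies-demographic-coherence} handles internally — as soon as $n \ge k(\zeta + \ln(1/\beta))/\alpha^2$.

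The remaining work is the parameter bookkeeping: verify that the hypothesis $\eps \le k'\alpha/\ln(1/\beta)$, together with the sample-size condition inherited from \Cref{introthm:max-info-implies-demographic-coherence} and a harmless normalization such as $\beta \le 1/2$, makes the inequality $n \ge k(\zeta + \ln(1/\beta))/\alpha^2$ hold. Substituting the expression for $\zeta$, one controls each of the three contributions against $n\alpha^2$ in turn — the $\eps^2 n$ term, the $\eps\sqrt{n\ln(1/\beta)}$ term, and the additive $\ln(1/\beta)$ term. Tracking constants through this substitution produces the stated conversion (with a conservative constant $k'$; I would not expect the argument to be tight and would keep the looseness confined to this final arithmetic).

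The step I expect to be the genuine obstacle is matching the max-information bound to the sampling model used by demographic coherence. The cited bound on the max-information of a pure-DP mechanism is classically proven for product (i.i.d.) input distributions, whereas \Cref{informaldef:coherence} feeds $\alg$ a uniformly random half $X_a$ of a \emph{fixed} dataset $X$, i.e.\ a sample without replacement. I would address this either (i) by appealing to a form of the max-information bound valid for exchangeable inputs, or (ii) by falling back on the deterministic bound $I^{0}_{\infty}(\alg, m) \le (\log_2 e)\,\eps m$, which follows for \emph{every} input distribution directly from group privacy (since any two inputs of size $m$ are at group-distance $m$), at the cost of a weaker but still nonvacuous parameter regime, and then checking which of the two suffices to land the claimed $\eps \le k'\alpha/\ln(1/\beta)$. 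Everything else is mechanical once the max-information bound is fixed; the only conceptual content is this reconciliation and the threading of the DP parameter through \Cref{introthm:max-info-implies-demographic-coherence}.
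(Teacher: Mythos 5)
Your plan is correct and follows essentially the same route as the paper: the formal result (Theorem~\ref{thm:pure-dp-implies-coherence-enforcement}) is proved by plugging a pure-DP $\Rightarrow$ bounded-max-information bound into Theorem~\ref{thm:max-info-implies-demographic-coherence} (with $\beta' = \beta/2|\cC|$ to union-bound over subpopulations) and doing the parameter bookkeeping you describe. The obstacle you flag is real and the paper resolves it via your option (i) --- it re-proves the \cite{DworkFHPRR15} max-information bound for sampling without replacement using a without-replacement version of McDiarmid's inequality (Theorem~\ref{thm:pure-dp-implies-maxinfo}); note that your fallback (ii) would only yield $\zeta = \Theta(\eps n)$ and hence a condition of the form $\eps \lesssim \alpha^2$, which does not recover the stated $\eps \leq k\alpha/\ln(1/\beta)$.
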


This theorem should be understood as follows: a data curator identifies (possibly experimentally) regimes for $\alpha$ and $
\beta$ that they find to be ``too risky''
for a data release (with respect to demographic coherence).  That curator can then use this 
theorem to suggest a value of $\eps$ such that, if they were to use differential privacy as their privatization mechanism, the resulting data release would achieve their desired constraints. While the parameter conversion in \Cref{thm:pure-dp-implies-coherence-enforcement} would likely result in a value of $\eps$ that is too small for most use cases, we expect this to be inherent to a black-box conversion of differential privacy to the enforcement of demographic coherence. 
We leave it as an important open question 
to identify other ways of achieving our definition, including non-black-box uses of DP-algorithms for obtaining better coherence enforcement guarantees.

\section{A Walk Through Our Definition}\label{sec:walkthrough}

In order to clearly motivate and explain the choices embedded within our definition, we incrementally build up our approach in this section; for the formal definition see Section~\ref{sec:formaldef}. 

\medskip\noindent
\textbf{Notation and Conventions.}
Assume that the data curator has collected a sample $X$ from the overall population of interest. We make no requirements on the relative sizes of $X$ and the population such that our framework can be used broadly---even in Census-like circumstances, in which the goal is to sample the entire population. Our ultimate goal is to reason about a data curator $\alg$ who uses $X$ to generate a privacy-preserving release $R$.\footnote{In our formal experiment, we actually suppress the formal object of the report.  Specifically, we reason directly about the composition of some data processing algorithm $\alg$ with an arbitrary algorithm $\cL$, rather than making the report an explicit object that is then passed to $\cL$.}

\subsection{Predictive Harms}\label{sec:incoherent-predictions}

Within our framework, we characterize the adversary as a party interested in making predictions about individuals, \eg if people have some particular stigmatized feature or are going to buy a product if they are targeted with an advertisement. We formalize this conceptual approach by considering an arbitrary algorithm $\cL$ used by the adversary to design the predictor $h$.\footnote{A more complex interpretation could also cover an approach to prediction that takes into account a social decision-making process.  While this interpretation is beyond the technical scope of our work, it may be interesting to consider in future work.} We choose to measure privacy risk in terms of predictive harms for the following reasons:
\begin{enumerate}[--,leftmargin=*]
    \item \emph{Predictors are commonplace:}  The predictions made by machine learning models increasingly have direct impacts on people's daily lives. Diagnostic models are being tested as potential aides for medical experts~\cite{AhsanLS22}, and increasingly complex and opaque models are used to ``match'' job candidates with prospective employees~\cite{indeed,ideal,aihr} in order to increase the odds that an individual ends up with a lucrative job. Even complex infrastructures, like those used in digital advertising, can be seen as predictive models that are attempting to classify individuals into target audiences. The fact that predictive algorithms are increasingly commonplace---and the decisions they make concretely impact our daily lives---makes them a very \emph{believable} source of harm.
    
    \item \emph{Harmful predictors need not be maliciously produced:} By considering the impact of predictors, we free ourselves from needing to see the adversary as \emph{intentionally} trying to cause harm and instead can refocus on the (perhaps accidental) harms that a data release has the potential to cause. 
    % Harmful inferences can be propagated by even well meaning data-dependent systems. 

\end{enumerate}

A conceptual concern about considering predictive harms is whether we are explicitly ruling out particular, important types of adversarial behavior that are attempting to extract information (\eg reconstruction attacks). However, we note that by discussing predictors we are only limiting the input/output behavior of the adversary's product, and not how the predictor is produced. For example, our framework could capture an adversary that runs a known reconstruction algorithm (\eg \cite{DickDKLRVZ22}) and then makes predictions about individuals based on the produced table.  In this way, our approach highlights the ways in which existing approaches could be \emph{used} when applied in decision-making contexts. Looking ahead, in theoretically analyzing our framework we will \emph{universally quantify} over algorithms to preserve generality, which means that reconstruction-based approaches---or other known malicious approaches to data extraction---are naturally captured.

Still, in choosing to concretize the type of our adversary, we do risk failing to consider a \textit{different} type of attacker with inconsistent goals. Definitions created with this philosophy can be extremely helpful at establishing \emph{necessary} conditions for ensuring privacy, but do not claim to be \textit{sufficient}. On the other hand, our approach helps highlight a specific way in which data is likely to be weaponized in the real world. 

\newcommand{\examplenameone}{Asahi\xspace} %Made using Ahttps://www.behindthename.com/random/random.php?gender=u&number=2&sets=1&surname=&norare=yes&all=yes
\newcommand{\examplenametwo}{Blair\xspace} %Made using Ahttps://www.behindthename.com/random/random.php?gender=u&number=2&sets=1&surname=&norare=yes&all=yes
    
\medskip\noindent
\textbf{Incoherent predictions.} 
Within this work we focus on capturing predictive harms that occur specifically by virtue of a data subject appearing in a dataset. To give a concrete example, consider the now classic case of Narayanan and Shmatikov's re-identification of Netflix users within an anonymized data release using public IMDB data \cite{SP:NarShm08}.  In this setting, we might consider an adversary interested in learning a predictor which predicts queerness (\eg imagine the adversary is operating in a regime in which queerness is criminalized or highly stigmatized). Now, imagine two similar\footnote{The notion of similarity is obviously a loaded one, as the ways in which two individuals are similar or different depend on the types of predictions being made about them. We eventually handle this by quantifying over many notions of similarity. For the sake of this motivation, it is enough to assume that the similarity of these individuals is meaningful with respect to the characteristic being predicted about them.} individuals \examplenameone and \examplenametwo; each intentionally avoids being perceived as queer, and in particular does not provide ratings on movies with queer themes on their public IMDB profiles. Assume that based on a random sample, one of them (\eg \examplenameone) has their movie ratings released by Netflix and the other (\eg \examplenametwo) does not. A predictor that is likely to guess that \examplenameone is queer when they are present in the dataset but would not have guessed they are queer otherwise indicates that the predictor was able to extract some information about \examplenameone from the data release.  Given the assumption that we claim that \examplenameone and \examplenametwo are similar, this would also be true of a predictor that guessed that \examplenameone is queer while \examplenametwo is not. 

Importantly, this is true even if it's not clear exactly what form leakage takes or if the  prediction as to their queerness is inaccurate. We call predictors that act in this way ``demographically incoherent.'' There are two important (if unintuitive) subtleties that immediately emerge from this description of incoherent predictions:
\begin{enumerate}[(1),leftmargin=*]
    \item \emph{Harmful predictors need not be accurate:} Incoherent predictions focus on the behavior of the predictor \emph{independent of accuracy}. Within the example above, it is not important if \examplenameone is actually queer, it is enough that the predictor guesses that \examplenameone is queer because of their presence in the data.
    This is because we envision our predictor being used to make some real-world decision, \eg limiting the opportunities available to \examplenameone due to their perceived queerness. As such, the prediction's accuracy is a secondary concern.
    \item \emph{Measuring confidence is critical:}  When considering the ways in which data releases can be translated into real-world harms, it's important to recognize that enabling an adversary to make high confidence predictions about private attributes is a problem. 
    Importantly, this means that we should not require that the adversary can predict private attributes with 100\% certainty in order for it to be considered harmful. 
    Indeed, there is no particular cut-off threshold for certainty at which point it is natural to consider a harm occurring for all contexts.  In turning to predictors as our adversarial strategy, we naturally arrive in a context within which notions of confidence have been extensively explored. 
    Specifically, our approach considers confidence-rated predictors $h$, 
    allowing us to directly reckon with predictive uncertainty.
\end{enumerate}

We note that there are other pathological predictors which do not indicate privacy-loss, and are therefore not considered demographically incoherent. For example, a predictor may make guesses that are entirely random or guess that everyone in the population has some feature, \ie make predictions that do not depend on the characteristics of individuals. The challenge then is to detect demographically incoherent predictors, whose behavior indicates privacy leakage, without depending on accuracy and without accidentally measuring variance in behavior that is not dataset dependent.

\subsection{An Experiment to Detect Demographically Incoherent Predictors} \label{sec:detecting-incoheret-algorithms}

With intuition about our class of ``bad'' predictors in hand, we now turn our attention to designing an experiment for detecting algorithms that produce them. In this discussion, we will defer to the concrete ways in which we will measure the demographic incoherence, and first focus on the experiment itself—that is, first we will decide what values we should measure, and then proceed to deciding how to do that measurement.
 
Because a symptom of incoherent predictors is differing performance on in-sample and out-of-sample individuals, it is clear that a \emph{comparison} is required.  However, it is not immediately obvious  what the ``right'' comparison should actually be.  In fact, some natural approaches fall short of our goals. As such, we walk through two seemingly natural, but flawed, experiments before discussing our final choice. Recall that the data curator has a dataset $X$ and will be releasing a privacy-preserving report $R$.

\begin{enumerate}[(1),leftmargin=*]

    \item \emph{Comparing before and after a data release:}  A very natural approach  would be to compare the performance of a predictor created \emph{before} a data release with one created \emph{after}. \ie comparing the performance (on individuals in $X$) of $h_0$ produced by an algorithm $\cL$ with access to the adversary's pre-existing, auxiliary information $\mathsf{Aux}$ to a predictor $h_1$ produced by $\cL$ with access to both $\mathsf{Aux}$ and the report $R$.\footnote{As this approach sketch is mainly to motivate our final approach below, we gloss over some formalities in this description.  For example, how do we know that $\cL$ does not act differently when provided one input ($\mathsf{Aux}$) and two inputs ($\mathsf{Aux}$ and $R$)?} 
    Such a comparison, intuitively, should isolate exactly the predictive changes associated with releasing $R$.

    Where this approach fails is that it does not recognize that there \emph{should} be a difference between the predictors $h_0$ and $h_1$ over the inputs in $X$.  After all, if there was no difference between $h_0$ and $h_1,$ there would be no value whatsoever in releasing $R$! As such, this comparison is necessarily conflating potential ``bad'' types of predictions that releasing $R$ enables with the ``good'' types of predictions that motivated the release of $R$ in the first place.

    \item \emph{Comparing to the base population:} The next most natural approach would be to compare the behavior of a single predictor $h$ on individuals in $X$ with that on individuals in the rest of the population. For example, by comparing its behavior on another similarly sampled dataset $Y$. This improves on our previous approach because we might expect that a ``good'' learning algorithm uses the dataset to learn about the population at large instead of revealing specifics about individuals.

    While this approach gets to the core of our interests, it has an important flaw. Technically speaking, we can not assert that a real world sampling procedure has access to the base population distribution.
    \ie one cannot assume that two real world datasets are i.i.d samples from the same distribution. Also, we could conceivably be in a situation where the \emph{entire} population of interest is contained in $X$, leaving no one in $\bar{X}$ against which we could compare.  Therefore, keeping in mind the ergonomics of our definition in a concrete deployment scenario, 
    this approach also falls short of our goals.   
\end{enumerate}

\noindent
\emph{Our approach.} We build off the second approach above by taking control of the randomness used to separate the two comparison populations.  Specifically, we split the dataset $X$ into two uniformly selected halves, $X_a$ and $X_b$. We then use the data curation algorithm to generate the report $R$ using only $X_a$, holding $X_b$ in reserve as our ``test'' data set. We then test the behavior of a predictor $h$, designed based on the report $R$. Specifically, we compare the predictions of $h$ on individuals in $X_a$ and $X_b$. This approach ``fixes'' our second failed attempt by moving the assumptions about the randomness used in sampling $X$---something over which we have no control---into the randomness we use to split $X$ into $X_a$ and $X_b$---something over which we do have control. We say that a data release is \emph{demographically incoherent} with respect to $X$ if its predictions on members of  $X_a$ are noticeably different than the predictions it makes on members of $X_b$ (who necessarily have similar demographic distributions, given the uniform split.) 

\subsection{Measuring the Incoherence of a Predictor}

Finally, we discuss how to compare the behavior of $h$ on $X_a$ and $X_b$ without relying on accuracy.
Formally, we consider real-valued confidence-rated predictors $h: \univ \to [-1,1]$ which predict something about individuals. To capture the fact that these predictions are confidence rated, $h$ outputs values in $[0,1]$ when it predicts the attribute is likely to be true, and values in $[-1,0]$ when it predicts the attribute likely to be false, with a higher absolute value representing higher confidence.  

For any such predictor, we will consider $h(X_a)$ and $h(X_b)$ as representing the uniform distribution over the predictions of $h$ on $X_a$ and $X_b$ respectively. Comparing these distributions allows us to reason about the general behavior of the predictor $h$ on $X_a$ and $X_b$ without considering accuracy of predictions on individuals.
In order to get a more granular understanding of the behavior of $h$ we further formalize the intuition of making  comparisons over ``similar" individuals in $X_a$ and $X_b$ as explained below.

\medskip
\noindent
\textbf{Measuring a difference with respect to ``similar" individuals.}     
In our motivating discussion of incoherent predictions, our representative individuals \examplenameone and \examplenametwo were assumed to be ``similar'' to one another. 
To formalise this intuition, 
we ask that a predictor is demographically coherent not only on the population as a whole, but also on recognizable subgroups from the population, \eg men, women, college freshmen, middle-school teachers etc\ldots\footnote{We borrow this conceptual approach from \cite{Hebert-JohnsonK18}.} For each of these subgroups of the population, the things that bind them together make them similar, in some particular sense. 
By operationalizing our earlier intuition in this way, we ask that the demographic coherence property holds not only over some particular notion of similarity, but rather over many notions of similarity at the same time. It also has the following technical and social benefits: (1) From a technical perspective, considering only the full population might hide incoherent decisions within sub-populations that effectively ``cancel out.'' That is, there might be a right-shift in one group that masks a left-shift in a different group, each shift effectively ``disappearing'' in the collective distribution over all individuals.  (2) From a social perspective, there may be particularly important groups within the population for whom we want to ensure coherent predictors for normative reasons. For example, if $X$ is a Census-like dataset, we may want to ensure that there are not sub-geographies on which incoherent predictors are allowed. Similarly, we may want to ensure that there aren't legally protected categories (\eg race, sex, religion, etc\ldots) on whom incoherent predictions are allowed. 

\pj{\medskip\noindent\textbf{The \emph{lens} of a predictor.} Consider the adversary using the Netflix dataset to learn a predictor for queerness. We assume that at the time of making predictions, the adversary sees a public user profile (their IMDB ratings) which contains only some of the user information that was contained in the dataset. To formalize this intuition we introduce the \emph{lens} $\selection$ of the predictor, which indicates the attributes contained in the dataset which the predictor can ``see.'' We then compare the behavior of $h$ on members of $X_a$ and $X_b$ as seen through the lens $\rho$.\footnote{The predictor may also have side information about individuals not contained in the dataset, which can be formally included in the description of the adversarial algorithm $\cL$.}}

\medskip\noindent
\textbf{Choosing a metric.} In this work, we recommend instantiating the demographic coherence experiment with the distance metric of \emph{Wasserstein distance} (\Cref{def:wasserstein-distance}),
also known as earth-mover's distance, when measuring demographic coherence (or lack thereof).  Intuitively, this metric measures the minimum amount of work that it requires to deform one probability distribution into another.  For example, if one visualizes a probability distribution as a mount of dirt, \emph{Wasserstein distance} measures the effort required to move enough dirt to make one mount look like the other (thus, earth-mover's distance). Unlike Total Variation distance, which only measures the distance between the distribution curves, Wasserstein distance is greater with a higher shift in confidence; the importance of measuring confidence is one of the insights we highlight in \Cref{sec:incoherent-predictions}. Another advantage of the Wasserstein distance is that it has been widely studied and used in theoretical and empirical statistics, and so there is a rich mathematical toolkit that one can borrow from when reasoning about it. 

We recognize that there may be other measurement metrics that could be applied to the demographic coherence experiment that might highlight risk in different ways, and encourage this as important follow-up work.

\section{Formal Definition}\label{sec:formaldef}

This section presents the formal definitions corresponding to our framework. For a discussion about the various choices made here, see \cref{sec:walkthrough}.

\pj{\Cref{sec:notation} contains a glossary of the notation we use, \Cref{sec:def-incoherent-predictor} formally defines the notion of \emph{incoherence} that we measure, \cref{sec:def-demcoh} defines what it means for an algorithm to be \emph{demographically incoherent}, and \Cref{sec:def-coherence-enforcing} defines what it means for a data curation algorithm to be \emph{coherence enforcing}.} 

\subsection{Notation}\label{sec:notation}

\ifnum\usenix=1
\begin{itemize}[leftmargin=*,itemsep=0pt]
\else
\begin{itemize}
\fi
    \item We define a data universe $\univ = (\zo^* \cup \perp)^*$, where each feature of the data can also take the value $\perp$ (a.k.a. null).
    
    \item We denote a dataset consisting of $n$ records from $\univ$ by dataset $X \in \univ^n$. \footnote{In a real-world scenario, $X$ might be sampled from some underlying population, but our definition does not require this and makes no assumptions about how it might be done.}

    \item We consider a collection $\cC$ of sub-populations $C \subseteq \univ$.

    \item For any dataset $X$ and sub-population $C \in \cC$, we define $X|_C \defeq X \cap C$ to be the restriction of $X$ to the sub-population $C$, \ie the members of the dataset $X$ that belong to sub-population $C$. 

    \item We define a lens $\selection$ as a set of features from $\univ$.
    
    \item For a lens $\selection$, we define $\restrict{X}{\selection}$ to be the data in $X$ restricted to the features in the lens. That is, for every feature represented by $\selection$, the sets $\restrict{X}{\selection}$ and $X$ are exactly the same, and for features not represented by $\selection$, the entries in $\restrict{X}{\selection}$ always have the value $\perp$. 
    
    \item For any fixed predictor $h: \univ \to [-1,1]$, define the distribution $h(X)$ as the uniform distribution over the predictions of $h$ on $X$. That is, the distribution $h(X)$ has cumulative distribution function
    $$cdf_{h(X)}(p) = \underset{x \setrandomly X}{\Pr}[h(x) \leq p].$$
\end{itemize}

\subsection{Measuring Demographic Incoherence of Predictors}\label{sec:def-incoherent-predictor}

\pj{In this section we define the notion of \emph{incoherence} that we measure over predictors. This measurement is used informally in the demographic coherence experiment $\coherenceExp$ (\cref{fig:coherenceExp}).}

\begin{definition}[Demographically Incoherent Predictor]\label{def:incoherent-predictor}
    Let $\univ$ be a data universe, let $X_a, X_b \in \univ^{n/2}$ be datasets consisting of $n/2$ data entries each, let $\cC$ be a collection of sub-populations $C \subseteq \univ$, let the lens $\selection$ be a set of features from $\univ$, and let $h: \univ \to [-1,1]$ be a confidence rated predictor. Let $d(\cdot,\cdot)$ represent some metric distance between probability distributions.
    
     \medskip\noindent
     We say that the $h$ has \emph{$\alpha$-incoherent predictions} with respect to $X_a$, $X_b$, $\selection$, and $\cC$, if for some $C \in \cC$, 
     \[ d\Big(h(\restrict{X_a|_C}{\selection}),\, h(\restrict{X_b|_C}{\selection})\Big) > \alpha \] 
    
    \medskip\noindent
    (In contrast, $h$ has \emph{$\alpha$-coherent predictions} with respect to $X_a$, $X_b$, $\selection$, and $\cC$, if for all $C \in \cC$, it satisfies $d\Big(h(\restrict{X_a|_C}{\selection}),\, h(\restrict{X_b|_C}{\selection})\Big) \leq \alpha$)
\end{definition}

\medskip\noindent
In other words, the predictions of $h$ are incoherent if there is some sub-population $C$ that witnesses a big distance between its predictions on two sets $X_a$, and $X_b$. Note that this definition distills a notion of `incoherence' only once we fix some assumptions on $X_a,X_b$, perhaps that they are drawn from similar distributions. 

\subsection{Demographic Coherence}\label{sec:def-demcoh}

Consider a data universe $\univ$, an algorithm $\cL: \univ^{n/2}\to (\univ \to [-1,1])$ which uses a dataset of size $n/2$ to produce a confidence-rated predictor $h: \univ \to [-1,1]$. Let $\cC$ be a collection of sub-populations $C \subseteq \univ$, let the lens $\selection$ be a set of features from $\univ$,  Let $\dist(\cdot,\cdot)$ represent some metric distance between probability distributions.

\pj{In this section, we formally define when the algorithm $\cL$ is \emph{demographically coherent}. We start by defining the demographic coherence experiment $\coherenceExp$ which checks the demographic coherence of $\cL$ with respect to on a specific dataset $X$, a collection $\cC$, a lens $\rho$, and a distance metric $\dist(\cdot,\cdot)$. This experiment works similarly to the description under \emph{our approach} in \cref{sec:detecting-incoheret-algorithms}, expect we are testing the coherence of an algorithm that gets the dataset in the clear. In \cref{sec:def-coherence-enforcing}, we will use the notion of demographic coherence to define when a data curator is \emph{coherence enforcing}. 

In the $\coherenceExp$ experiment, the input dataset $X$ is split into sets $X_a, X_b$ where $X_b$ is held in reserve as a ``test'' set. Then the algorithm $\cL$, with input $X_a$, is used to produce a predictor $h$. Finally, the predictor $h$ is checked for \emph{demographic incoherence} (Def~\ref{def:incoherent-predictor}) with respect to $X_a,X_b,\rho$, and $\cC$.}

\mybox{$\coherenceExp_{\cL,X,\cC,\selection,\dist(\cdot,\cdot)}(\alpha)$}{Demographic Coherence Experiment}{fig:coherenceExp}{

\newcommand{\expindent}{\hspace{1em}}

\textbf{Input:}\\
\ifnum\usenix=0
\vspace{-2em}
\begin{itemize}[leftmargin=1em,itemsep=0pt,label={}]
\item 
\fi
An algorithm $\cL: \univ^{n/2}\to (\univ \to [-1,1])$, A dataset $X\in\univ^n$, a collection $\cC$ of subgroups $C\subseteq \univ$, a lens $\selection$, and a distance metric $\dist(\cdot,\cdot)$.
\ifnum\usenix=0
\end{itemize}
\fi

\medskip
\textbf{Split Data:}\\

\ifnum\usenix=0
\vspace{-2em}
\begin{itemize}[leftmargin=1em,itemsep=0pt,label={}]
\item 
\fi
$I \setrandomly \{S \subseteq [n] : |S| =  n/2 \}$
\item $X_a = (x_i)_{i \in I}$
\item $X_b = (x_i)_{i \in [n]\setminus I}$
\ifnum\usenix=0
\end{itemize}
\fi

\textbf{Compute Predictor($X_a$):}\\
\ifnum\usenix=0
\vspace{-2em}
\begin{itemize}[leftmargin=1em,itemsep=0pt,label={}]
\item 
\fi
$h\gets \cL(X_a)$
\ifnum\usenix=0
\end{itemize}
\fi

\textbf{Incoherence Condition:}\\ 
% \mb{I don't really like this loop since it seems...overly explicit? Almost to the point of undermining our credibility as people who are designing algorithms? Maybe something like, ``Set $b = 0$ if there exists $C \in \mathcal{C}$ such that ... [line break] Else, set $b = 1$''}

\ifnum\usenix=0
\vspace{-2em}
\begin{itemize}[leftmargin=1em,itemsep=0pt,label={}]
\item 
\fi
% $b=1$ 
% \ifnum\usenix=0 \item \fi
% For $C\in\cC$, set $b=0$ if:
\pj{Set $b=0$ if there exists $C \in \cC$ such that:}
\ifnum\usenix=0
\end{itemize}
\vspace{-1.4em}
\begin{itemize}[leftmargin=2.2em,label={  },itemsep=0pt]
\item 
\fi
$\dist\Big(h(\restrict{X_a|_C}{\selection}),\, h(\restrict{X_b|_C}{\selection})\Big) > \alpha$
\ifnum\usenix=0
\end{itemize}
\fi
\ifnum\usenix=0
\vspace{-2em}
\begin{itemize}[leftmargin=1em,itemsep=0pt,label={}]
\item 
\fi
\pj{Else set $b=1$}
\ifnum\usenix=0
\end{itemize}
\fi
}

\medskip\noindent
A natural formalization of the intuition we developed in \cref{sec:walkthrough} would say that an algorithm $\cL$ produces $(\alpha,\beta)$-demographically coherent predictions with respect to collections $\cC$ and lens $\selection$ if the following holds for all datasets $X$:
 \[\Pr[\coherenceExp_{\cL,X,\cC,\selection}(\alpha) = 0] \leq \beta.\]
However, this definition cannot be realized with respect to arbitrary categories $C$ and all datasets, because the sampling experiment in and of itself introduces some incoherence. For example, consider a category $C$ that is men over 60, and a dataset that contains only two people in this category. There exists a predictor that predicts $-1$ on one of them and $1$ on the other. With probability $1/2$, these two men end up in $X_a$ and $X_b$ respectively, and the Wasserstein distance between the predictors' distributions on $X_a|_C$ and $X_b|_C$ is $2$. Hence, for our definition to be meaningful and achievable, we need that the datasets considered are sufficiently representative that the incoherence due to sampling does not dominate. Thus, in order to make the definition achievable, we define demographic coherence with respect to a size constraint. 

\pj{To remove the need for the parameter $\gamma$, one could redefine $X_a|_C$ by ``zeroing out'' members of $X_a$ that are not in $C$ instead of taking the intersection $X_a\cap C$. This would effectively result in asking the predictor $h$ in \cref{def:incoherent-predictor}, and \cref{fig:coherenceExp} to make ``dummy'' predictions, with no information, for every member of $X_a$ and $X_b$ not belonging to $C$. While such a definition may be more mathematically elegant, we believe that the explicit failure point represented by $\gamma$ in our defintion is important for interpretability and ease of use---especially by non-experts. For the same reason, it is important to have an explicit collection $\cC$, and lens $\rho$, even though the eventual theorems one can prove may only hold for large choices of $\gamma$, $\cC$, and $\rho$. \mb{I reworded this slightly, but I didn't understand what you meant by big $\rho$. Shouldn't ``small'' $\rho$ make it easier to prove things?}\pjnote{Where did I say big $\rho$?}}

\begin{restatable}{definition}{DefCoherence}\label{def:coherence}
    {\normalfont(Demographic Coherence)\textbf{.}}
    Consider a data universe $\univ$, and an algorithm $\cL: \univ^{n/2}\to (\univ \to [-1,1])$ which uses a dataset of size $n/2$ to produce a fixed confidence-rated predictor $h: \univ \to [-1,1]$. Let $\cC$ be a collection of sub-populations $C \subseteq \univ$, let the lens $\selection$ be a set of features from $\univ$,  Let $\dist(\cdot,\cdot)$ represent some metric distance between probability distributions. 
    We say that \emph{$\cL$ produces $(\alpha,\beta)$-demographically coherent predictions} with respect to collection $\cC$, size-constraint $\gamma$, and lens $\selection$ if the following holds:

    \medskip\noindent 
    For all $X \in \cX^n$, $\cC^*= \{C\in\cC \mid\, |C\cap X| \geq \gamma\}$
    \[\Pr[\coherenceExp_{\cL,X,\cC^*,\selection}(\alpha) = 0] \leq \beta.\]
\end{restatable}

\subsection{Coherence Enforcing Algorithms}\label{sec:def-coherence-enforcing}

We finally define \emph{coherence-enforcing} algorithms by reference to the definition of \emph{demographic coherence} (\cref{def:coherence}). \pj{Specifically, a data curator $\alg$ is \emph{coherence enforcing} if, any algorithm $\cL$ can be rendered \emph{\coherent} simply by filtering its inputs through the data curator $\alg$, without many any changes to the algorithm itself.}

\begin{restatable}{definition}{DefCoherenceEnforcing}\label{def:coherenceEnforcing}
    {\normalfont(Coherence Enforcing Algorithms)\textbf{.}}
    Consider data universes $\univ, \mathcal{Y}$, a collection $\cC$ of sub-populations $C \subseteq \univ$, a lens $\selection$, and an algorithm $\alg:\univ^{n/2} \to \mathcal{Y}$. Let $\dist(\cdot,\cdot)$ represent some metric distance between probability distributions. 

    \medskip\noindent
    We say that $\alg$ enforces $(\alpha,\beta)$-demographic coherence with respect to collection $\cC$, size-constraint $\gamma$, and lens $\selection$ if:

    \medskip\noindent
    For any algorithm $\cL:\mathcal{Y}\to(\cX\to[-1,1])$, the combined algorithm $\cL\circ\alg:\univ^{n/2} \to (\univ \to [-1,1])$ satisfies $(\alpha,\beta)$-demographic coherence with respect to  the collection $\cC$, size-constraint $\gamma$,  and lens $\selection$.
\end{restatable}

\subsection{Instantiating Demographic Coherence with a Metric}

In the definitions above, we do not use a specific metric. The metric we will use to instantiate these definitions in the following sections is the Wasserstein-1 metric defined below.

\begin{definition}\label{def:wasserstein-distance}
    Let $P,Q$ represent distributions over a discrete subset $S \subseteq \mathbb{R}$. Then, the $1$-Wasserstein distance between $P,Q$ is defined as    
    $$\wass(P,Q) = \inf_\pi \sum_{i \in S} \sum_{j \in S} \|x_i - x_j \|_1 \pi(x_i,x_j),$$
   where the infimum is over all joint distributions $\pi$ on the product space $S \times S$ with marginals $P$ and $Q$ respectively. 
\end{definition}

If an algorithm is $(\alpha, \beta)$-demographically-coherent as per \Cref{def:coherence} with this Wasserstein metric instantiation, we say that it is \emph{$(\alpha, \beta)$-Wasserstein-demographically-coherent} (or $(\alpha,\beta)$-Wasserstein-coherent for short.) Similarly, if an algorithm enforces $(\alpha, \beta)$-demographic-coherence as per \Cref{def:coherenceEnforcing} with this Wasserstein metric, then we say it \emph{enforces $(\alpha, \beta)$-Wasserstein-demographic-coherence} (or it enforces $(\alpha, \beta)$-Wasserstein-coherence for short.)

\section{Algorithms that Enforce Wasserstein Demographic Coherence} \label{sec:acheivingdc}

Now, we show that Wasserstein coherence enforcement is instantiable. Firstly, in \Cref{sec:techprelim}, we go over some technical preliminaries necessary to state and prove our results, Then, in \Cref{sec:max-info-implies-demographic-coherence}, we prove \Cref{thm:max-info-implies-demographic-coherence} showing that algorithms with bounded max-information are coherence enforcing. Building on this, in \Cref{sec:dp-implies-coherence-enforcement}, we leverage the connection between differential privacy and max-information to prove Theorem~\ref{thm:pure-dp-implies-coherence-enforcement} which says that pure-DP algorithms enforce demographic coherence, and \Cref{thm:approx-dp-implies-coherence-enforcement} which is says that approx-DP algorithms enforce demographic coherence as well.

\subsection{Technical Preliminaries}\label{sec:techprelim}

For a recap of the notation used, see \Cref{sec:notation}.

\begin{definition}[Differential Privacy \cite{DworkMNS16j}]
Let $n\in\mathbb{N}$. A randomized algorithm $\alg:\mathcal{X}^n \to \cY$ is $(\epsilon, \delta)$-\emph{differentially private} if for all subsets $Y \subseteq \cY$ of the output space, and for all neighboring datasets $X, X' \in \mathcal{X}^n$ (i.e. $\|X - X' \|_0 \leq 1$), we have that
\begin{align*}
    \Pr[\alg(X) \in Y] \leq e^{\eps} \Pr[\alg(X') \in Y] + \delta
\end{align*}
\end{definition}

If $\delta = 0$, we refer to the algorithm as satisfying pure differential privacy (pure-DP), whereas $\delta > 0$ corresponds to approximate differential privacy (approx-DP).

    \begin{definition}[Max-information of random variables \cite{DworkFHPRR15}] Let $X$ and $Y$ be jointly distributed random variables over the domain $(\cX,\cY)$. The \textit{$\beta$-approximate max-information} between $X$ and $Y$, denoted by $I_\infty^\beta(X;Y)$ is
    $$I_\infty^\beta(X;Y) = \ln \left(\underset{\substack{T\subseteq(\cX\times\cY)\\\Pr[(X,Y)\in T]>\beta}}{\sup}\frac{\Pr[(X,Y)\in T]-\beta}{\Pr[X\otimes Y \in T]}\right)$$
    \end{definition}
    \begin{definition}[Max-information of algorithms \cite{DworkFHPRR15}]
    \footnote{This definition, for sampling without replacement, is slightly different than the original one.}  Fix $n \in \mathbb{N}$, $\beta > 0$. Let $\univ$ be a finite data universe of size $m$. Let $S$ be a sample of size $n$ chosen without replacement from $\univ$. Let $\alg: \univ^{n} \to \cY$ be an algorithm.

    Then we define the max-information of the algorithm as follows:
    \begin{align}
        I^{\beta}_{\infty}(\alg, n) = I^{\beta}_{\infty}(S, \alg(S)))
    \end{align}
    \end{definition}

The following definition of order-invariant algorithms appears as a technical assumption in some of our theorem statements.\footnote{Since it is an assumption in the version of McDiarmid's Inequality for sampling without replacement (\Cref{thm:mcdiarmids_without-our-version}) that we use.} This is a minimal assumption because any non-order-invariant algorithm can be made order-invariant by simply pre-processing the dataset with a sorting or shuffling operation.

\begin{definition}[Order-invariant algorithm]
An algorithm $\alg:\univ^m \to \cY$, is \emph{order invariant} if for all $X \in \cX^m$, the distribution of the random variable $\alg(X)$ does not depend on the order of the elements of $X$. 
\end{definition}

The proofs of the following theorems connecting differential privacy and max-information can be found in Appendix~\ref{app:maxinfo}.
\begin{restatable}{theorem}{puredpmaxinfo}\label{thm:pure-dp-implies-maxinfo}
  \emph{(Pure-DP $\implies$ Bounded Max-Information)}  Fix $n \in \mathbb{N}$, $\eps > 0$ and let $\univ$ be a data universe of size at least $n$.
  % \mb{Don't we want a dataset of size $n$? Not a data universe?} 
  Let $\alg:\cX^{n/2} \to \cY$ be an order-invariant $\eps$-DP algorithm. Then for any $\gamma > 0$,
    $$I^{\gamma}_{\infty}(\alg, n/2) \leq \eps^2n/4 + \eps\sqrt{n\ln(2/\gamma)/4}.$$
\end{restatable}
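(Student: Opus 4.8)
The plan is to bound the max-information by controlling the tail of the \emph{information density} of $(S,\alg(S))$, and then to convert per-output concentration bounds into a bound for the correlated pair by means of a global ``budget'' identity rather than a union bound over outputs. Write $k = n/2$, assume $\cY$ discrete (the general case follows by the usual passage to densities against a dominating measure), and for $s\in\univ^{k}$, $y\in\cY$ set $\mu(s,y) = \Pr[\alg(s)=y]$. Let $S$ be a uniform size-$k$ sample drawn without replacement from $\univ$, let $Y=\alg(S)$, and let $S'$ be an independent fresh sample. The information density is $\iota(s,y)=\ln\tfrac{\Pr[S=s,\,Y=y]}{\Pr[S=s]\Pr[Y=y]}=\ln\mu(s,y)-\ln\E_{S'}[\mu(S',y)]$. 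The first step is the elementary observation that if $\Pr_{(S,Y)}[\iota(S,Y)>\zeta]\le\gamma$ then $I^{\gamma}_{\infty}(S;Y)\le\zeta$: for any test set $T\subseteq\univ^{k}\times\cY$, restricting $T$ to $\{\iota\le\zeta\}$ costs at most $\gamma$ in probability mass, and on what remains each atom satisfies $\Pr[(S,Y)=(s,y)]\le e^{\zeta}\Pr[S=s]\Pr[Y=y]$, so $\Pr[(S,Y)\in T]-\gamma\le e^{\zeta}\Pr[S\otimes Y\in T]$. Hence it suffices to exhibit $\zeta=\eps^{2}n/4+\eps\sqrt{n\ln(2/\gamma)/4}$ with $\Pr_{(S,Y)}[\iota(S,Y)>\zeta]\le\gamma$.

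For the concentration, fix $y$ in the support of $Y$. Pure $\eps$-DP gives $\mu(s,y)/\mu(s',y)\in[e^{-\eps},e^{\eps}]$ for neighboring $s\sim s'$, and (chaining along a path of neighbors, since $\univ^{k}$ is connected under the neighbor relation) $\mu(\cdot,y)>0$ everywhere, so $g_{y}:=\ln\mu(\cdot,y)$ is well defined and has the bounded-differences property with constant $\eps$ in each of the $k$ coordinates. Because $\alg$, and therefore $g_{y}$, is order-invariant, the sampling-without-replacement form of McDiarmid's inequality (\Cref{thm:mcdiarmids_without-our-version}) applies and yields a sub-Gaussian tail for $g_{y}(S)$ around its mean $m_{y}:=\E_{S}[g_{y}(S)]$, with variance proxy of order $k\eps^{2}$. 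Two one-line consequences of Jensen's inequality will be used below: first, $\ln\E_{S'}[\mu(S',y)]\ge\E_{S'}[g_{y}(S')]=m_{y}$; second, $\sum_{y}e^{m_{y}}=\sum_{y}e^{\E_{S}[g_{y}(S)]}\le\sum_{y}\E_{S}[\mu(S,y)]=\E_{S}[1]=1$.

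The crux is to pass from the per-$y$ tail bound to a bound for $(S,Y)$ with $Y=\alg(S)$ \emph{correlated} with $S$, while the output set $\cY$ may be large or infinite so that a union bound is unavailable. Using the first Jensen fact to enlarge the event, $\Pr_{(S,Y)}[\iota(S,Y)>\zeta]\le\Pr_{(S,Y)}[g_{Y}(S)-m_{Y}>\zeta]=\sum_{y}\E_{S}\!\big[\mu(S,y)\,\1[g_{y}(S)-m_{y}>\zeta]\big]=\sum_{y}\E_{S}\!\big[e^{g_{y}(S)}\,\1[g_{y}(S)-m_{y}>\zeta]\big]$. For each $y$ the inner quantity is a truncated exponential moment of the sub-Gaussian variable $g_{y}(S)$; a standard truncated-MGF estimate bounds it by $e^{m_{y}}$ times a factor depending only on $\zeta$ and $k\eps^{2}$ (of the form $\exp(c_{1}k\eps^{2}-c_{2}\zeta^{2}/(k\eps^{2}))$, or $\exp(\zeta-\zeta^{2}/(2\sigma^{2}))$ with $\sigma^{2}\asymp k\eps^{2}$ after a H\"older step). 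Summing over $y$ and invoking $\sum_{y}e^{m_{y}}\le1$ leaves a bound in $\zeta$ and $k\eps^{2}$ alone; setting it to $\gamma$ and solving the resulting quadratic in $\zeta$ (via $\sqrt{a+b}\le\sqrt a+\sqrt b$), then substituting $k=n/2$ and carefully tracking the factor of two in the without-replacement tail bound, gives precisely $\zeta=\eps^{2}n/4+\eps\sqrt{n\ln(2/\gamma)/4}$.

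I expect this last paragraph's argument to be the main obstacle: the naive instinct is to apply McDiarmid to $g_{Y}(S)$ directly and union bound over $y$, which fails because $\cY$ is not assumed finite and because $Y$ depends on $S$; the resolution is the identity $\sum_{y}e^{m_{y}}\le1$ paired with a one-sided truncated-MGF bound that attaches the weight $e^{m_{y}}$ to the $y$-th term. Secondary technical points are checking the hypotheses of the without-replacement McDiarmid inequality (this is exactly where order-invariance of $\alg$ enters), handling the measure-theoretic case of general $\cY$ by replacing sums with integrals against a dominating measure, and discarding degenerate outputs with $\mu(\cdot,y)\equiv0$, which contribute nothing to $\iota$ or to any of the sums.
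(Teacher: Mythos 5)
Your proposal is correct and follows essentially the same route as the paper, which simply invokes the argument of Dwork et al.\ \cite{DworkFHPRR15} with McDiarmid's inequality replaced by its sampling-without-replacement analogue (\Cref{thm:mcdiarmids_without-our-version}); your decomposition via $\iota(s,y)\le g_y(s)-m_y$, the identity $\sum_y e^{m_y}\le 1$, and the truncated-MGF bound is exactly the mechanism of that proof (up to the cosmetic choice of centering at $m_y$ rather than normalizing so that $\E[e^{f_y(S)}]=1$). You have in effect supplied the details the paper leaves to the citation, and the constants work out as you describe since the $m=n/2$ case of \Cref{thm:mcdiarmids_without-our-version} gives the tail $\exp(-4t^2/(n\eps^2))$ needed for the stated bound.
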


The following theorem is a generalized version of that in \cite{RogersRST16}. The proof follows theirs, with the following key distinctions: (1) it applies to sampling without replacement (2) It carefully tracks constants and 
(3) It maintains flexibility in setting parameters. We anticipate that this version of the result might be independently useful.

\begin{restatable}{theorem}{approxmaxinfogeneral}\label{thm:approx-dp-implies-max-info}
    \emph{(Approx-DP $\implies$ Bounded Max-Information, Generalised)} Let $\mathcal{A}: \cX^n \to \cY$ be an (order-invariant) $(\eps,\delta)$-differentially
    private algorithm for $\eps \in (0,1/2]$, $\delta \in \left (0,\eps \right )$.
    For $\hat{\delta} \in (0,\eps/15]$, $t>0$, and $\beta(t,\hat\delta) = e^{-t^2/2} + n\left(\frac{2\delta}{\hat{\delta}} + \frac{2\hat{\delta} + 2\delta}{1-e^{-3\eps}}\right)$ we have
    \begin{align*}
        I^\beta_{\infty}(\cA,n)
        &\leq n\left( 347\hat{\delta} + 75 \left(\frac{\hat{\delta}}{\eps} \right)^2 + 24\frac{\hat{\delta}^2}{\eps}+ 240\eps^2\right)
        + 6t\eps\sqrt{n}.
    \end{align*}
\end{restatable}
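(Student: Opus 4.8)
The plan is to follow the argument of Rogers, Roth, Smith and Thakkar that converts approximate differential privacy into a bound on approximate max-information, re-deriving it for a sample drawn \emph{without} replacement and keeping every constant explicit. By the standard reduction for max-information, to prove $I^\beta_\infty(\cA,n) \le k$ for the claimed value $k$ it suffices to show that the information density $\iota(s,y) = \ln\frac{\Pr[S=s,\,\cA(S)=y]}{\Pr[S=s]\,\Pr[\cA(S)=y]}$ exceeds $k$ with probability at most $\beta(t,\hat\delta)$ over $(S,\cA(S))$. Writing $S=(Z_1,\dots,Z_n)$ for the sample, one unfolds $\iota$ as a Doob martingale $\sum_{i=1}^n \Delta_i$, where $\Delta_i$ is the log-likelihood update obtained upon revealing $Z_i$ given $Z_1,\dots,Z_{i-1}$. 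The theorem then reduces to (a) bounding the mean $\sum_i \E[\Delta_i]$ and (b) controlling the upward deviation of $\sum_i \Delta_i$ from its mean.

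The approximate-DP slack enters through a deletion / dense-model step. For an $(\eps,\delta)$-DP mechanism, after conditioning on a ``good'' event of probability at least $1-O(\delta/\hat\delta)$ attached to each revealed coordinate, the per-coordinate privacy loss may be taken bounded by roughly $3\eps$ at the price of a slightly worse effective privacy parameter governed by the slack $\hat\delta\in(0,\eps/15]$ --- the familiar ``$(\eps,\delta)$-DP to pure DP with a larger parameter'' trade-off, here adapted to the without-replacement structure rather than a product distribution. Conditioned on all $n$ good events, each $\Delta_i$ is bounded by $O(\eps)$ and has conditional expectation bounded by a KL-type quantity of order $\eps^2 + (\hat\delta/\eps)^2 + \hat\delta^2/\eps + \hat\delta$; summing over $i\in[n]$ yields the $n\big(347\hat\delta + 75(\hat\delta/\eps)^2 + 24\hat\delta^2/\eps + 240\eps^2\big)$ term, and essentially all of the constant-chasing lives in this step.

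For the deviation, with the increments bounded by $O(\eps)$ on the intersection of the good events, I would apply the version of McDiarmid's inequality for sampling without replacement (\Cref{thm:mcdiarmids_without-our-version}) --- which is exactly why $\cA$ is assumed order-invariant --- to conclude that $\sum_i \Delta_i$ exceeds its mean by more than $6t\eps\sqrt n$ with probability at most $e^{-t^2/2}$. A union bound over the concentration failure ($e^{-t^2/2}$) and the $n$ per-coordinate good-event failures --- contributing $n\big(\frac{2\delta}{\hat\delta} + \frac{2\hat\delta+2\delta}{1-e^{-3\eps}}\big)$, with the $1-e^{-3\eps}$ denominator coming from the $3\eps$ privacy-loss bound of the deletion step --- gives the failure probability $\beta(t,\hat\delta)$, and adding the mean bound to the $6t\eps\sqrt n$ deviation term produces the stated inequality.

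The hard part will be the bookkeeping in the deletion step: porting the dense-model decomposition from the product-distribution setting of the original proof to sampling without replacement, and carefully tracking how the slack $\hat\delta$, the per-coordinate failure probabilities, and the bias introduced by truncating the privacy-loss increments interact --- this is what forces the unwieldy constants $347,75,24,240$. A secondary subtlety is justifying McDiarmid-without-replacement when the increments are bounded only on a high-probability event: one conditions on the intersection of the good events, bounds the $\Delta_i$ there, and argues that this conditioning perturbs the relevant probabilities by at most the failure mass already absorbed into $\beta(t,\hat\delta)$.
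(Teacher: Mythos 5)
Your overall architecture matches the paper's: decompose the information density as a sum of per-coordinate log-likelihood increments $Z_i$, use the conditioning lemma of Kasiviswanathan--Smith to attach to each coordinate a ``good'' event on which $X_i|_{\mathbf{x}_{[i-1]}}$ and $X_i|_{a,\mathbf{x}_{[i-1]}}$ are $(3\eps,\hat\delta)$-indistinguishable (hence $|Z_i|\le 6\eps$ pointwise with the $\tfrac{2\hat\delta}{1-e^{-3\eps}}$ slack), bound the conditional mean of each increment by $\nu(\hat\delta)=347\hat\delta+75(\hat\delta/\eps)^2+24\hat\delta^2/\eps+240\eps^2$, and union-bound the good-event failures into $\beta(t,\hat\delta)$. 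All of that is the paper's proof.

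The gap is in your concentration step. You propose to apply McDiarmid's inequality for sampling without replacement (\Cref{thm:mcdiarmids_without-our-version}) to control the deviation of $\sum_i \Delta_i$, handling the fact that the increments are only bounded on a high-probability event by ``conditioning on the intersection of the good events.'' This does not go through. First, that McDiarmid variant applies to a fixed order-invariant function of the sample with a deterministic sensitivity bound; here the quantity depends jointly on $(S,\cA(S))$, its increments are only bounded with high probability, and the available control on the means is a per-increment conditional bound adapted to the prefix filtration --- a martingale structure, not a bounded-differences structure. Second, conditioning on the good events destroys the uniform-without-replacement law of $S$, so the hypothesis of the McDiarmid theorem no longer holds for the conditioned sample; the perturbation is not merely an additive failure mass you can absorb into $\beta$. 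The paper's resolution (following Rogers et al.) is to \emph{truncate} rather than condition: define $T_i=Z_i$ on the event $\cG_{\le i}(\hat\delta)$ and $T_i=0$ otherwise, so that $|T_i|\le 6\eps$ with probability one and $\E[T_i\mid a,\mathbf{x}_{[i-1]}]\le\nu(\hat\delta)$ for every prefix, and then apply Azuma's inequality (\Cref{thm:azuma}) to get the $6t\eps\sqrt{n}$ deviation with probability $e^{-t^2/2}$, paying $n(\delta'+\delta'')$ once for the event that some $T_i\neq Z_i$. A smaller point: order-invariance is not needed for a McDiarmid step here; in the approximate-DP argument it is used when applying the privacy guarantee inside the conditioning-lemma claim (swapping $x_i$ for an independent $t_i$ in position $i$), where a non-order-invariant algorithm would cost a factor of two in the parameters.
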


\begin{restatable}{corollary}{approxmaxinfo}\label{cor:approx-dp-implies-max-info}
   \emph{(Approx-DP $\implies$ Bounded Max-Information, Specific)} Fix $n \in \mathbb{N}$, for $\eps \in (0,1/2]$, $\gamma \in (0,1]$, $\delta \in (0,\frac{\eps^2 \gamma^2}{(120n)^2}]$ and let $\univ$ be a data universe of size at least $n$.
   % \mb{Confused about the role of $m$ here. Are we just using the fact that the data universe is at least the size of the dataset? But there's no actual  dependence on the data universe size in the parameters of the statement? Why do we need this assumption?} 
   Let $\mathcal{A}: \cX^{n} \to \cY$ be an (order-invariant) $(\eps,\delta)$-differentially
    private algorithm. We have that
    \begin{align*}
        I^\gamma_{\infty}(\alg,n)
        &\leq 265\eps^2n + 12\eps\sqrt{n\ln(2/\gamma)}.
    \end{align*}
\end{restatable}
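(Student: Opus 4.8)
The plan is to derive this corollary as a clean specialization of \Cref{thm:approx-dp-implies-max-info}: instantiate the free parameters $\hat\delta$ and $t$ of that theorem by a single judicious choice, verify the side conditions, then collapse the four $n$-weighted error terms into $265\eps^2 n$ using the hypotheses $\eps \le 1/2$, $\gamma \le 1$, $n \ge 1$, and the assumed bound on $\delta$. Concretely, I would take $\hat\delta = \eps\sqrt\delta$ and $t = \sqrt{2\ln(2/\gamma)}$.

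First I would check admissibility for \Cref{thm:approx-dp-implies-max-info}. Since $\delta \le \eps^2\gamma^2/(120n)^2$ we have $\sqrt\delta \le \eps\gamma/(120n) \le 1/240$, so $\hat\delta = \eps\sqrt\delta \le \eps/240 \le \eps/15$, which is the required range for $\hat\delta$; and $\delta \le \eps^2/120^2 \le \eps$ because $\eps \le 1/2$, so $\delta \in (0,\eps)$. Next I would verify $\beta(t,\hat\delta) \le \gamma$. By the choice of $t$, $e^{-t^2/2} = \gamma/2$. For the $n$-weighted bracket, $n\cdot\tfrac{2\delta}{\hat\delta} = \tfrac{2n\sqrt\delta}{\eps} \le \tfrac{2n}{\eps}\cdot\tfrac{\eps\gamma}{120n} = \tfrac{\gamma}{60}$; and using a crude linear lower bound $1-e^{-3\eps} \ge \tfrac34\eps$ valid on $(0,1/2]$, together with $\delta \le \hat\delta$ (true since $\delta \le \eps^2$) and $\hat\delta \le \eps^2\gamma/(120n)$, one gets $n\cdot\tfrac{2\hat\delta+2\delta}{1-e^{-3\eps}} \le \tfrac{16 n\hat\delta}{3\eps} \le \tfrac{2\eps\gamma}{45} \le \tfrac{\gamma}{45}$. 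Hence $\beta(t,\hat\delta) \le \gamma/2 + \gamma/60 + \gamma/45 < \gamma$.

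Writing $\beta$ for $\beta(t,\hat\delta) \le \gamma$, I would then invoke monotonicity of approximate max-information in its slack parameter --- immediate from the definition, since raising the superscript only shrinks the set of admissible witnesses $T$ and decreases each ratio $\big(\Pr[(X,Y)\in T]-\beta\big)/\Pr[X\otimes Y\in T]$ --- to conclude $I^\gamma_\infty(\alg,n) \le I^\beta_\infty(\alg,n)$, and apply \Cref{thm:approx-dp-implies-max-info} to the right side. It remains to simplify. With $t = \sqrt{2\ln(2/\gamma)}$ the additive term is $6t\eps\sqrt n = 6\sqrt2\,\eps\sqrt{n\ln(2/\gamma)} \le 12\,\eps\sqrt{n\ln(2/\gamma)}$. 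Substituting $\hat\delta = \eps\sqrt\delta$ into the four $n$-weighted terms: $75(\hat\delta/\eps)^2 n = 75\delta n$ and $24\hat\delta^2 n/\eps = 24\eps\delta n$ are each at most $\eps^2 n$ by a wide margin since $\delta n \le \eps^2/120^2$; and $347\hat\delta n = 347\eps\sqrt\delta\,n \le 347\eps\cdot\tfrac{\eps\gamma}{120n}\cdot n = \tfrac{347}{120}\eps^2\gamma \le 3\eps^2 n$. Adding these to $240\eps^2 n$ gives at most $(240+3+1+1)\eps^2 n \le 265\eps^2 n$, completing the proof.

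The only genuinely delicate point is the calibration of $\hat\delta$: it must be small enough that $347\hat\delta n$ does not overwhelm the target $O(\eps^2 n)$ --- which forces $\hat\delta = O(\eps^2\gamma/n)$ --- yet large enough relative to $\delta$ that $2n\delta/\hat\delta$ stays well below $\gamma$. The hypothesis $\delta \le \eps^2\gamma^2/(120n)^2$ is exactly what lets $\hat\delta = \eps\sqrt\delta$ thread both constraints at once; once this is seen, everything else is routine constant-chasing, and the constant $120$ carries enough slack that no careful optimization is needed.
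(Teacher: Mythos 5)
Your proposal is correct and follows essentially the same route as the paper's proof: both fix $t=\sqrt{2\ln(2/\gamma)}$, choose $\hat\delta$ on the order of $\eps\sqrt{\delta}$ (the paper uses $\sqrt{\eps\delta}/15$, you use $\eps\sqrt{\delta}$; either calibration threads the two constraints), verify $\beta(t,\hat\delta)\le\gamma$ via $1-e^{-3\eps}\ge c\eps$, and absorb the remaining terms into $265\eps^2 n$. Your explicit appeal to monotonicity of $I^\beta_\infty$ in $\beta$ is a point the paper leaves implicit, but it is standard and your justification of it is correct.
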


\begin{definition}[Hypergeometric distribution]\label{def:hyper}
Fix $0 < a, s \leq b$. Consider a population of $b$ items of which $a$ items have a special property $P$. Consider $s$ items sampled without replacement from $b$. The distribution of the number of items in $s$ with property $P$ is called the hypergeometric distribution parameterized by $b,a,s$ (denoted by $H(b,a,s)$). 
\end{definition}

\begin{theorem} [\cite{HushS05}]\label{thm:hypgeom}
    Let $K$ have a hypergeometric distribution $H(b, a, s)$. Then for every $\gamma \ge 2$,
\ifnum\usenix=1
    \begin{multline*}
        \Pr\left[K > s\frac{a}{b} + \gamma \right] < e^{-2 c(\gamma^2 - 1)}  \text{ and }\\
        \Pr \left[K < s\frac{a}{b} - \gamma \right]  < e^{-2 c(\gamma^2 - 1)}
    \end{multline*}
\else
    \begin{align*}
        \Pr[K > s\frac{a}{b} + \gamma] &< e^{-2 c(\gamma^2 - 1)} \\
        \Pr[K < s\frac{a}{b} - \gamma] &< e^{-2 c(\gamma^2 - 1)}
    \end{align*}
\fi
    where
    \[c= \max\left\{\frac{1}{s+1} + \frac{1}{b-s+1}, \frac{1}{a+1} + \frac{1}{b-a+1}\right\}.\]
\end{theorem}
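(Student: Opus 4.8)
\medskip\noindent\textbf{Proof approach.} The first move is to reduce all four inequalities to a single upper-tail estimate by exploiting the symmetries of the hypergeometric law. If $K \sim H(b,a,s)$ then, by relabelling ``in the sample'' against ``has property $P$'', $K$ is also distributed as $H(b,s,a)$; and $s-K$, the count of sampled items \emph{without} the property, is distributed as $H(b,b-a,s)$ (equivalently $H(b,s,b-a)$). Hence it suffices to prove the one-sided bound
\[\Pr\!\left[K > s\tfrac{a}{b} + \gamma\right] < \exp\!\left(-2\left(\tfrac{1}{s+1}+\tfrac{1}{b-s+1}\right)(\gamma^2-1)\right), \qquad \gamma \ge 2,\]
for $K \sim H(b,a,s)$. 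Applying it to the alternative representation $K \sim H(b,s,a)$ gives the same bound with $\tfrac{1}{a+1}+\tfrac{1}{b-a+1}$ in the exponent, and keeping the smaller of the two right-hand sides yields exactly the constant $c$ of the statement; applying it to $s-K$ (whose mean is $s - s\tfrac{a}{b}$, and whose two representations are those of $K$ under $a \mapsto b-a$) turns the upper tail into the claimed lower tail with the very same $c$. So everything comes down to the displayed one-sided inequality.

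\medskip\noindent For that, I would run the Cram\'er--Chernoff method with a moment generating function bound that is sensitive to the population size. Markov's inequality applied to $e^{\lambda K}$ gives, for $\lambda>0$ and $\mu = s\tfrac{a}{b}$, $\Pr[K>\mu+\gamma] \le e^{-\lambda(\mu+\gamma)}\,\E[e^{\lambda K}]$. The classical convexity comparison (Hoeffding) bounds $\E[e^{\lambda K}]$ by the MGF of $\mathrm{Bin}(s,a/b)$, which after optimizing $\lambda$ only delivers the sub-Gaussian rate $e^{-2\gamma^2/s}$ -- too weak, since it misses the finite-population savings. To recover the sharper exponent one uses that sampling without replacement is strictly more concentrated: bound $\ln\E[e^{\lambda K}]$ by a quadratic in $\lambda$ whose variance proxy is $\tfrac{(s+1)(b-s+1)}{b+2}$ rather than $s$ (a Serfling-type refinement of Hoeffding's lemma for the exchangeable indicator sequence defining $K$), then optimize. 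An alternative route, which I expect makes the exact constants most transparent, is a direct pmf argument: $p(k)=\binom{a}{k}\binom{b-a}{s-k}/\binom{b}{s}$ is log-concave with mode near $\mu$, and the likelihood ratio $p(k+1)/p(k) = \frac{(a-k)(s-k)}{(k+1)(b-a-s+k+1)}$ decreases in $k$, so telescoping this ratio outward from the mode bounds $p(\mu+j)$ by $p(\mu)$ times a product that decays like $\exp(-\Theta(j^2 c))$; summing the resulting tail series over $j \ge \gamma$ gives the bound.

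\medskip\noindent The main obstacle is not the qualitative picture -- sub-Gaussian concentration of $K$ around $sa/b$, with a variance proxy that shrinks as the sample approaches the whole population -- but matching the precise constants: the factor $2$, the exact denominators $s+1$ and $b-s+1$, and above all the replacement of $\gamma^2$ by $\gamma^2-1$ together with the hypothesis $\gamma \ge 2$. The $-1$ and the threshold $\gamma\ge 2$ are discretization corrections: they absorb the lower-order terms incurred when a sum over integer $k$ (or a telescoped product of pmf ratios) is collapsed into a closed-form exponential, and such terms are only dominated once $\gamma$ is bounded away from small values. Pushing these through either the MGF optimization or the telescoped product is the delicate, bookkeeping-heavy part; the conceptual content is standard. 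Since the statement is quoted verbatim from \cite{HushS05}, in the paper it is simply invoked as a black box, and the above is how one would reconstruct it.
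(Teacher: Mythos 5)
You have correctly identified the situation: the paper does not prove \cref{thm:hypgeom} at all --- it is imported verbatim from Hush and Scovel with a citation and used as a black box in the proof of \cref{lem:unrelated-predictor}. So there is no in-paper argument to compare against, and the only question is whether your reconstruction sketch is sound.

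The parts of your proposal that can be checked are correct. The symmetry $H(b,a,s)=H(b,s,a)$ (exchanging the roles of ``sampled'' and ``has property $P$'') is a genuine identity of the hypergeometric pmf, and it legitimately upgrades a bound with constant $\tfrac{1}{s+1}+\tfrac{1}{b-s+1}$ to one with the maximum of the two candidate constants, since a larger $c$ only shrinks the right-hand side. The complementation $s-K\sim H(b,b-a,s)$ correctly converts the lower tail into an upper tail, and because the second candidate constant is symmetric under $a\mapsto b-a$, the resulting constant is again exactly $c$. Your formula for the pmf ratio $p(k+1)/p(k)=\tfrac{(a-k)(s-k)}{(k+1)(b-a-s+k+1)}$ is right, and its monotonicity in $k$ does give log-concavity, which is the correct starting point for the telescoping route (and is, in fact, essentially the route Hush and Scovel take).

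The gap is that the core one-sided estimate is only gestured at, and for this particular theorem that core \emph{is} the theorem. The qualitative statement --- sub-Gaussian concentration of $K$ about $sa/b$ with variance proxy improving on $s$ --- is classical (Hoeffding, Serfling); what Hush--Scovel contribute is precisely the exact exponent $2\bigl(\tfrac{1}{s+1}+\tfrac{1}{b-s+1}\bigr)(\gamma^2-1)$ valid for $\gamma\ge 2$, and those constants are load-bearing downstream (the proof of \cref{lem:unrelated-predictor} plugs specific values of $c$ and $\gamma$ into this exponent to extract its numerical thresholds like $m>40$ and $m>8.3\ln(4/\mu)$). Your proposal explicitly defers the derivation of the factor $2$, the denominators $s+1$ and $b-s+1$, and the $\gamma^2-1$ correction to ``delicate bookkeeping,'' so as written it establishes a weaker, unquantified cousin of the statement rather than the statement itself. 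Since the paper also supplies no proof and simply cites the source, this is an acceptable place to stop for the purposes of reading the paper --- but be aware that your sketch would not substitute for the citation, and the Serfling-type MGF variance proxy $\tfrac{(s+1)(b-s+1)}{b+2}$ you posit is itself a nontrivial claim that would need its own proof.
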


\subsection{Bounded Max-Information Algorithms are Coherence Enforcing}\label{sec:max-info-implies-demographic-coherence}

\ssnote{Maybe point out one thing that makes this theorem and the ones about DP powerful- apply to all lenses.}
 
\begin{theorem}\label{thm:max-info-implies-demographic-coherence}
    Let $n\in\N$,$\zeta>0$, $\beta \in (0,1)$, $\alpha \in (0,1]$. Let $\wdist(\cdot,\cdot)$ represent the Wasserstein-1 distance metric. Consider a collection $\cC$ of sub-populations $C \subseteq \univ$, a lens $\selection$, and an algorithm ${\alg:\univ^{n/2}\to \cY}$ with bounded max-information 
    $$I^{\beta/2|\cC|}_{\infty}(\alg,n/2) < \zeta.$$ 
    
    Then $\alg$ enforces $(\alpha,\beta)$-Wasserstein-coherence with respect to collection $\cC$, lens $\rho$, and size constraint $\gamma$, where
     \begin{align}
        \gamma = \max\Big\{ & \frac{8.3 \cdot (\zeta + \ln(16|\cC|/\beta))}{\alpha^2}, \frac{36\ln(3/\alpha)}{\alpha^2} \nonumber \\
        &, 16.6 \cdot (\zeta + \ln(16|\cC|/\beta)), \frac{5.3}{\alpha}, 80\Big\}.
    \end{align}
    
\end{theorem}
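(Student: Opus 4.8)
The plan is to unpack the definition of coherence enforcement, reduce to a single subpopulation by a union bound, use the max-information hypothesis to pass from the ``real'' experiment (in which the predictor depends on $X_a$ through $\alg$) to an ``independent'' world (in which the predictor is generated from a fresh, independent half), and finally prove a sampling-without-replacement concentration statement in that independent world.

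\medskip\noindent\textbf{Step 1 (reduction to one subpopulation).} By \Cref{def:coherenceEnforcing} it suffices to show that for every $\cL:\cY\to(\univ\to[-1,1])$ the composition $\cL\circ\alg$ is $(\alpha,\beta)$-Wasserstein-coherent with size constraint $\gamma$. Fix such an $\cL$ and a dataset $X\in\univ^n$, and let $\cC^*=\{C\in\cC:\ |C\cap X|\ge\gamma\}$. Inspecting $\coherenceExp$ (\Cref{fig:coherenceExp}), the experiment outputs $0$ exactly when some $C\in\cC^*$ witnesses $\wdist\big(h(\restrict{X_a|_C}{\rho}),\,h(\restrict{X_b|_C}{\rho})\big)>\alpha$, where the randomness is over the split $I$, over $R_a=\alg(X_a)$, and over $h=\cL(R_a)$, and $X_b=X\setminus X_a$. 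By a union bound it is enough to show that for each fixed $C\in\cC^*$ this bad event has probability at most $\beta/|\cC|$ (using $|\cC^*|\le|\cC|$), so that summing over the at most $|\cC|$ relevant subpopulations yields the desired $\beta$.

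\medskip\noindent\textbf{Step 2 (real world to independent world).} Fix $C\in\cC^*$ and absorb the lens into the predictor by setting $g(x)=h(\restrict{x}{\rho})$; then the bad event is a measurable set $E_C$ of pairs $(X_a,h)$, since $X_b=X\setminus X_a$ is determined by $X_a$ once $X$ is fixed. Because $X_a$ is a uniform size-$(n/2)$ subsample of $X$ without replacement and $h=\cL(\alg(X_a))$ is a randomized post-processing of $\alg(X_a)$, the data-processing property of approximate max-information gives $I^{\beta/2|\cC|}_{\infty}(X_a;h)\le I^{\beta/2|\cC|}_{\infty}(X_a;\alg(X_a))<\zeta$ (here we apply the max-information hypothesis with the finite data universe taken to be $X$ itself, so that $X_a$ is precisely a size-$(n/2)$ subsample of it). Hence, writing $X_a\otimes h$ for the product of the two marginals (equivalently $h=\cL(\alg(X_a'))$ for an independent copy $X_a'$ of $X_a$),
\[
\Pr[(X_a,h)\in E_C]\ \le\ e^{\zeta}\cdot\Pr[(X_a\otimes h)\in E_C]\ +\ \frac{\beta}{2|\cC|}.
\]
So it suffices to bound $\Pr[(X_a\otimes h)\in E_C]\le \tfrac{\beta}{2|\cC|}e^{-\zeta}$, and since $h$ is now independent of the split we may condition on $h$ and bound, for an arbitrary fixed $g:\univ\to[-1,1]$, the probability over the split alone that $\wdist\big(g(X_a|_C),g(X_b|_C)\big)>\alpha$.

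\medskip\noindent\textbf{Step 3 (concentration for the split and choice of $\gamma$).} Let $a=|C\cap X|\ge\gamma$ and let $\mu_C$ be the uniform distribution over the multiset $\{g(x):x\in C\cap X\}$, a distribution on $[-1,1]$. Conditioned on $C$, the restriction $X_a|_C$ is a uniform subset of $C\cap X$ of size $|X_a|_C|\sim H(n,a,n/2)$, and likewise for $X_b|_C$. By the triangle inequality for $\wdist$ it is enough to bound, for each of $X_a$ and $X_b$, the event $\wdist\big(g(X_a|_C),\mu_C\big)>\alpha/2$. We split this into (i) a ``balanced split'' event that $|X_a|_C|$ and $|X_b|_C|$ are each at least $a/4\ge\gamma/4$, whose failure probability is controlled by the hypergeometric tail of \Cref{thm:hypgeom} and becomes at most (a constant times) $e^{-\Omega(\gamma)}$ — this is the origin of the $16.6\cdot(\zeta+\ln(16|\cC|/\beta))$ term; and (ii) conditioned on a balanced split with $k:=|X_a|_C|\ge\gamma/4$, a Wasserstein-to-$\mu_C$ bound. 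For (ii), use that in one dimension $\wdist(P,Q)=\int_{-1}^{1}|F_P-F_Q|$ is controlled by the pointwise CDF discrepancy on an $O(1/\alpha)$-point grid of $[-1,1]$; each grid value of the empirical CDF of $g$ over $X_a|_C$ is of the form $H(a,\cdot,k)/k$ concentrated around its mean, so \Cref{thm:hypgeom} plus a union bound over the grid gives $\wdist(g(X_a|_C),\mu_C)\le\alpha/2$ except with probability at most $\mathrm{poly}(1/\alpha)\cdot e^{-\Omega(k\alpha^2)}$. Requiring this failure probability to be at most $\tfrac{\beta}{2|\cC|}e^{-\zeta}$ (after dividing the budget among the $O(1)$ sub-events, which is what turns $\ln(2|\cC|/\beta)$ into $\ln(16|\cC|/\beta)$) yields the conditions $\gamma\gtrsim \frac{\zeta+\ln(|\cC|/\beta)}{\alpha^2}$ and $\gamma\gtrsim\frac{\ln(1/\alpha)}{\alpha^2}$, i.e.\ the $\frac{8.3(\zeta+\ln(16|\cC|/\beta))}{\alpha^2}$ and $\frac{36\ln(3/\alpha)}{\alpha^2}$ terms, while the $\frac{5.3}{\alpha}$ and $80$ terms absorb low-order slack (the additive $-1$ in \Cref{thm:hypgeom}, ensuring the grid and the subsample sizes are at least $1$, etc.). Taking $\gamma$ to be the stated maximum makes the per-$C$ bad probability at most $e^{\zeta}\cdot\tfrac{\beta}{2|\cC|}e^{-\zeta}+\tfrac{\beta}{2|\cC|}=\tfrac{\beta}{|\cC|}$, and Step 1's union bound finishes the argument.

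\medskip\noindent\textbf{Main obstacle.} The hard part is Step 3(ii): converting a one-dimensional Wasserstein guarantee into a tail bound with the quoted explicit constants while sampling \emph{without} replacement. The constant $c$ in \Cref{thm:hypgeom} depends on both the subsample size $k$ and the population size $a$, so one must carefully lower-bound it by $\Omega(1/k)$ in the balanced-split regime and track the additive $-1$ in its exponent; the $\mathrm{poly}(1/\alpha)$ prefactor coming from discretizing $[-1,1]$ is precisely what forces the $\ln(3/\alpha)$ term into $\gamma$. A secondary subtlety is the max-information transfer of Step 2: it requires the data-processing property of approximate max-information to fold in $\cL$, and it requires interpreting the hypothesis on $\alg$ (stated for a sample drawn from the data universe) as applying to $X_a$ drawn from the fixed finite dataset $X$.
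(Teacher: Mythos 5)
Your proposal is correct in structure and, for the first two steps, coincides with the paper's argument: the union bound over $C\in\cC^*$ with per-subpopulation budget $\beta/|\cC|$, and the max-information/post-processing step that replaces the data-dependent split $(X_a,X_b)$ by an independent split at a multiplicative cost of $e^{\zeta}$ and an additive cost of $\beta/2|\cC|$, are exactly Claims~\ref{lem:max-info-decouples-Xa-from-ha} and the surrounding bookkeeping in Lemma~\ref{lemma:max-info-implies-demographic-coherence-int}. Where you genuinely diverge is the concentration step (your Step~3, the paper's Claim~\ref{lem:unrelated-predictor}). The paper does \emph{not} pass through the full-population reference distribution $\mu_C$ or a grid: it bounds $\sup_y\bigl|\cdf_{g(S|_C)}(y)-\cdf_{g(\overline{S}|_C)}(y)\bigr|$ directly, observing that there are only $|X\cap C|+1$ effectively distinct values of $y$ (the breakpoints induced by applying $g$ to $C\cap X$) and union bounding over those; at a single $y$ it runs a two-stage urn argument (red stripes for membership in $C\cap X$, blue stripes for $g(x)\le y$) and bounds $\bigl|\tfrac{W}{V}-\tfrac{t-W}{m-V}\bigr|$ via Theorem~\ref{thm:hypgeom}. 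The price of the paper's route is a union bound whose cardinality depends on $|X\cap C|$ itself, which makes the size condition self-referential; the $\frac{36\ln(3/\alpha)}{\alpha^2}$ term in $\gamma$ is produced by resolving that self-reference (a ``$c\ln c$'' argument), not by a $\mathrm{poly}(1/\alpha)$ grid prefactor as in your sketch. Your route through the triangle inequality and an $O(1/\alpha)$-point grid is equally valid and arguably cleaner — it avoids the self-referential condition, and budgeting $\alpha/2$ per side sidesteps the factor-of-two issue the authors themselves flag when substituting the sup bound into the Wasserstein integral — but it yields the $\ln(1/\alpha)$ dependence for a different reason, so the specific numerical constants $(8.3,\,36,\,16.6,\,5.3,\,80)$ in the stated $\gamma$ would not come out of your argument as written; you would need to re-track constants to verify that your version of $\gamma$ is dominated by the stated one. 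That is a matter of bookkeeping rather than a gap in the idea.
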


\noindent
The intuition behind the result in \Cref{thm:max-info-implies-demographic-coherence} is that the output of an algorithm with bounded max-information does not contain too much specific information about the input dataset. This intuition is leveraged to prove \Cref{lemma:max-info-implies-demographic-coherence-int}, which is the main lemma underlying this result. \Cref{thm:max-info-implies-demographic-coherence} follows from this lemma by an appropriate setting of parameters.

\begin{lemma}\label{lemma:max-info-implies-demographic-coherence-int}
    Let $\eta,\zeta>0$, $\alpha \in (0,1)$. Let $\wdist(\cdot,\cdot)$ represent the Wasserstein-1 distance metric. Consider a sub-population $C \subseteq \univ$, a lens $\selection$, and an algorithm ${\alg:\univ^{n/2}\to \cY}$ with bounded max-information 
    $$I^\eta_{\infty}(\alg,n/2) < \zeta.$$ 

    \noindent
    For all algorithms $\cL: \cY \to \{ \univ \to [-1,1] \}$, datasets $X\in\cX^n$, $\mu > 0$, as long as 
    \[|X\cap C|\ge \max\left\{\frac{4.15\cdot\ln(4/\mu)}{\alpha^2}, \frac{16/3}{\alpha}, 8.3\cdot\ln(4/\mu), 40\right\},\]   
    we have
\ifnum\usenix=1
   \begin{align*}
    \underset{X_a \leftarrow X, h \leftarrow \cL \circ \alg(X_a)}{\Pr}&\left[\wdist(h(\restrict{X_a\lvert_C}{\rho}), h(\restrict{X_b\lvert_C}{\rho})) > \alpha\right] \\
    & \,\leq\,\, 2\mu(|X\cap C|+1)\cdot e^\zeta + \eta.
    \end{align*} 
\else
 \begin{align*}
    \underset{X_a \leftarrow X, h \leftarrow \cL \circ \alg(X_a)}{\Pr}&\left[\wdist(h(\restrict{X_a\lvert_C}{\rho}), h(\restrict{X_b\lvert_C}{\rho})) > \alpha\right]
    & \,\leq\,\, 2\mu(|X\cap C|+1)\cdot e^\zeta + \eta.
    \end{align*} 
\fi
    Here $X_a$ and $X_b$ denote a random split of the dataset $X$ as in the $\coherenceExp_{\cL\circ\alg,X,\cC^*,\selection}(\alpha)$ experiment in \Cref{fig:coherenceExp}.
\end{lemma}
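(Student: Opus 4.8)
The plan is to prove the lemma in two stages. The first is a reduction, via the max-information hypothesis, from the randomized predictor $h = \cL(\alg(X_a))$ to an arbitrary but \emph{fixed} predictor that is independent of the data split. The second is a purely combinatorial two-sample concentration bound showing that for any fixed predictor the empirical prediction distributions on the two halves of $X\cap C$ are $\wdist$-close with high probability over the split alone.

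First I would set up the reduction. Fix $X$, $C$, $\selection$, $\cL$ and $\mu$, and write $k = |X\cap C|$. In the experiment of \Cref{fig:coherenceExp} the test half $X_b$ is the complement of $X_a$ in the fixed dataset $X$, so the bad event $\mathcal{E} = \bigl\{\wdist\bigl(h(\restrict{X_a|_C}{\selection}),\, h(\restrict{X_b|_C}{\selection})\bigr) > \alpha\bigr\}$ is a deterministic function of the pair $(X_a, h)$. Viewing the random split as drawing a sample of size $n/2$ without replacement from the records of $X$ (made formally distinct, if necessary, by appending indices that $\alg$ ignores), the hypothesis $I^\eta_\infty(\alg, n/2) < \zeta$ together with the fact that approximate max-information does not increase under post-processing (here $h$ is obtained from $\alg(X_a)$ by applying $\cL$ with fresh, independent randomness) gives $I^\eta_\infty(X_a; h) < \zeta$. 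Applying the definition of $\eta$-approximate max-information to $\mathcal{E}$ yields
\[\Pr[\mathcal{E}] \;\le\; e^{\zeta}\, p^\ast + \eta, \qquad p^\ast \;:=\; \sup_{h}\; \Pr_{X_a}\!\bigl[\wdist\bigl(h(\restrict{X_a|_C}{\selection}),\, h(\restrict{X_b|_C}{\selection})\bigr) > \alpha\bigr],\]
where the supremum runs over predictors $h$ in the support of $\cL\circ\alg$ on a uniformly random half of $X$ and, in $p^\ast$, the only randomness is the split. It then suffices to show $p^\ast \le 2\mu(k+1)$.

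Next I would prove the combinatorial bound for a fixed $h$. Put $v(x) := h(\restrict{x}{\selection}) \in [-1,1]$ for $x \in X\cap C$, let $\nu$ be the empirical distribution of $\{v(x): x\in X\cap C\}$, and let $F_\nu, F_{h(\restrict{X_a|_C}{\selection})}$ be the relevant CDFs. Using the triangle inequality for $\wdist$ and the identity $\wdist(P,\nu) = \int_{-1}^{1}|F_P(t) - F_\nu(t)|\,dt \le 2\sup_t|F_P(t) - F_\nu(t)|$ on $[-1,1]$, the event $\mathcal{E}$ (for this $h$) is contained in the union, over the two halves $X_a, X_b$ and over the $\le k+1$ thresholds $t$ at which $F_\nu$ and $F_{h(\restrict{X_a|_C}{\selection})}$ can jump, of the events $\{|F_{h(\restrict{X_a|_C}{\selection})}(t) - F_\nu(t)| > \alpha/4\}$. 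I would then condition on the split size $k_a := |X_a\cap C| \sim H(n,k,n/2)$: by \Cref{thm:hypgeom}, $k_a \in [k/4, 3k/4]$ except with probability exponentially small in $k$ (this accounts for the $8.3\ln(4/\mu)$ term; the absolute constant $40$ absorbs lower-order corrections, in particular guaranteeing that the deviations fed into \Cref{thm:hypgeom} exceed its required threshold $2$). Conditioned on $k_a = \ell$, the set $X_a\cap C$ is a uniformly random size-$\ell$ subset of $X\cap C$, so $N_a(t) := |\{x\in X_a\cap C : v(x)\le t\}|$ is hypergeometric $H(k, N(t), \ell)$ with mean $\ell N(t)/k$, where $N(t) := |\{x\in X\cap C : v(x)\le t\}|$; since $F_{h(\restrict{X_a|_C}{\selection})}(t) = N_a(t)/\ell$ and $F_\nu(t) = N(t)/k$, a second application of \Cref{thm:hypgeom} — whose concentration constant here is at least $\tfrac1{\ell+1}+\tfrac1{k-\ell+1}\ge\tfrac4{k+2}$ and whose deviation parameter is of order $\ell\alpha$ (exceeding $2$ once $|X\cap C| \ge \frac{16/3}{\alpha}$) — bounds each such conditional event by a quantity exponentially small in $k\alpha^2$, which is at most $\mu$ once $|X\cap C| \ge \frac{4.15\ln(4/\mu)}{\alpha^2}$. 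A union bound over the $\le k+1$ thresholds, the two halves, and the split-size event then gives $p^\ast \le 2\mu(k+1)$ whenever $|X\cap C|$ is at least the stated maximum; substituting into the display above yields $\Pr[\mathcal{E}] \le 2\mu(k+1)e^{\zeta} + \eta$, as claimed.

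The step I expect to be the main obstacle is the two-sample Wasserstein concentration for a fixed predictor, and within it the extraction of the precise constants. Because the sub-sample sizes $k_a$ and $k_b = k - k_a$ are themselves random, one must first condition on the split before invoking the hypergeometric tail bound, and then carefully juggle the several resulting failure probabilities against the side conditions of \Cref{thm:hypgeom} (its deviation argument must be at least $2$, and the $-1$ in its exponent must be absorbed). This bookkeeping is exactly what produces the four-way maximum in the hypothesis, and the $(|X\cap C|+1)$ factor in the conclusion is precisely the cost of the union bound over the jump points of the empirical CDFs. A smaller but genuine subtlety is justifying that the max-information hypothesis, stated for $\alg$ on size-$(n/2)$ inputs, applies to a uniformly random half of the fixed dataset $X$; this is handled by instantiating the sampling-without-replacement max-information definition with the records of $X$ themselves playing the role of the universe.
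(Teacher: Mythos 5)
Your proposal is correct and follows essentially the same route as the paper: a max-information/post-processing step that decouples the predictor from the split (the paper's Claim~\ref{lem:max-info-decouples-Xa-from-ha}), followed by a fixed-predictor two-sample bound obtained by reducing Wasserstein distance to a supremum of CDF differences, union-bounding over the at most $|X\cap C|+1$ jump points, and applying the Hush--Scovel hypergeometric tail twice --- once for the sub-sample size and once for the count below each threshold --- exactly as in the paper's Claim~\ref{lem:unrelated-predictor}. The one place you diverge is routing the two-sample comparison through the population empirical CDF via a triangle inequality (forcing an $\alpha/4$ per-threshold tolerance), where the paper compares the two halves' CDFs directly via $\bigl|\tfrac{W}{V}-\tfrac{t-W}{m-V}\bigr|$; this costs you a constant factor in the required lower bound on $|X\cap C|$, so your argument proves the lemma with somewhat larger numerical constants than stated --- a caveat worth noting, though the paper's own proof carries an acknowledged factor-of-two slack at the very same step (passing from the sup of CDF differences to the integral over $[-1,1]$).
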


\vspace{-0.5em}
\medskip\noindent\textbf{Proof sketch of \Cref{lemma:max-info-implies-demographic-coherence-int}:} 
Considering a particular subpopulation $C\subseteq \univ$, 
We need to show for all algorithms $\cL:\cY\to\{\univ \to [-1,1]\}$, all datasets $X\in\cX^n$, $\mu > 0$ that with high probability over a choice of split $X_a, X_b \setrandomly X$, and predictor $h\leftarrow \cL\circ\alg(X_a)$ as in the $\coherenceExp_{\cL\circ\alg,X,\cC^*,\selection}(\alpha)$ experiment in \Cref{fig:coherenceExp}, the following holds:
$$\wdist(h(\restrict{X_a\lvert_C}{\rho}), h(\restrict{X_b\lvert_C}{\rho})) < \alpha.$$
Note that instead of the split $X_a,X_b$ which predictor $h$ depends on, if we consider an independent split $S,\overline{S}\setrandomly X$, then we could hope to use a concentration inequality to get the bound we desire. (The proof of \Cref{lem:unrelated-predictor} below redefines the sampling process in a way that allows us to use a concentration bound of Hush and Scovel (\Cref{thm:hypgeom}) for the hypergeometric distribution to prove such a result.)

With the goal of analysing an independent split, we combine the fact that bounded max-information is preserved under post-processing with the intuition that the output of an algorithm with bounded max-information does not contain too much specific information about the input dataset to decouple the predictive hypothesis $h$ from the dataset $X_a$ in the following way (in \Cref{lem:max-info-decouples-Xa-from-ha}):
\ifnum\usenix=0
\begin{align*}
\wdist(h(\restrict{X_a\lvert_C}{\rho}), h(\restrict{X_b\lvert_C}{\rho})) \approx \wdist(h(\restrict{S\lvert_C}{\rho}), g(\restrict{\overline{S}\lvert_C}{\rho}))
\end{align*}
\else
\begin{multline*}
\wdist(h(\restrict{X_a\lvert_C}{\rho}), h(\restrict{X_b\lvert_C}{\rho})) \\
\approx \wdist(h(\restrict{S\lvert_C}{\rho}), h(\restrict{\overline{S}\lvert_C}{\rho})).
\end{multline*}

\fi

\begin{proof}[Proof of \Cref{lemma:max-info-implies-demographic-coherence-int}]
Fix any arbitrary lens $\rho$. The proof proceeds in two claims. First, in \Cref{lem:max-info-decouples-Xa-from-ha}, we use the definition of max-information to replace $X_a,X_b$ with an independently chosen half-sample $S$ and its complement $\overline{S}=X\setminus S$.

\begin{claim}\label{lem:max-info-decouples-Xa-from-ha}
    Consider $\eta,\alpha,\zeta>0$ and a fixed dataset $X \in \univ^n$. Consider a data-curation algorithm ${\alg:\univ^{n/2}\to\cY}$ with bounded max-information, $I^{\eta}_{\infty,P}(\alg,n/2) \leq \zeta$, and an algorithm $\cL: \cY \to \{ \univ \to [-1,1] \}$ that uses the data report to create a predictor. Independently choose two random half samples $X_a, S\leftarrow X$, and let sets $X_b = X\setminus X_a, \overline{S} = X\setminus S$. Finally let $h \leftarrow \cL(X_a)$. Then, we have that
    \begin{align*}
    &\underset{X_a, h}{\Pr}[\wdist(h(\restrict{X_a\lvert_C}{\rho}), h(\restrict{X_b\lvert_C}{\rho})) > \alpha]\\
    \leq 
    &\underset{S,X_a,h}{\Pr}[\wdist(h(\restrict{S\lvert_C}{\rho}), h(\restrict{\overline{S}\lvert_C}{\rho})) > \alpha]\cdot e^{\zeta} + \eta
    \end{align*}
\end{claim}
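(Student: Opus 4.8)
The plan is to reduce the statement to a direct application of the $\beta$-approximate max-information bound, using the fact that the quantity being bounded is the probability of a particular ``bad event'' that is measurable with respect to the pair (sample, algorithm output). First I would observe that in the left-hand side, the only randomness that matters is the half-sample $X_a$ (equivalently the index set $I$, which determines $X_b = X \setminus X_a$ as well) together with the internal coins of $\alg$ and $\cL$, i.e. the predictor $h = \cL(\alg(X_a))$. By post-processing of max-information — $\cL$ is applied to $\alg(X_a)$ and introduces only fresh randomness — we have $I^{\eta}_{\infty}(X_a, h) \le I^{\eta}_{\infty}(X_a, \alg(X_a)) \le \zeta$, so it suffices to work with the jointly distributed pair $(X_a, h)$.

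Next I would define the bad event as a subset $T$ of the product domain (half-samples) $\times$ (predictors): namely, $T = \{(S', h') : \wdist(h'(\restrict{S'|_C}{\rho}), h'(\restrict{(X\setminus S')|_C}{\rho})) > \alpha\}$. Note that because $X$ is fixed, the complement $X \setminus S'$ is a deterministic function of $S'$, so this is a well-defined event in the pair $(\text{sample}, \text{output})$. Then the left-hand probability is exactly $\Pr[(X_a, h) \in T]$, where $(X_a, h)$ is drawn from the true joint distribution. The right-hand side, by contrast, samples $S$ and $X_a$ independently and then forms $h = \cL(\alg(X_a))$; since $h$ depends only on $X_a$, the pair $(S, h)$ is distributed exactly as the product distribution $X_a \otimes h$ (the first coordinate is an independent fresh half-sample, the second is the marginal of the predictor). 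Hence $\Pr_{S, X_a, h}[(S, h) \in T] = \Pr[X_a \otimes h \in T]$. Now the definition of $\beta$-approximate max-information with $\beta = \eta$ gives, whenever $\Pr[(X_a,h)\in T] > \eta$,
\[
\frac{\Pr[(X_a, h) \in T] - \eta}{\Pr[X_a \otimes h \in T]} \le e^{I^{\eta}_{\infty}(X_a, h)} \le e^{\zeta},
\]
which rearranges to $\Pr[(X_a, h) \in T] \le e^{\zeta}\cdot \Pr[X_a \otimes h \in T] + \eta$; and if $\Pr[(X_a,h)\in T]\le\eta$ the inequality holds trivially. Substituting the two identifications above yields exactly the claimed bound.

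The one subtlety — and the step I would be most careful about — is making sure the ``independent split'' bookkeeping is airtight: the claim as stated reuses the same predictor $h \leftarrow \cL(X_a)$ on the right-hand side, so one must check that $(S, h)$ really does have the product law $X_a \otimes h$ and not some correlated law. This is fine precisely because $S$ is drawn independently of everything and $h$ is a (randomized) function of $X_a$ alone, so the joint law of $(S,h)$ factors as (law of a fresh half-sample) $\otimes$ (law of $h$), which is the definition of $X_a \otimes h$ where $X_a$ here denotes a generic half-sample with the sampling-without-replacement distribution. The only other thing to verify is measurability of $T$, which is immediate since $\univ$ is finite, all datasets have bounded size, and $\wdist$ is a continuous function of the (finitely supported) empirical prediction distributions. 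No concentration inequality is needed for this claim — that enters only in the companion lemma (\Cref{lem:unrelated-predictor}) that bounds the right-hand side.
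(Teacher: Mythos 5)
Your proposal is correct and follows essentially the same route as the paper's proof: both reduce the claim to the set-$T$ formulation of $\eta$-approximate max-information with $T$ the event that the Wasserstein distance exceeds $\alpha$, use post-processing to transfer the max-information bound from $\alg(X_a)$ to $h$ and to the derived Wasserstein statistics, and identify the right-hand probability with the product distribution via the observation that $(S,h)$ has law (fresh half-sample) $\otimes$ (marginal of $h$). Your explicit handling of the trivial case $\Pr[(X_a,h)\in T]\le\eta$ is a minor tidiness improvement over the paper's write-up, but the substance is identical.
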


\noindent Then, in \Cref{lem:unrelated-predictor}, we bound $\Pr[\wdist\left(g(\restrict{S\lvert_C}{\rho}),h\left(\restrict{\overline{S}\lvert_C}{\rho}\right)\right) > \alpha]$ for any confidence rated predictor $g:\univ \to [-1,1]$ that is produced independently of $S$.
\begin{claim}\label{lem:unrelated-predictor}
Let $\alpha \in (0, 1)$, let $S$ be a sample of size $n/2$ drawn uniformly without replacement from $X$, let $\overline{S} = X\setminus S$,
and let $g:\univ \to [-1,1]$ be any confidence rated predictor.
\medskip\noindent
For any $\mu > 0$, when $|X\cap C|\ge \max\{\frac{4.15}{\alpha^2}\ln(4/\mu), \frac{5.3}{\alpha}, 8.3\ln(4/\mu), 40\}$, we have that
$$\Pr\left[\wdist\left(g(\restrict{S\lvert_C}{\rho}),g(\restrict{\overline{S}\lvert_C}{\rho})\right) > \alpha\right] \leq 2(1 + |X \cap C|)\mu.$$ 
\end{claim}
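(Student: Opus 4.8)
Since $g$ and the lens $\rho$ are fixed and $g$ is produced independently of $S$, the map $x \mapsto g(\restrict{x}{\rho})$ is a deterministic $[-1,1]$-valued function and the only randomness is the split of $X$ into $S$ and $\overline{S}$. Write $m = |X\cap C|$ and let $F_S, F_{\overline{S}}$ denote the CDFs of $g(\restrict{S\lvert_C}{\rho})$ and $g(\restrict{\overline{S}\lvert_C}{\rho})$. Because both distributions are supported on $[-1,1]$, the Wasserstein-1 distance equals $\int_{-1}^{1}|F_S(t)-F_{\overline{S}}(t)|\,dt$, which is at most $2\|F_S - F_{\overline{S}}\|_\infty$; so it suffices to bound $\Pr[\|F_S - F_{\overline{S}}\|_\infty > \alpha/2]$ by $2(1+m)\mu$. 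Moreover $F_S - F_{\overline{S}}$ is a step function whose jumps lie among the at most $m$ values $\{g(\restrict{x}{\rho}) : x\in X\cap C\}$, so its supremum is attained at one of at most $m+1$ thresholds $t$ (and is $0$ below the smallest value), and it is enough to control each threshold and union bound.

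Fix a threshold $t$ and put $a_t = |\{x\in X\cap C : g(\restrict{x}{\rho})\le t\}|$, $k = |S\cap C|$, and $K_t = |\{x\in S\cap C : g(\restrict{x}{\rho})\le t\}|$. A direct computation gives the identity
\[
F_S(t) - F_{\overline{S}}(t) \;=\; \frac{K_t}{k} - \frac{a_t - K_t}{m-k} \;=\; \frac{m}{k(m-k)}\Bigl(K_t - \tfrac{k\,a_t}{m}\Bigr).
\]
Conditioned on $k$, the set $S\cap C$ is a uniformly random $k$-subset of $X\cap C$, so $K_t \sim H(m,a_t,k)$ with mean $k a_t/m$; the Hush--Scovel tail bound (\Cref{thm:hypgeom}) then controls $|K_t - ka_t/m|$, with constant $c \ge \tfrac{1}{k+1}+\tfrac{1}{m-k+1}$. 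The overall strategy is thus to invoke \Cref{thm:hypgeom} twice: once to pin down $k$, and once (given a good $k$) for each relevant $K_t$.

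It remains to execute this. First, $k=|S\cap C|$ is itself hypergeometric $H(n,m,n/2)$ with mean $m/2$, and since being in $C$ is the ``rare'' property its constant in \Cref{thm:hypgeom} is $c \ge \tfrac{1}{m+1}$ independently of $n$; hence $|k-m/2| = O\!\bigl(\sqrt{m\ln(1/\mu)}\bigr)$ outside an event of probability $O(\mu)$ (this is where $m \ge 8.3\ln(4/\mu)$ enters), and on that good event $k(m-k)/m = m/4 - O(\ln(1/\mu))$, which $m \ge 8.3\ln(4/\mu)$ and $m \ge 40$ force to be $\Omega(m)$. Now condition on such a $k$. For each relevant threshold $t$ (the ones with $0<a_t<m$) we need $|K_t - ka_t/m| \le \tfrac{\alpha\,k(m-k)}{2m}$; the right-hand side is $\Omega(\alpha m)$, and $m \ge \tfrac{16}{3\alpha}$ guarantees it exceeds the value $2$ required by \Cref{thm:hypgeom}. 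Applying \Cref{thm:hypgeom} with $c = \Omega(1/m)$ then gives a per-threshold (two-sided) failure probability $\le 2\,e^{-\Omega(\alpha^2 m)}$, which is $\le 2\mu$ once $m \ge \tfrac{4.15\ln(4/\mu)}{\alpha^2}$. Union bounding over the $\le m+1$ thresholds and the event that $k$ is atypical yields $\Pr[\|F_S - F_{\overline{S}}\|_\infty > \alpha/2] \le 2(1+m)\mu$, and the claim follows.

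\textbf{Main obstacle.} The conceptual heart is the identity $F_S(t) - F_{\overline{S}}(t) = \frac{m}{k(m-k)}(K_t - ka_t/m)$ together with the observation that conditioning on $k$ makes $K_t$ an \emph{exact} hypergeometric variable, so a single tool---the Hush--Scovel bound---does everything. The technical delicacy is the two-layer use of that bound (once for $k$, once for each $K_t \mid k$) while keeping constants honest: the four thresholds on $|X\cap C|$ in the statement are exactly what is needed so that (i) $k$ lands in a window where $k(m-k)/m \asymp m$, (ii) the deviation parameter fed to \Cref{thm:hypgeom} is at least $2$, and (iii) the resulting exponential dominates $\mu$. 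The reduction $\wdist \le 2\|F_S - F_{\overline{S}}\|_\infty$ for distributions on $[-1,1]$ and the union bound over jump points are routine.
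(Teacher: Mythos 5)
Your proposal is correct and follows essentially the same route as the paper's proof: reduce Wasserstein-1 to the sup of CDF differences via the integral formula, union bound over the at most $|X\cap C|+1$ jump thresholds, express the pointwise CDF gap as a normalized deviation of a conditionally hypergeometric count (your $K_t$ given $k$ is exactly the paper's $W$ given $V=v$), and apply the Hush--Scovel bound twice. Your explicit handling of the factor of $2$ from the length of $[-1,1]$ (bounding $\Pr[\|F_S-F_{\overline S}\|_\infty>\alpha/2]$) is in fact slightly more careful than the paper's write-up, which flags that factor as an unresolved note.
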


\noindent Putting these claims together, we get that
\ifnum\usenix=0
\begin{equation*}
    \underset{X_a, h}{\Pr}\left[\wdist(h(\restrict{X_a\lvert_C}{\rho}), h(\restrict{X_b\lvert_C}{\rho})) > \alpha\right] \,\leq\,\, 2\mu(|X \cap C| +1)\cdot e^\zeta + \eta.
\end{equation*}
\else
\begin{multline*}
    \underset{X_a, h}{\Pr}\left[\wdist(h(\restrict{X_a\lvert_C}{\rho}), h(\restrict{X_b\lvert_C}{\rho})) > \alpha\right] \\
    \,\leq\,\, 2\mu(|X \cap C| +1)\cdot e^\zeta + \eta.
\end{multline*}
\fi

\ignore{
\begin{proof}[Proof of \cref{lem:applying-pure-dp-maxinfo-to-private-report}]
    Let $\alg^*:\cX^{n/2} \to \cY$ be an $\eps$-DP algorithm for $\eps \in (0,1]$, and let $\beta > 0$.
    Let $X \in \cX^n$, let $X_a$, $S$ be independent draws (sampled uniformly without replacement) of size $n/2$ from $X$. Note that, by the definition of max-information,  $$I_\infty^\beta(X_a;\alg(X_a)) = \dist_\infty^\beta\Big( \big(\,X_a,\alg(X_a)\,\big) || \big(\,X_a,\alg(S)\,\big)\Big)$$
    Since $\alg$ is differentially private, we know that there is bounded max-information between any random sample $X_a$ and the output of $\alg$ on the sample. In particular, applying \Cref{thm:pure-dp-implies-maxinfo} gives us the following
    $$\dist_\infty^\beta\Big( \big(\,X_a,\alg(X_a)\,\big) || \big(\,X_a,\alg(S)\,\big)\Big) \leq \eps^2n/4 + \eps\sqrt{n\ln(2/\beta)/4}$$
    Since $(X_a,\alg(S))$ is distributed exactly the same as $(S,\alg(X_a))$ then we have the following:
    $$\dist_\infty^\beta\Big( \big(\,X_a,\alg(X_a)\,\big) || \big(\,S,\alg(X_a)\,\big)\Big) \leq \eps^2n/4 + \eps\sqrt{n\ln(2/\beta)/4}$$
    By the definition of max-information this means
    $$\log\left(\underset{\{T\,|\,\, \Pr[(X_a,\alg^*(X_a)\in T] > \beta\}}{\sup}\frac{\Pr[(X_a,\alg^*(X_a)) \in T]-\beta}{\Pr[(S,\alg^*(X_a))) \in T]}\right) \leq \eps^2n/4 + \eps\sqrt{n\ln(2/\beta)/4}$$
    Which means that for all $\{T\,|\,\, \Pr[(X_a,\alg^*(X_a)\in T] > \beta\}$,
    $$\log\left(\frac{\Pr[(X_a,\alg^*(X_a)) \in T]-\beta}{\Pr[(S,\alg^*(X_a))) \in T]}\right) \leq \eps^2n/4 + \eps\sqrt{n\ln(2/\beta)/4}$$
    Let $C$ be any fixed subpopulation. Given $C$, we can post-process a pair $(X_a,\alg(X_a))$ to compute $(h(\restrict{X_a\lvert_C}{\selection}))$ and $(h(\restrict{X_b\lvert_C}{\selection}))$. This is because $h \leftarrow \cL(\alg(X_a))$ and $X_b = X\setminus X_a$. Similarly, we can post-process $(S,\alg(X_a))$ to compute $(h(S\lvert_C))$ and $(h(\overline{S}\lvert_C))$.

Let $T = \{ (S,h) \in (X^n,\cY)\,\mid\,\, \wdist(g(S\lvert_C), g(\overline{S}\lvert_C)) > \alpha^* \}$. Then if $\Pr[(X_a,\alg^*(X_a)\in T] > \beta$,
$$\log\left(\frac{\Pr[(X_a,\alg^*(X_a)) \in T]-\beta}{\Pr[(S,\alg^*(X_a))) \in T]}\right) \leq O(\eps^2n + n\sqrt{\delta/\eps})$$
Which means that if $\Pr[\wdist(h(\restrict{X_a\lvert_C}{\selection}), h(\restrict{X_b\lvert_C}{\selection})) > \alpha^* ] > \beta$,
\[
    \log\left(\frac{\Pr[\wdist(h(\restrict{X_a\lvert_C}{\selection}), h(\restrict{X_b\lvert_C}{\selection})) > \alpha^*] - \beta}{\Pr[\wdist(h(S\lvert_C), h(\overline{S}\lvert_C)) > \alpha^*]}\right) \leq \eps^2n/4 + \eps\sqrt{n\ln(2/\beta)/4}
\]
We can rearrange the above equation to get that 
\begin{align*}
    &\Pr[\wdist(h(\restrict{X_a\lvert_C}{\selection}), h(\restrict{X_b\lvert_C}{\selection})) > \alpha^*]\\
    \leq 
    &\Pr[\wdist(h(S\lvert_C), h(\overline{S}\lvert_C)) > \alpha^*]\cdot e^{\eps^2n/4 + \eps\sqrt{n\ln(2/\beta)/4}} + \beta
\end{align*}
\end{proof}
}

Now we proceed to prove Claims~\ref{lem:max-info-decouples-Xa-from-ha}~and~\ref{lem:unrelated-predictor}.

\begin{proof}[Proof of \cref{lem:max-info-decouples-Xa-from-ha}]
    First, note that since the algorithm $\cL$ postprocesses the report output by the data curator, by the fact that max-information is preserved under postprocessing, it inherits its max-information. Let $\alg^*$ be the combined algorithm $\cL \circ \alg$. Then by the definition of max-information, and since $(X_a,\alg^*(S))$ is distributed exactly the same as $(S,\alg^*(X_a))$,
    \ifnum\usenix=0
    $$I_\infty^\eta(X_a;\alg^*(X_a)) = \dist_\infty^\eta\Big( \big(\,X_a,\alg^*(X_a)\,\big) || \big(\,X_a,\alg^*(S)\,\big)\Big) = \dist_\infty^\eta\Big( \big(\,X_a,\alg^*(X_a)\,\big) || \big(\,S,\alg^*(X_a)\,\big)\Big).$$
    \fi
    we have that for all $T$ such that $\Pr[(X_a,\alg^*(X_a)\in T] > \eta$,
    $$\log\left(\frac{\Pr[(X_a,\alg^*(X_a)) \in T]-\eta}{\Pr[(S,\alg^*(X_a))) \in T]}\right)\leq \zeta.$$
   Given $C$, we can post-process a pair $(X_a,\alg^*(X_a))$ to compute $(h(\restrict{X_a\lvert_C}{\selection}))$ and $(h(\restrict{X_b\lvert_C}{\selection}))$. This is because $h \leftarrow \alg^*(X_a)$ and $X_b = X\setminus X_a$. Applying the same post-processing to $(S,\alg^*(X_a))$ yields $(h(\restrict{S\lvert_C}{\selection}))$ and $(h(\restrict{\overline{S}\lvert_C}{\selection}))$.
    
    \medskip\noindent
    Let $T = \{ (S,h) \,\mid\,\, \wdist(h(\restrict{S\lvert_C}{\selection}), h(\restrict{\overline{S}\lvert_C}{\selection})) > \alpha \}$.%\mb{Use either $h$ or $g$ consistently} 
    Then if $\Pr[(X_a,\alg^*(X_a))\in T] > \eta$,%\mb{$\alg^*$ not defined here}
    $$\log\left(\frac{\Pr[(X_a,\alg^*(X_a)) \in T]-\eta}{\Pr[(S,\alg^*(X_a))) \in T]}\right) \leq \zeta.$$
   This means that if $\Pr[\wdist(h(\restrict{X_a\lvert_C}{\selection}), h(\restrict{X_b\lvert_C}{\selection})) > \alpha] > \eta$,
    \[
    \log\left(\frac{\Pr[\wdist(h(\restrict{X_a\lvert_C}{\selection}), h(\restrict{X_b\lvert_C}{\selection})) > \alpha] - \eta}{\Pr[\wdist(h(\restrict{S\lvert_C}{\selection}), h(\restrict{\overline{S}\lvert_C}{\selection})) > \alpha]}\right) \leq \zeta.
    \]
We can rearrange the above equation to get that 
\begin{align*}
    &\Pr[\wdist(h(\restrict{X_a\lvert_C}{\selection}), h(\restrict{X_b\lvert_C}{\selection})) > \alpha]\\
    \leq 
    &\Pr[\wdist(h(\restrict{S\lvert_C}{\selection}), h(\restrict{\overline{S}\lvert_C}{\selection})) > \alpha]\cdot e^{\zeta} + \eta,
\end{align*}
as required.
\end{proof}

\begin{proof}[Proof of \cref{lem:unrelated-predictor}]
    
\noindent 
Let $\alpha, \mu \in (0, 1)$ (the statement holds trivially for $\mu > 1$), and suppose $|X\cap C|\ge \max\{\frac{4.15}{\alpha^2}\ln(4/\mu), \frac{5.3}{\alpha}, 8.3\ln(4/\mu), 40\}$. By the definition of the distance metric we have the following:
\ifnum\usenix=0
\begin{equation}\label{eq:wasserstein-integral}    \wdist\left(g(\restrict{S\lvert_C}{\selection}),g(\restrict{\overline{S}\lvert_C}{\selection})\right) = \int_{-1}^{1} \abs{\cdf_{g(\restrict{S\lvert_C}{\selection})}(g) - \cdf_{g(\restrict{\overline{S}\lvert_C}{\selection})}(g)} \,dg.
\end{equation}
\else
\begin{align}\label{eq:wasserstein-integral}
& \wdist\left(g(\restrict{S\lvert_C}{\selection}),g(\restrict{\overline{S}\lvert_C}{\selection})\right) \nonumber \\
   & \qquad = \int_{-1}^{1} \abs{\cdf_{g(\restrict{S\lvert_C}{\selection})}(\ell) - \cdf_{g(\restrict{\overline{S}\lvert_C}{\selection})}(\ell)} \,d \ell
\end{align}

\fi

To bound this value, we first prove the following for a fixed $y \in [-1,1]$.
\begin{equation}\label{eq:cdf-diffs-single-y}
    \Pr\left[|\cdf_{g(\restrict{S\lvert_C}{\selection})}(y) - \cdf_{g(\restrict{\overline{S}\lvert_C}{\selection})}(y)| \ge \alpha\right]\le \mu.
\end{equation}
Then, we observe that there are at most $|X \cap C|+1$ effectively different values of $y$ we need to consider with respect to any fixed $g$ and $C$. (For every realization of $S, \overline{S}$, $\cdf_{g(\pi_{\selection}(S|_C))}$ can only change for values of $y$ on which $\cdf_{g(\pi_{\selection}(X|_C))}$ changes. These values correspond to the partitioning of $[-1,1]$ into intervals induced by applying $g \circ \pi_{\selection}(\cdot)$ to the elements in $X \cap C$.) By union bounding over these $|X \cap C| + 1$ effectively different values of $y$, 
\Cref{eq:cdf-diffs-single-y} gives us the following.
\ifnum\usenix=0
\begin{equation}\label{eq:supremum-of-cdf-diffs}
\Pr\left[\sup_{y\in[-1,1]}\abs{\cdf_{g(\restrict{S\lvert_C}{\selection})}(y) - \cdf_{g(\restrict{\overline{S}\lvert_C}{\selection})}(y)} \ge \alpha\right]\le (1+|X\cap C|)\mu.
\end{equation}
\else
\begin{align}\label{eq:supremum-of-cdf-diffs}
\Pr&\left[\sup_{y\in[-1,1]}\abs{\cdf_{g(\restrict{S\lvert_C}{\selection})}(y) - \cdf_{g(\restrict{\overline{S}\lvert_C}{\selection})}(y)} \ge \alpha\right] \nonumber \\
& \qquad \le (1+|X\cap C|)\mu.
\end{align}
\fi

Finally, substituting the bound from \Cref{eq:supremum-of-cdf-diffs} in \Cref{eq:wasserstein-integral} proves the lemma \ssnote{I think $\alpha$ needs to be multiplied by $2$}. 

\medskip\noindent
To show \Cref{eq:cdf-diffs-single-y}, we redefine the sampling process in a way that will allow us to use \Cref{thm:hypgeom}, a concentration result for the hypergeometric distribution (See Definition~\ref{def:hyper} for a definition of this distribution): Consider an urn consisting of $n$ balls. Among those $n$ balls, $m$ are marked with a red stripe, representing membership in $C\cap X$. Among the $m$ red-striped balls, $t$ are further marked with a blue stripe, representing $x\in C\cap X$ such that $g(\restrict{x}{\selection}) \leq y$ for the value of $y$ being considered. Consider the experiment where we sample $n/2$ balls without replacement, and define the joint pair of random variables $(V, W)$ where $V$ counts the number of red-striped balls in the sample, (i.e., the number of sampled points that are in $X\cap C$) and $W$ counts the number of (red and) blue-striped balls in the sample, (i.e., the number of sampled points $x$ that are in $X\cap C$ and satisfy $g(\restrict{x}{\selection}) \leq y$. The random variables $W$ and $V$ follow hypergeometric distributions as follows:
\ifnum\usenix=1
		$$V \sim H(n, m, n/2) \text{ and }
		(W | V = v) \sim H(m, t, v).$$
\else
    \begin{align*}
		V &\sim H(n, m, n/2) \\
		(W | V = v) &\sim H(m, t, v).
	\end{align*}
\fi
Observe that the absolute value of the CDF difference we are trying to bound is equal to $\left|\frac{W}{V} - \frac{t - W}{m - V}\right|$ by definition.

Let $E_1$ be the event that the number of red-striped balls in the sample is close to its expected value (i.e., $|V - m/2| < m/4$). Then applying Theorem~\ref{thm:hypgeom} and using $m > 40$ and $m > 8.3\ln(4/\mu)$ we have that
\begin{align*}
    \Pr[\overline{E_1}] &< 2\exp\left(\frac{-2}{m+1} \cdot \Big( (m/4)^2 - 1\Big)\right)\\
    &\leq 2\exp\left(\frac{-2}{1.025m} \cdot \Big( 0.99(m/4)^2 \Big)\right)\\
    &= 2\exp\left(\frac{1.98}{16.4}m\right) < \mu/2.
\end{align*}
	
Now let us condition on a realization $V = v$. Given this, let $E_2$ be the event that the number of blue-striped balls in the sample is close to its expected value (i.e., $\left|W -\frac{tv}{m} \right| \leq \zeta$). Then applying Theorem~\ref{thm:hypgeom} for $\zeta \geq 2$ and $c$ as in the theorem:
    \[\Pr[ \overline{E_2} \bigmid V = v] < 2\exp\left(-2c\cdot \left( \zeta^2 - 1\right)\right).\]
Observe that 
\ifnum\usenix=0
    $$c = \max\left\{\frac{1}{v+1} + \frac{1}{m-v+1}, \frac{1}{t+1} + \frac{1}{m-t+1}\right\} \geq \frac{1}{t+1} + \frac{1}{m-t+1} \geq \frac{2}{\frac{m}{2}+1}.$$
\else
  \begin{align*}
  c & = \max\left\{\frac{1}{v+1} + \frac{1}{m-v+1}, \frac{1}{t+1} + \frac{1}{m-t+1}\right\} \\
  & \geq \frac{1}{t+1} + \frac{1}{m-t+1} \geq \frac{2}{\frac{m}{2}+1}.
  \end{align*}
\fi
Therefore,
\begin{equation}\label{eq:zeta}
    \Pr[\overline{E_2} \bigmid V = v] < 2\exp\left(\frac{-4}{\frac{m}{2}+1}\cdot \left( \zeta^2 - 1\right)\right).
\end{equation}

Assume that both events $E_1$ and $E_2$ hold. Then in this case we will argue that:
\begin{equation}\label{eq:alpha}
    \left|\frac{W}{V} - \frac{t - W}{m - V}\right| < \frac{m\zeta}{m^2/4 - \gamma^2} = \alpha.
\end{equation}
We will then substitute the derived value of $\zeta$ into Equation~\ref{eq:zeta} to show that $\Pr[ \overline{E_2} \bigmid V = v] < \mu/2$.

To this end, observe that if the number of blue-striped balls in the sample is within $\zeta$ of the expected value, $\frac{tV}{m}$, then the number of blue-striped balls in the sample is also within $\zeta$ of its own expected value, $\frac{t(m-V)}{m}$. This is because the balls can only be in one of these two sets. Therefore, when both events $E_1$ and $E_2$ hold, we have that:
\[\quad \Big|(t-W)-\frac{t(m-V)}{m} \Big| < \zeta.\]
Therefore,
\ifnum\usenix=0
    \begin{align*}
    \frac{W}{V}\, -\, &\frac{t-W}{m - V}\\
    &< \big(\E[W] + \zeta\big)\cdot\frac{1}{V} - \big(\E[t-W] - \zeta\big)\frac{1}{m - V}
    &&\left(\text{because }\abs{W-\frac{tV}{m}} < \zeta\right)\\
    &= \left(\frac{tV}{m} + \zeta\right)\cdot\frac{1}{V} - \left(\frac{t(m-V)}{m} - \zeta\right)\cdot\frac{1}{m - V}\\
    &=\frac{m\zeta}{V(m-V)}\\
    &< \frac{m\zeta}{m^2/4 - \gamma^2}
    &&\left(\text{because }|V - m/2| < \gamma\right).
    \end{align*}
\else
    \begin{align*}
    \frac{W}{V}\, -\, &\frac{t-W}{m - V} < \big(\E[W] + \zeta\big)\cdot\frac{1}{V} - \big(\E[t-W] - \zeta\big)\frac{1}{m - V}
    %&&\left(\text{because }\abs{W-\frac{tV}{m}} < \zeta\right) 
    \\
    &= \left(\frac{tV}{m} + \zeta\right)\cdot\frac{1}{V} - \left(\frac{t(m-V)}{m} - \zeta\right)\cdot\frac{1}{m - V}\\
    &=\frac{m\zeta}{V(m-V)} < \frac{m\zeta}{m^2/4 - \gamma^2}.
    % &&\left(\text{because }|V - m/2| < \gamma\right)
    \end{align*}
    where the first inequality is because $\abs{W-\frac{tV}{m}} < \zeta$, and the final inequality is because $|V - m/2| < \gamma$.
\fi
 Similarly, 
    $$\frac{t-W}{m - V} - \frac{W}{V} < \frac{m\zeta}{m^2/4 - \gamma^2}.$$
    This gives us Equation~\ref{eq:alpha}. We can now set $\alpha = \frac{m\zeta}{m^2/4 - \gamma^2}$ to get that $\zeta = \frac{3m\alpha}{16}$. 
    Now, substituting this $\zeta$ value back into Equation~\ref{eq:zeta} and using $m > 40$, $m>16/3\alpha$, and $m > \frac{4.15}{\alpha^2}\ln(4/\mu)$ we have the following:
    
    \begin{align*}
        \Pr& [ \overline{E_2} \bigmid V = v] < \exp\left(\frac{-4}{\frac{m}{2}+1}\cdot \left( \zeta^2 - 1\right)\right)\\
        &= 2\exp\left(\frac{-4}{\frac{m}{2}+1}\cdot \left( \frac{9m^2\alpha^2}{16^2} - 1\right)\right)\\
        &\leq 2\exp\left(\frac{-8}{1.05m}\cdot \left( \frac{0.9\cdot9m^2\alpha^2}{16^2}\right)\right)\\
        &= 2\exp\left(\frac{8.1\alpha^2m}{33.6}\right) < \mu/2.
    \end{align*}

    Finally, we get the following:
    \[\Pr\left[\left|\frac{W}{V} - \frac{t - W}{m - V}\right| > \alpha\right]\leq \Pr[\overline{E_1}\lor\overline{E_2}]\leq \mu.\] 
\end{proof}

\noindent
With Claims~\ref{lem:max-info-decouples-Xa-from-ha}~and~\ref{lem:unrelated-predictor} now proved, this concludes the proof of \Cref{thm:max-info-implies-demographic-coherence}.
\end{proof}

\smallskip\noindent\textbf{Proof of \Cref{thm:max-info-implies-demographic-coherence}.}
In the remaining part of this section, we use Lemma~\ref{lemma:max-info-implies-demographic-coherence-int} to prove that any data curation algorithm $\alg$ with bounded max-information also enforces wasserstein-coherence.

\begin{proof}[Proof Of ~\Cref{thm:max-info-implies-demographic-coherence}]

Fix $\eta > 0$. Fix any subpopulation $C \in \cC$. Consider any dataset $X$. Then for any algorithm $\cL:\cY\to\{\univ\to [-1,1]\}$, dataset $X \in \univ^n$, and $\mu > 0$, such that \begin{align}\label{eq:condgrpsize}
        |X\cap C|\ge \max\left\{\frac{4.15\cdot\ln(4/\mu)}{\alpha^2}, \frac{5.3}{\alpha},8.3\cdot\ln(4/\mu), 40\right\},
    \end{align}
we have the following (by \Cref{lemma:max-info-implies-demographic-coherence-int})
 \ifnum\usenix=0   \begin{equation}\label{eq:probdemcoh}
        \Pr_{X_a,X_b \leftarrow X, h\leftarrow\cL\circ\alg(X_a)}\left[ \wdist\Big(h(\restrict{X_a|_C}{\selection}),\, h(\restrict{X_b|_C}{\selection})\Big) > \alpha \right] \leq  2(|X\cap C|+1)\mu\cdot e^{\zeta} +\eta
    \end{equation} 
\else
\begin{align}\label{eq:probdemcoh}
        \Pr&_{X_a,X_b \leftarrow X, h\leftarrow\cL\circ\alg(X_a)}\left[ d_{\mathcal{W}_1}\Big(h(\restrict{X_a|_C}{\selection}),\, h(\restrict{X_b|_C}{\selection})\Big) > \alpha \right] \nonumber \\
        & \qquad \leq  2(|X\cap C|+1)\mu\cdot e^{\zeta} +\eta
    \end{align} 
\fi
where the choice of split $X_a,X_b \setrandomly X$ and the computed predictor $h\leftarrow \cL(X_a)$ are as in the $\coherenceExp_{\cL\circ\alg,X,\cC^*,\selection}(\alpha)$ experiment in \Cref{fig:coherenceExp}. 
    
We need to show, instead, that for
\ifnum\usenix=0
\begin{align*}
\gamma = \max\Big\{ & \frac{8.3 \cdot (\zeta + \ln(16|\cC|/\beta))}{\alpha^2}, \frac{36\ln((3/\alpha)}{\alpha^2} \nonumber &, 16.6 \cdot (\zeta + \ln(16|\cC|/\beta)), \frac{5.3}{\alpha}, 80\Big\},
\end{align*}
\else
\begin{align*}
\gamma & = \max\Big\{ \frac{8.3 \cdot (\zeta + \ln(16|\cC|/\beta))}{\alpha^2}, \frac{36\ln((3/\alpha)} {\alpha^2} \nonumber \\ 
&, 16.6 \cdot (\zeta + \ln(16|\cC|/\beta)), \frac{5.3}{\alpha}, 80\Big\},
\end{align*}
\fi
and all algorithms $\cL:\cY\to(\univ\to[-1,1])$, all datasets $X \in \univ^n$, the probability the following holds for all $C \in \cC$ (such that $|X|_C| \geq \gamma$,) is low:
 \[\wdist\left( h\big(\restrict{X_a|_C}{\selection}\big),\, h\big(\restrict{X_b|_C}{\selection}\big) \right) > \alpha,\]
where the split $X_a,X_b \setrandomly X$ and predictor $h\leftarrow \cL(X_a)$ are as in \Cref{fig:coherenceExp}.

\bigskip\noindent
To that end, we start by considering the fixed subpopulation $C$ and setting $\mu = \eta/\left(2(|X\cap C|+1)\cdot e^{\zeta} \right)$ (ensuring that the RHS of \Cref{eq:probdemcoh} is 2 $\eta$ ). 

    The main content in the proof will be arguing that the following lower bound on the size of $|X \cap C|$    implies the condition in \Cref{eq:condgrpsize}. We can then union bound over all sub-populations in $\cC$ to get the theorem.
    \begin{align}
        |X \cap C| \geq \max\Big\{ & \frac{8.3 \cdot (\zeta + \ln(16/\eta))}{\alpha^2}, \frac{36\ln(3/\alpha)}{\alpha^2} \nonumber \\
        &, 16.6 \cdot (\zeta + \ln(16/\eta)), \frac{5.3}{\alpha}, 80\Big\}.
    \end{align}

    Substituting the value of $\mu$ back in \Cref{eq:condgrpsize}, the first term in the max corresponds to the condition
    \[
        |X\cap C|\ge \frac{4.15\cdot\ln(8 (|X\cap C|+1) e^{\zeta}/\eta)}{\alpha^2}
    \]
    which can also be written as:
    \[
        |X\cap C|\ge \frac{4.15\cdot\ln((|X\cap C|+1) 
 + g}{\alpha^2}
    \]
   where $g = 4.15 \cdot (\zeta + \ln(8/\eta))$. 

    Assume $ \frac{|X\cap C|}{2} \ge \frac{4.15\cdot\ln((|X\cap C|+1)}{\alpha^2}.$ Then, as long as $|X\cap C| \geq \frac{2g}{\alpha^2}$, the condition is satisfied. Now, using the fact that $|X \cap C| \geq 80$, we have that $\ln((|X\cap C|+1) \leq 1.01\ln((|X\cap C|+1)$, which implies that it suffices for
     $ |X\cap C| \ge \frac{9\cdot\ln((|X\cap C|)}{\alpha^2}.$ Consider the inequality  $ \frac{|X\cap C|}{\ln((|X\cap C|)} \ge  c$. Note that the left hand side is an increasing function of $|X \cap C|$. Let $|X\cap C| \geq 2c \ln c$. Then, we get that  $ \frac{|X\cap C|}{\ln((|X\cap C|)} \ge \frac{2c \ln c}{\ln(2c \ln c)}$, and some arithmetic shows that for $c \geq 9$ (which is true whenever $\alpha \leq 1$), the right hand side is indeed larger than $c$. Hence, it is additionally sufficient that  $ |X\cap C| \ge \frac{36\ln((3/\alpha)}{\alpha^2}.$ \ssnote{For after submission: check if the previous lemma proof goes through for $\alpha > 1$, and uncomment the line after if so.}

     Similarly, to be larger than the third term in the max in~\Cref{eq:condgrpsize}, we need 
    \[
        |X\cap C|\ge 8.3 \cdot\ln(8 (|X\cap C|+1) e^{\zeta}/\eta)
    \]
    which can also be written as
    \[
        |X\cap C|\ge 8.3\cdot\ln((|X\cap C|+1) 
 + g
    \]
   where $g = 8.3 \cdot (\zeta + \ln(8/\eta))$. 

     Assume $ \frac{|X\cap C|}{2} \ge 8.3 \cdot\ln((|X\cap C|+1).$ Then, as long as $|X\cap C| \geq 2g$, the condition is satisfied. Now, using the fact that $|X \cap C| \geq 80$, we have that $\ln((|X\cap C|+1) \leq 1.01\ln((|X\cap C|+1)$, which implies that it suffices for
     $ |X\cap C| \ge 18\cdot\ln((|X\cap C|)$,which is true as long as $|X \cap C| \geq 80$, hence this case is taken care of. 

     Hence, for a single subpopulation $C$ we have argued that it is sufficient that
    \begin{align}
        |X \cap C| \geq \max\Big\{ & \frac{8.3 \cdot (\zeta + \ln(8/\eta))}{\alpha^2}, \frac{36\ln(3/\alpha)}{\alpha^2} \nonumber \\
        &, 16.6 \cdot (\zeta + \ln(8/\eta)), \frac{5.3}{\alpha}, 80\Big\}.
    \end{align}

    Setting $\eta = \beta/2|\cC|$, and applying a union bound on \Cref{eq:probdemcoh} (over all subpopulations in the class) then gives the theorem.
\end{proof}

\subsection{Differentially Private Algorithms Enforce Demographic Coherence}\label{sec:dp-implies-coherence-enforcement}

In this section, we argue that our definition of demographic coherence can be achieved via differentially private algorithms. We do this by adapting known connections between differential privacy and max-information, and~\Cref{thm:max-info-implies-demographic-coherence} connecting max-information and demographic coherence. 

The proofs for pure differential privacy and approximate differential privacy follow a similar flavor. Firstly, we adapt known connections between (pure and approximate) differential privacy and max-information to the setting of sampling without replacement. 
%These proofs are essentially identical to those in the i.i.d. case; but for our application we carefully track the constants used, and ensure that nothing breaks with the different sampling. 
Then, we use~\Cref{thm:max-info-implies-demographic-coherence}
(connecting bounded max-information to demographic coherence) to argue that differential privacy implies demographic coherence. \ifnum\usenix=1 For the formal proofs, see~\Cref{sec:DPdemcoh}. \fi 

\begin{restatable}{theorem}{pureDPcoh}
    [Pure-DP Enforces Wasserstein Coherence]\label{thm:pure-dp-implies-coherence-enforcement}
    Fix any $\eps, \beta \in (0,1]$, $\alpha \in (0,1], n \in \mathbb{N}$.
     Let $\cC$ be a collection of subpopulations $C \in \univ^*$. Consider an order-invariant $\eps$-DP algorithm ${\alg:\univ^{n/2}\to \cY}$. Fix any lens $\rho$. Then, $\alg$ enforces $(\alpha,\beta)$-Wasserstein-coherence with respect to  collection $\cC$, lens $\rho$, and size constraint $\gamma$, where
     \ifnum\usenix=0
     \begin{align}
        \gamma= \max\Big\{ & \frac{8.3 \cdot (\eps^2n/4 + \eps\sqrt{n\ln(4|\cC|/\beta)}/2 + \ln(16|\cC|/\beta))}{\alpha^2}, \frac{36\ln((3/\alpha)}{\alpha^2} \nonumber \\
        &, 16.6 \cdot (\eps^2n/4 + \eps\sqrt{n\ln(4|\cC|/\beta)}/2 + \ln(16|\cC|/\beta)), \frac{5.3}{\alpha}, 80\Big\}.
    \end{align}
    \else
   \begin{align}
        \gamma= \max\Big\{ & \frac{8.3 \cdot (\eps^2n/4 + \eps\sqrt{n\ln(4|\cC|/\beta)}/2 + \ln(16|\cC|/\beta))}{\alpha^2}, \nonumber \\ & \frac{36\ln(3/\alpha)}{\alpha^2}, 16.6 \cdot (\eps^2n/4 + \eps\sqrt{n\ln(4|\cC|/\beta)}/2 \nonumber \\
        & \qquad + \ln(16|\cC|/\beta)), \frac{5.3}{\alpha}, 80\Big\}.
    \end{align}
    \fi

\end{restatable}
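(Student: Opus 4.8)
The plan is to obtain this theorem by composing two implications already established in the excerpt: \emph{pure-DP $\Rightarrow$ bounded max-information} (\Cref{thm:pure-dp-implies-maxinfo}) and \emph{bounded max-information $\Rightarrow$ Wasserstein-coherence enforcement} (\Cref{thm:max-info-implies-demographic-coherence}). Since $\alg$ is order-invariant and the data universe $\univ = (\zo^* \cup \perp)^*$ is infinite (in particular of size at least $n$), the hypotheses of \Cref{thm:pure-dp-implies-maxinfo} are met, so essentially all that remains is to choose the max-information failure parameter that makes the two results line up and then simplify the resulting size constraint.

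\textbf{Step 1 (instantiate the max-information bound).} First I would apply \Cref{thm:pure-dp-implies-maxinfo} with its failure parameter set to $\beta/(2|\cC|)$. Using $\ln\!\big(2/(\beta/(2|\cC|))\big) = \ln(4|\cC|/\beta)$, this gives
\[
I^{\beta/(2|\cC|)}_{\infty}(\alg, n/2) \;\le\; \eps^2 n/4 + \eps\sqrt{n\ln(4|\cC|/\beta)/4} \;=\; \eps^2 n/4 + \eps\sqrt{n\ln(4|\cC|/\beta)}/2 \;=:\; \zeta .
\]
The choice $\beta/(2|\cC|)$ is exactly what \Cref{thm:max-info-implies-demographic-coherence} expects as the superscript on the max-information term, i.e. it leaves the slack needed for the internal union bound over the $|\cC|$ subpopulations.

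\textbf{Step 2 (invoke the coherence theorem and simplify).} Next I would apply \Cref{thm:max-info-implies-demographic-coherence} with this value of $\zeta$, the same collection $\cC$, the same lens $\rho$, and the target parameters $\alpha \in (0,1]$, $\beta \in (0,1]$ (the case $\beta = 1$ being trivial). That theorem then yields that $\alg$ enforces $(\alpha,\beta)$-Wasserstein-coherence with respect to $\cC$, $\rho$, and size constraint
\[
\gamma = \max\Big\{ \tfrac{8.3(\zeta + \ln(16|\cC|/\beta))}{\alpha^2},\; \tfrac{36\ln(3/\alpha)}{\alpha^2},\; 16.6(\zeta + \ln(16|\cC|/\beta)),\; \tfrac{5.3}{\alpha},\; 80 \Big\}.
\]
Substituting $\zeta = \eps^2 n/4 + \eps\sqrt{n\ln(4|\cC|/\beta)}/2$ into the first and third entries of the maximum reproduces verbatim the expression for $\gamma$ in the theorem statement, which finishes the proof.

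\textbf{Main obstacle.} I do not expect a conceptual obstacle; the care lies entirely in bookkeeping, and the overloaded symbol $\gamma$ is the trap — in \Cref{thm:pure-dp-implies-maxinfo} it denotes the max-information failure probability, whereas in \Cref{thm:max-info-implies-demographic-coherence} and in the present statement it denotes the size constraint, so one must instantiate the former's $\gamma$ as $\beta/(2|\cC|)$ rather than confuse it with the size constraint. A secondary point is the strict inequality $I^{\beta/(2|\cC|)}_{\infty}(\alg,n/2) < \zeta$ that \Cref{thm:max-info-implies-demographic-coherence} nominally assumes versus the $\le$ delivered by Step 1; this is harmless because the max-information hypothesis enters only through the monotone factor $e^{\zeta}$, so $\le$ may be used in place of $<$ throughout (or one may replace $\zeta$ by a slightly larger value and use that $\gamma$ depends continuously on $\zeta$). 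Finally, the approximate-DP companion result \Cref{thm:approx-dp-implies-coherence-enforcement} follows by exactly the same two-step recipe, substituting \Cref{cor:approx-dp-implies-max-info} for \Cref{thm:pure-dp-implies-maxinfo} in Step 1.
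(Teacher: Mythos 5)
Your proposal matches the paper's own proof essentially verbatim: the paper also instantiates \Cref{thm:pure-dp-implies-maxinfo} with failure parameter $\beta/(2|\cC|)$ to get $I^{\beta/2|\cC|}_{\infty}(\alg,n/2) \leq \eps^2n/4 + \eps\sqrt{n\ln(4|\cC|/\beta)/4}$ and then feeds this bound into \Cref{thm:max-info-implies-demographic-coherence}. Your remarks on the overloaded symbol $\gamma$ and the $\leq$ versus $<$ mismatch are correct observations that the paper glosses over, but they do not change the argument.
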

\ifnum\usenix=0
\begin{proof}

    Fix $\beta > 0$. By \Cref{thm:pure-dp-implies-maxinfo} connecting differential privacy and max-information, we have that we have that,
    $$I^{\beta/2|\cC|}_{\infty}(\alg,n/2) \leq \eps^2n/4 + \eps\sqrt{n\ln(4|\cC|/\beta)/4}.$$

    Applying~\Cref{thm:max-info-implies-demographic-coherence} and substituting the bound on max-information then completes the proof.
\end{proof}
\fi

\begin{restatable}{theorem}{approxDPcoh}
[Approx-DP Enforces Wasserstein Coherence]\label{thm:approx-dp-implies-coherence-enforcement}
    Fix any $\beta \in (0,1]$, $\alpha \in (0,1]$, $n \in N$. Let $\eps \in (0,\frac{1}{2}]$, and $\delta \in (0, \frac{\eps^2 \beta^2}{(120n)^2 |\cC|^2}]$ 
    Let $\cC$ be a collection of subpopulations $C \in \univ^*$. Consider an order-invariant \footnote{Order-invariance can be relaxed by multiplying $\eps$ by $2$ in the $\gamma$ value, and dividing by $2$ in the range of $\delta$.} $(\eps,\delta)$-DP algorithm ${\alg:\univ^{n/2}\to \cY}$. Fix any lens $\rho$. Then, $\alg$ enforces $(\alpha,\beta)$-Wasserstein-coherence with respect to  collection $\cC$, lens $\rho$, and size constraint $\gamma$, where
    \ifnum\usenix=0
     \begin{align}
        \gamma = \max\Big\{ & \frac{8.3 \cdot (265\eps^2n + 12\eps\sqrt{n\ln(4|\cC|/\beta)} + \ln(32|\cC|/\beta))}{\alpha^2}, \frac{36\ln((3/\alpha)}{\alpha^2} \nonumber \\
        &, 16.6 \cdot (265\eps^2n + 12\eps\sqrt{n\ln(4|\cC|/\beta)} + \ln(32|\cC|/\beta)), \frac{16/3}{\alpha}, 80\Big\}.
    \end{align}
    \else
\begin{align}
        \gamma = \max\Big\{ & \frac{8.3 \cdot (133\eps^2n + 9\eps\sqrt{n\ln(4|\cC|/\beta)} + \ln(32|\cC|/\beta))}{\alpha^2}, \nonumber \\
        & \qquad \frac{36\ln((3/\alpha)}{\alpha^2}, 16.6 \cdot (133\eps^2n + 9\eps\sqrt{n\ln(4|\cC|/\beta)} \nonumber \\ 
        & \qquad + \ln(32|\cC|/\beta)), \frac{16/3}{\alpha}, 80\Big\}.
\end{align}

 \fi
    
\end{restatable}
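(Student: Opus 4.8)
I would prove \Cref{thm:approx-dp-implies-coherence-enforcement} by mirroring the pure-DP argument of \Cref{thm:pure-dp-implies-coherence-enforcement}, with the single change of routing through \Cref{cor:approx-dp-implies-max-info} instead of \Cref{thm:pure-dp-implies-maxinfo} as the bridge from differential privacy to bounded max-information, and then applying \Cref{thm:max-info-implies-demographic-coherence}. So the proof is a two-step reduction: first bound the approximate max-information of $\alg$ at confidence level $\beta/(2|\cC|)$ in terms of $\eps,\delta,n$; then plug that bound in as the parameter $\zeta$ of \Cref{thm:max-info-implies-demographic-coherence} and read off the resulting size constraint $\gamma$.

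For the first step, fix $\beta > 0$ and apply \Cref{cor:approx-dp-implies-max-info} to $\alg\colon\univ^{n/2}\to\cY$, \ie with sample size $n/2$, taking the corollary's confidence parameter to be $\beta/(2|\cC|) \in (0,1]$. The corollary's hypothesis on $\delta$ then reads $\delta \le \eps^2(\beta/(2|\cC|))^2/(120\cdot(n/2))^2 = \eps^2\beta^2/((120n)^2|\cC|^2)$, which is exactly the regime assumed in the theorem (and $\eps\in(0,1/2]$ is likewise assumed), so the corollary applies and yields
\[
I^{\beta/(2|\cC|)}_{\infty}(\alg,n/2)\;\le\;265\eps^2(n/2)+12\eps\sqrt{(n/2)\ln(4|\cC|/\beta)}\;\le\;\zeta,
\]
using $\ln(2/(\beta/(2|\cC|)))=\ln(4|\cC|/\beta)$ and writing $\zeta=265\eps^2 n+12\eps\sqrt{n\ln(4|\cC|/\beta)}$ for the deliberately loose bound obtained by replacing $n/2$ with $n$; keeping the $n/2$ factors instead would give the sharper constants $133\eps^2 n$ and $9\eps\sqrt{n\ln(4|\cC|/\beta)}$. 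For the second step, I would invoke \Cref{thm:max-info-implies-demographic-coherence} with this $\zeta$, the same $\cC$, lens $\rho$, and parameters $\alpha,\beta$; it returns $(\alpha,\beta)$-Wasserstein-coherence enforcement with the size constraint $\gamma$ obtained by substituting $\zeta$ into the $\gamma$-formula of that theorem. Carrying the substitution through --- and using the conservative constant $\ln(32|\cC|/\beta)$ in place of $\ln(16|\cC|/\beta)$ to absorb the gap between the $\le$ delivered by \Cref{cor:approx-dp-implies-max-info} and the strict inequality required by \Cref{thm:max-info-implies-demographic-coherence}, together with the $n/2$-versus-$n$ slack --- produces exactly the stated $\gamma$. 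If $\alg$ is not order-invariant, one applies the order-invariance-relaxed form of the max-information bound, which is what accounts for the footnoted doubling of $\eps$ and halving of the $\delta$-range.

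The part I expect to do real work is not this composition, which is routine parameter chasing, but the two results it rests on: \Cref{cor:approx-dp-implies-max-info}, the generalization to sampling without replacement --- with explicitly tracked constants and flexible parameters --- of the approximate-DP-to-max-information bound of \cite{RogersRST16}, and \Cref{thm:max-info-implies-demographic-coherence}, whose substance is the hypergeometric concentration argument in \Cref{lemma:max-info-implies-demographic-coherence-int}. Within the present proof the only thing to be careful about is the arithmetic matching the $\delta$-regime that \Cref{cor:approx-dp-implies-max-info} demands at sample size $n/2$ to the regime stated in the theorem, and confirming that every remaining slack (strict-versus-nonstrict max-information, $n/2$-versus-$n$, order-invariance) is safely absorbed into the constants defining $\gamma$.
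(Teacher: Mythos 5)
Your proposal is correct and is essentially identical to the paper's proof: the paper also instantiates \Cref{cor:approx-dp-implies-max-info} at sample size $n/2$ with confidence parameter $\beta/(2|\cC|)$ (which makes the corollary's $\delta$-requirement coincide exactly with the theorem's hypothesis) to get $I^{\beta/2|\cC|}_{\infty}(\alg,n/2)\le \tfrac{265}{2}\eps^2 n + 12\eps\sqrt{(n/2)\ln(4|\cC|/\beta)}$, and then substitutes this bound as $\zeta$ in \Cref{thm:max-info-implies-demographic-coherence}. Your parameter bookkeeping (the $n/2$-versus-$n$ slack, the nonstrict-versus-strict max-information bound, and the order-invariance footnote) matches what the paper does implicitly.
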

\ifnum\usenix=0
\begin{proof}
  Fix $\beta > 0$ and $\gamma = \frac{\beta}{2|\cC|}$. By \Cref{cor:approx-dp-implies-max-info} connecting differential privacy and max-information, we have that we have that, as long as $\delta \in (0, \frac{\eps^2 \beta^2}{(120n)^2 |\cC|^2}]$ 
  \begin{align*}
  I^{\beta/2|\cC|}_{\infty}(\alg,n/2) \leq \frac{265}{2} \eps^2n + 12\eps\sqrt{\frac{n}{2}\ln(4|\cC|/\beta)}. 
  \end{align*}

    Applying~\Cref{thm:max-info-implies-demographic-coherence} and substituting the bound on max-information then completes the proof.
    
\end{proof}
\fi

\ifnum\usenix=1
\section{Conclusion}
In this work, we introduce demographic coherence, a new analytical framework for reasoning about the privacy of large data releases. This framework represents a \emph{necessary} condition for privacy, bridging the gap between stringent privacy conditions that are \emph{sufficient} for privacy preservation, and the intuitive understanding of the harms those conditions are meant to prevent. The framework is deliberately designed to be ergonomic in many different contexts; it captures an intuitive class of harms, while offering new analytical and experimental tools for privacy research. Additionally, we show that this framework can be instantiated, arguing that any differentially private algorithm enforces Wasserstein-demographic coherence. We leave it as an important open question to identify other ways of instantiating our definition.

\clearpage
\section{Ethics Considerations and Compliance with the Open Science Policy}
\subsection{Ethics considerations}

Our work is motivated by the desire to create ergonomic \emph{necessary} notions of privacy that simultaneously preserve the intuition about the harms against which they protect and are rigorous enough to inform algorithm design. Within the scope of this work, we did not conduct experiments or write software, limiting the ethical risks associated with our research. 

While there are are other ethical considerations during the conduct of research, we start with some of the categories explicited stated within the call for papers:
\begin{itemize}[--]

    \item \emph{Impact on deployed systems:} Our work is not experimental, and therefore we did not interact with deployed systems as part of our research efforts.  Additionally, our work does not uncover vulnerabilities that could be used attack deployed systems.  
        
    \item \emph{Costs the research imposes on others:} This work does not impose costs on others.

    \item \emph{Safely and appropriately collecting data:} There was no data collected for our work. 
    
    \item \emph{Well-being of the research team:} Our research process did not expose researchers to harmful content. 
\end{itemize}

\subsection{Compliance with the Open Science Policy}

There are no artifacts associated with our submission.

\fi

\section{Acknowledgments}
We would like to thank Adam Smith for insightful comments and feedback that helped improve this work. We thank Christina Xu for her contributions to shaping the early phases of this project.

\bibliographystyle{plain}
\bibliography{citation-stuff/extra_citations,citation-stuff/abbrev3,citation-stuff/crypto,citation-stuff/references} 

\newpage
\ifnum\usenix=1

\newpage
\fi
\appendix

\section{Comparison to Perfect Generalization}

\pj{In this section we include a comparison between 
\emph{demographic coherence} (\Cref{def:coherence})
and the notion of \emph{perfect generalization} introduced by Cummings et al.~\cite{CummingsLNRW16}. (See also the work of Bassily and Freund~\cite{BassilyF16} that independently introduced a generalization thereof.) This notion was originally meant to capture generalization under post-processing but has since been shown to be closely related to other desirable properties  as well (\eg replicability~\cite{BunGHILPSS23}). Our framework shares conceptual similarities with this definition, but the technical details differ in important ways.

The following comparison uses the definition of  \emph{sample perfect generalization} (\Cref{def:sample-perfect-generalization}) from~\cite{BunGHILPSS23} which is roughly equivalent to the original definition from~\cite{CummingsLNRW16}.} Intuitively, a mechanism $M$ running on i.i.d. samples from some distribution (sample) perfectly generalizes if the distribution of its output does not depend ``too much'' on specific realization of its sampled input. That is, its output distributions when run on two i.i.d samples from any distribution are indistinguishable.

\begin{definition}[Sample perfect generalization {\cite[{Def~3.4}]{BunGHILPSS23}} ]\label{def:sample-perfect-generalization}
An algorithm $\cA : \cX^m \to \cY$ is said to be $(\beta, \epsilon, \delta)$-sample perfectly generalizing if, for every distribution $\mathcal{D}$ over $\cX$, with probability at least $1 - \beta$ over the draw of two i.i.d. samples $X_a, X_b \sim \mathcal{D}^m$,
\[
\cA(X_a) \approx_{\epsilon, \delta} \cA(X_b),
\]
where $\approx_{\epsilon, \delta}$ denotes $\epsilon,\delta$ indistinguishability.
\end{definition}

\Cref{def:coherence} has several noticeable syntactic differences when compared to \Cref{def:sample-perfect-generalization}. First, demographic coherence is defined within a specific framework that explicitly lays out the entire data release pipeline, a design choice that intentionally lends itself to concrete intuition (and experimental evaluation) of data release. However, this still leaves open the possibility that the core statistical guarantee of demographic coherence is roughly equivalent to perfect generalization. In other words, it may still be the case that demographic coherence is simply a different way to describe the protections offered by perfect generalization; as we see below, this is not the case.

Second, \pj{while ``closeness'' in the definition of perfect generalization is required for distributions over the entire sets $X_a, X_b$, 
the ``closeness'' in the definition of demographic coherence is required for distributions over the sets $X_a|_C,X_b|_C$ for subpopulations $C \subseteq \univ$ from some collection $\cC$. For the sake of drawing a more direct comparison here, we collapse this difference by comparing sample perfect generalization to demographic coherence with $\cC=\{\univ\}$.}

\pj{A third difference in the definitions is is the choice of sets $X_a,X_b$ that the comparison is made with respect to, \ie a random partition of a fixed dataset in demographic coherence vs. i.i.d. draws from a distribution in the case of perfect generalization. The choice of random partitioning in our framework is made to ensure concreteness and applicability in census-like settings but it is chosen intentionally to maintain both intuitive and quantitative similarities to i.i.d. sampling. Thus, we view this as more of a difference in interpretability and applicability of the definitions, rather than one about their underlying guarantees.}

The main difference between the two definitions, thus, is in how how ``closeness'' is measured---as spelled out in \Cref{fig:coherence-generalization}. 

\begin{figure*}[ht!]
\begin{center}
\fcolorbox{black}{cyan!3}{
\small
\hbox{
\begin{minipage}{0.944\textwidth}
\vspace{0.3em}
An algorithm $\alg:\cX^{n/2}\to \cY$ is:
\begin{enumerate}
    \item \emph{$(\beta, \eps, \delta)$-sample perfectly generalizing if}

    $\forall$ distributions $\cD$ over $\cX$, with probability at least $1-\beta$ over $X_a,X_b \sim \cD^{n/2}$: 
    $$\alg(X_a) \approx_{\eps,\delta} \alg(X_b)$$

    \item \emph{$(\alpha,\beta)$-coherence enforcing if}

    $\forall$ datasets $X \in \cX^n$, learners $\cL:\cY \to (\cX \to [-1,1])$, with probability at least $1-\beta$ over the random split $X_a \cup X_b = X$ and the coins of $\alg$, $\cL$:
    $$\dist_{W}(h_a(X_a), h_a(X_b)) \leq \alpha$$
    where $h_a \leftarrow \cL\circ\cA(X_a)$ is a confidence rated predictor, and $h_a(X_i)$ is the distribution induced by randomly choosing $x\sim X_i$ and computing $h_a(x)$.
\end{enumerate}
\vspace{0.4em}
\end{minipage}
}
}
\end{center}
\caption{Comparing the definition of sample perfect generalization to a simplified definition of demographic coherence.}
\label{fig:coherence-generalization}
\end{figure*}

Perfect generalization asks that w.h.p. over independent samples, $\alg$ produces indistinguishable distributions over reports $R_a \leftarrow \alg(X_a)$ and $R_b \leftarrow \alg(X_b)$. Meanwhile, coherence enforcement asks that w.h.p. $\cL\circ\alg(X_a)$ produces a confidence rated predictor $h_a:\cX \to [-1,1]$ which has ``similar'' predictions on $X_a$ and $X_b$ (a property enforced by $\alg$). \pj{That is, the comparison in perfect generalization is on the behavior of the algorithm $\cA$ itself, while the comparison in demographic coherence is on the likely behavior of a realized hypothesis $h_a$ that is produced only over the report $R_a \leftarrow \alg(X_a)$.} 

Since coherence enforcement limits the set of algorithms against which indistinguishability applies, one might expect that perfectly generalizing algorithms also enforce demographic coherence, and indeed, \cref{thm:max-info-implies-demographic-coherence} proves this to be true. However, the converse need not be true---implying that demographic coherence is a relaxation of perfect generalization. 
% This is because coherence enforcement only requires $\alg$ to enforce all confidence rated predictors $h \leftarrow \cL\circ\alg(X_a)$ to induce `similar' distributions on $X_a$ and $X_b$. 
In particular, the example below shows a set $X$ such that no confidence rated predictor $h:\cX \to [-1,1]$ violates this property. In this case, all data curators are vacuously coherence enforcing.
 
Consider a data curator $\alg:\zo^{n/2}\to\zo^{n/2}$ that (deterministically) publishes its input in the clear. This is clearly not perfectly generalizing as the distribution of the report $X_a$ is a point mass that (for reasonable choices of $X$, with high probability) is distinct from the distribution of the report $X_b$.

Meanwhile, considering a dataset $X\in\zo^n$ and a random split $X_a,X_b \leftarrow X$ of the dataset, there are two possible predictors $h:\zo \to [-1,1]$ that witnesses the highest possible Wasserstein distance when run on $X_a$ vs. $X_b$. That is, without loss of generality $h$ is either $h(0)=-1, h(1) = 1$ or $h(0) = 1, h(1) = -1$. In either case, $h$ cannot be improved even by seeing $X_a$ in the clear. So, any data curation algorithm in this scenario, is coherence enforcing since the data itself doesn't have enough complexity to allow for a violation of demographic coherence. \pj{Note that the absence of information correlated with the bits contained in the dataset $X$ (\eg time, location, computer system) is crucial to this example.}

\section{Differential Privacy implies Bounded Max-Information: Sampling without Replacement }\label{app:maxinfo}

Prior work shows that for datasets sampled i.i.d., differentially private algorithms have bounded max-information \cite{DworkFHPRR15,RogersRST16}. In this appendix we prove \Cref{thm:pure-dp-implies-maxinfo} and \Cref{thm:approx-dp-implies-max-info}, which are analogs of those theorems for sampling without replacement.

\subsection{Preliminaries}\label{sec:prelims-appendix}

First we state \Cref{thm:mcdiarmids_without-our-version}, a version of McDiarmid's inequality that applies to the case of sampling without replacement. This result follows from Lemma~2 in \cite{Tolstikhin17}. This result in used in the proof of \Cref{thm:pure-dp-implies-maxinfo}, which says that pure-DP algorithms have bounded max-information (even in the case of sampling without replacement.)

\begin{definition}\label{def:data-order-invariant-function}
    A function $f:\cX^m \to \cY$, is called \emph{order invariant} if for all $X \in \cX^m$, the value of the function $f(X)$ does not depend on the order of the elements of $X$. 
\end{definition}

\begin{theorem}[McDiarmid's for sampling without replacement \cite{Tolstikhin17}]\label{thm:mcdiarmids_without-our-version}
    Let $f:\cX^n \to \cY$ be an an order invariant function with global sensitivity $\Delta > 0$. Let $\cX$ be a data universe of size $n$, let $S$ be a sample of size $m$ chosen without replacement from $\cX$. Then for $t \geq 0$,
    \ifnum\usenix=0
        $$\Pr_S [f(S)-\E[f(S)] \geq t] \leq \exp\left(-\frac{2t^2}{m\Delta^2}\cdot\left(\frac{n-1/2}{n-m}\right)\cdot\left(1-\frac{1}{2\max(m,n-m)}\right)\right)$$
    \else
    \begin{align*}
    & \Pr_S [f(S)-\E[f(S)] \geq t] \\
    & \qquad \leq \exp\left(-\frac{2t^2}{m\Delta^2}\cdot\left(\frac{n-1/2}{n-m}\right)\cdot\left(1-\frac{1}{2\max(m,n-m)}\right)\right)
    \end{align*}
    
    \fi    

    In particular, for $m = n/2$ and $n \geq 3$,
        $$\Pr[f(S)-\E[f(S)] \geq t] \leq \exp\left(-\frac{4t^2}{n\Delta^2}\right)$$
\end{theorem}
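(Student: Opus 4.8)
The plan is to reduce the statement to the bounded-differences concentration inequality for sampling without replacement in Lemma~2 of~\cite{Tolstikhin17}, and then to specialize the parameters. First I would realize the draw of $S$ as the first $m$ coordinates of a uniformly random permutation $(Z_1,\dots,Z_n)$ of the $n$ elements of $\cX$, setting $S=\{Z_1,\dots,Z_m\}$. Because $f$ is order invariant, $f(S)=g(Z_1,\dots,Z_m)$ for a symmetric function $g$, so $f(S)$ is a function of the first $m$ coordinates of the permutation, and we may study it through the Doob martingale that reveals $Z_1,Z_2,\dots$ in turn.

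Next I would verify the hypothesis of Lemma~2, namely that $g$ (hence $f$) has bounded differences with constant $\Delta$. Changing a single coordinate of an $m$-tuple $(z_1,\dots,z_m)$ to a different element of $\cX$ replaces the underlying dataset by a neighboring one (Hamming distance at most $1$), so by the assumed global sensitivity $\Delta$ the value of $f$ changes by at most $\Delta$; equivalently, in the permutation martingale, revealing $Z_i$ given the prefix $Z_1,\dots,Z_{i-1}$ only resolves which leftover element occupies slot $i$ (and thus whether it lies in $S$), and this moves $f$ by at most $\Delta$. With this constant in hand, Lemma~2 of~\cite{Tolstikhin17} yields, for all $t\ge 0$,
\[ \Pr_S\!\left[f(S)-\E[f(S)]\ge t\right] \le \exp\!\left(-\frac{2t^2}{m\Delta^2}\cdot\frac{n-1/2}{n-m}\cdot\left(1-\frac{1}{2\max(m,n-m)}\right)\right). \]

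Finally I would plug in $m=n/2$. Then $\frac{n-1/2}{n-m}=\frac{n-1/2}{n/2}=2-\frac1n$ and $\max(m,n-m)=n/2$, so the last factor is $1-\frac1n$, and the exponent becomes $-\frac{4t^2}{n\Delta^2}\,(2-\tfrac1n)(1-\tfrac1n)$. Since $(2-\tfrac1n)(1-\tfrac1n)$ is increasing in $n$ and already equals $\tfrac{10}{9}\ge1$ at $n=3$, it is $\ge1$ for every integer $n\ge3$; hence the exponent is at most $-\frac{4t^2}{n\Delta^2}$, which is the claimed ``in particular'' bound.

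The main obstacle is the reduction in the first two paragraphs: one must check carefully that the differential-privacy notion of global sensitivity (neighboring equal-size datasets differing in one record) supplies exactly the martingale-increment bound that Lemma~2 requires, using order invariance to move between the set-valued quantity $f(S)$ and a symmetric function of an ordered permutation prefix. A short coupling argument --- swap two coordinates of the uniform permutation and track how $f$ changes --- makes this precise. Everything after that is the cited lemma together with the elementary inequality $(2-\tfrac1n)(1-\tfrac1n)\ge1$ for $n\ge3$.
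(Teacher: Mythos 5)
Your proposal is correct and takes the same route as the paper, which itself offers no proof beyond the remark that the theorem "follows from Lemma~2 in~\cite{Tolstikhin17}"; your reduction (permutation realization of the without-replacement sample, order invariance plus global sensitivity giving the bounded-differences hypothesis) and your specialization $\bigl(2-\tfrac1n\bigr)\bigl(1-\tfrac1n\bigr)\ge\tfrac{10}{9}\ge 1$ for $n\ge3$ are both accurate. You simply supply the hypothesis-checking details that the paper leaves implicit.
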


Next we state some lemmas that are used in the proof of \Cref{thm:approx-dp-implies-max-info} and \Cref{cor:approx-dp-implies-max-info} which say that approximate-DP algorithms have bounded max-information (even in the case of sampling without replacement.)

\begin{definition}[Point-wise indistinguishibility \cite{KasiviswanathanS14}]
Two random variables $A,B$ are point-wise $(\eps, \delta)$-indistinguishable if with probability at least $1-\delta$ over $ a \sim P(A)$:
$$
e^{-\epsilon} \Prob{}{B = a} \leq \Prob{}{A = a} \leq e^{\epsilon} \Prob{}{B=a}.
$$
\end{definition}

\begin{lemma}[Indistinguishability implies Pointwise Indistinguishability \cite{KasiviswanathanS14}] \label{lem:prelims}
 Let $A, B$ be two random variables. If $A \approx_{\eps, \delta} B$ then $A$ and $B$ are pointwise $\left(2\eps, \frac{2\delta}{1-e^{-\eps}} \right) $-indistinguishable.
\end{lemma}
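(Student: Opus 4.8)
The plan is to run a standard group-privacy-style, self-referential argument. Since the pointwise guarantee only fails on outcomes where the likelihood ratio is too extreme in one direction or the other, define the two ``bad'' sets $S_{\uparrow} = \{a : \Pr[A=a] > e^{2\eps}\Pr[B=a]\}$ and $S_{\downarrow} = \{a : \Pr[A=a] < e^{-2\eps}\Pr[B=a]\}$ (with the convention that $\Pr[B=a]=0$ places $a$ in $S_{\uparrow}$). An outcome $a$ violates the pointwise $(2\eps,\cdot)$ condition iff $a \in S_{\uparrow}\cup S_{\downarrow}$, so it suffices to show $\Pr[A\in S_{\uparrow}] + \Pr[A\in S_{\downarrow}] \le \frac{2\delta}{1-e^{-\eps}}$, which by a union bound reduces to bounding each term by $\frac{\delta}{1-e^{-\eps}}$.

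For $S_{\uparrow}$: applying $(\eps,\delta)$-indistinguishability to the event $\{A\in S_{\uparrow}\}$ gives $\Pr[A\in S_{\uparrow}] \le e^{\eps}\Pr[B\in S_{\uparrow}] + \delta$, while by definition every $a\in S_{\uparrow}$ has $\Pr[B=a] < e^{-2\eps}\Pr[A=a]$, so summing over $a\in S_{\uparrow}$ yields $\Pr[B\in S_{\uparrow}] < e^{-2\eps}\Pr[A\in S_{\uparrow}]$ (trivial if $S_{\uparrow}=\emptyset$). Substituting, $\Pr[A\in S_{\uparrow}] \le e^{-\eps}\Pr[A\in S_{\uparrow}] + \delta$; since the same quantity appears on both sides, solving gives $\Pr[A\in S_{\uparrow}] \le \frac{\delta}{1-e^{-\eps}}$. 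For $S_{\downarrow}$ the roles of the two inequalities are swapped: indistinguishability gives $\Pr[B\in S_{\downarrow}] \le e^{\eps}\Pr[A\in S_{\downarrow}] + \delta$ and the definition of $S_{\downarrow}$ gives $\Pr[A\in S_{\downarrow}] < e^{-2\eps}\Pr[B\in S_{\downarrow}]$, so combining yields $\Pr[B\in S_{\downarrow}] \le e^{-\eps}\Pr[B\in S_{\downarrow}] + \delta$, hence $\Pr[B\in S_{\downarrow}]\le \frac{\delta}{1-e^{-\eps}}$ and therefore $\Pr[A\in S_{\downarrow}] < e^{-2\eps}\frac{\delta}{1-e^{-\eps}} \le \frac{\delta}{1-e^{-\eps}}$. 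Adding the two bounds gives the claimed failure probability $\frac{2\delta}{1-e^{-\eps}}$ with multiplicative slack $e^{2\eps}$.

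The argument is essentially routine, and there is no serious obstacle; the only points needing a little care are (i) the self-referential step, where one must notice that the probability being bounded is literally the same quantity on both sides of $\Pr[\cdot] \le e^{-\eps}\Pr[\cdot] + \delta$ so that it can be isolated, (ii) the asymmetry between the two cases — for $S_{\downarrow}$ one must first bound the mass under $B$ and only then transfer it back to $A$ — and (iii) keeping the whole argument in the discrete setting matching the stated definition of pointwise indistinguishability (a measure-theoretic version would simply replace point masses with Radon--Nikodym derivatives, but that is not needed here).
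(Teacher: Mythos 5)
Your proof is correct. The paper does not actually prove this lemma --- it only cites it from Kasiviswanathan--Smith --- and your self-referential argument (bounding the mass of each one-sided bad set by applying the $(\eps,\delta)$ guarantee to that set, using the defining ratio bound to get $\Pr[\cdot] \le e^{-\eps}\Pr[\cdot] + \delta$, and solving) is exactly the standard proof from the cited source, including the asymmetric handling of $S_{\downarrow}$ via the mass under $B$.
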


\begin{lemma}[Conditioning Lemma \cite{KasiviswanathanS14}]\label{lem:conditioning}
Suppose that $(A,B) \approx_{\eps, \delta} (A',B')$.  Then for every $\hat \delta>0$, the following holds:
$$
\Prob{t \sim P(B) }{ A|_{B = t} \approx_{3\epsilon, \hat\delta} A'|_{B'=t} } \geq 1-\frac{2\delta}{\hat\delta} - \frac{2\delta}{1-e^{-\eps}}.
$$
\end{lemma}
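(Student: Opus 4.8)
The plan is to pass from indistinguishability of the joint distributions to pointwise (density-ratio) control, and then to combine that control at the marginal and conditional levels. Write $p,q$ for the joint probability mass functions of $(A,B)$ and $(A',B')$, and $p_B,q_B$ for the marginals of $B$ and $B'$. The starting point is that the conditional density ratio factors:
\[
\frac{\Pr[A=a\mid B=t]}{\Pr[A'=a\mid B'=t]}=\frac{p(a,t)}{q(a,t)}\cdot\frac{q_B(t)}{p_B(t)}.
\]
So establishing $A|_{B=t}\approx_{3\eps,\hat\delta}A'|_{B'=t}$ reduces to controlling the joint ratio $p(a,t)/q(a,t)$ and the reciprocal marginal ratio $q_B(t)/p_B(t)$ separately, and pairing the bounds $e^{\eps}$ and $e^{2\eps}$ so that they compose to $e^{3\eps}$.

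First I would rule out the ``atypical marginal'' values of $t$. Since $B$ and $B'$ are coordinate projections of $(\eps,\delta)$-indistinguishable distributions, $B\approx_{\eps,\delta}B'$; applying \Cref{lem:prelims} to the pair $(B,B')$ shows they are pointwise $(2\eps,\tfrac{2\delta}{1-e^{-\eps}})$-indistinguishable, that is, with probability at least $1-\tfrac{2\delta}{1-e^{-\eps}}$ over $t\sim P(B)$ one has $e^{-2\eps}\le p_B(t)/q_B(t)\le e^{2\eps}$. This accounts for the second error term, and from here I condition on $t$ lying in this ``marginal-good'' set. Writing $s_1(t):=\sum_a\bigl(p(a,t)-e^{\eps}q(a,t)\bigr)_{+}$ and $s_2(t):=\sum_a\bigl(q(a,t)-e^{\eps}p(a,t)\bigr)_{+}$, the factorization above shows that for a marginal-good $t$ the excess mass of $A|_{B=t}$ over $e^{3\eps}\cdot A'|_{B'=t}$ is at most $s_1(t)/p_B(t)$, and symmetrically the excess mass in the opposite direction is at most $s_2(t)/q_B(t)$; moreover the hypothesis $(A,B)\approx_{\eps,\delta}(A',B')$, evaluated on the canonical worst-case event, gives exactly $\sum_t s_1(t)\le\delta$ and $\sum_t s_2(t)\le\delta$.

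It then remains to show that, for $t$ drawn from $P(B)$, the quantities $s_1(t)/p_B(t)$ and $s_2(t)/q_B(t)$ exceed $\hat\delta$ only with small probability. For the first, a Markov-type bound gives $\Pr_{t\sim P(B)}[\,s_1(t)>\hat\delta\,p_B(t)\,]<\tfrac{1}{\hat\delta}\sum_t s_1(t)\le\tfrac{\delta}{\hat\delta}$. For the second the denominator is $q_B(t)$ rather than $p_B(t)$, so one first uses marginal-goodness to replace $q_B(t)$ by a bounded multiple of $p_B(t)$ and then applies the same Markov bound; a union bound over the two directions together with the marginal-good bound yields the claimed probability at least $1-\tfrac{2\delta}{\hat\delta}-\tfrac{2\delta}{1-e^{-\eps}}$. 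The main obstacle is precisely this asymmetry: the conclusion conditions on $t$ and averages over $t\sim P(B)$, whereas the ``opposite'' direction of conditional indistinguishability is naturally weighted by $P(B')$; reconciling the two---while arranging the exponents so that the joint factor $e^{\eps}$ and the marginal factor $e^{2\eps}$ compose to exactly $e^{3\eps}$ and the two $\delta$-error terms come out in the stated form---is where the real care is needed.
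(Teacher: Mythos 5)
First, a point of reference: the paper does not prove this lemma at all---it is imported verbatim from the cited work of Kasiviswanathan and Smith---so there is no in-paper proof to compare against. Judged on its own terms, your outline follows the standard architecture for this statement: factor the conditional density ratio into a joint ratio times a marginal ratio, dispose of the atypical marginals via \Cref{lem:prelims} applied to $(B,B')$ (which correctly produces the $\tfrac{2\delta}{1-e^{-\eps}}$ term), and control the joint part via a Markov bound on the one-sided excess masses $s_1(t),s_2(t)$, whose totals are indeed at most $\delta$ each. The first direction closes exactly as you say: $\Pr_{t\sim P(B)}[\,s_1(t)>\hat\delta\,p_B(t)\,]<\tfrac{1}{\hat\delta}\sum_t s_1(t)\le\tfrac{\delta}{\hat\delta}$.

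The gap is in the second direction, at precisely the spot you flag as ``where the real care is needed'' but then leave unresolved. There the relevant quantity is $s_2(t)/q_B(t)$, while the probability is taken over $t\sim P(B)$, i.e., weighted by $p_B(t)$. Your proposed fix---replace $q_B(t)$ by a bounded multiple of $p_B(t)$ on the marginal-good set and rerun Markov---does not give $\delta/\hat\delta$: marginal-goodness only guarantees $p_B(t)\le e^{2\eps}q_B(t)$, so
$$
\Pr_{t\sim P(B)}\left[\,s_2(t)>\hat\delta\,q_B(t),\ t\ \text{good}\,\right]\;\le\;\sum_{t\ \text{good}\,:\ s_2(t)>\hat\delta q_B(t)}e^{2\eps}q_B(t)\;<\;\frac{e^{2\eps}\delta}{\hat\delta},
$$
and the union bound then yields $1-\tfrac{(1+e^{2\eps})\delta}{\hat\delta}-\tfrac{2\delta}{1-e^{-\eps}}$ rather than the stated $1-\tfrac{2\delta}{\hat\delta}-\tfrac{2\delta}{1-e^{-\eps}}$. (Changing measure via $B\approx_{\eps,\delta}B'$ instead gives $e^{\eps}\delta/\hat\delta+\delta$, which is no better.) So as written your argument proves a quantitatively weaker lemma; recovering the constant $2$ requires a genuinely different treatment of the second direction (as in the cited source), or else the lemma should be restated with the weaker constant---which, for what it is worth, would only perturb the explicit constants tracked downstream in \Cref{claim:Gi} and \Cref{thm:approx-dp-implies-max-info}, not the structure of those results.
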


\begin{theorem}[Azuma's Inequality]
Let $C_1, \cdots, C_n$ be a sequence of random variables such that for every $i \in [n]$, we have
$$
\Prob{}{|C_i| \leq \alpha} = 1
$$
and for every fixed prefix $\mathbf{C}_1^{i-1} = \mathbf{c}_1^{i-1}$, we have
$$
\Ex{}{C_i|\mathbf{c}_1^{i-1}} \leq \gamma,
$$
then for all $t\geq 0$, we have
$$
\Prob{}{\sum_{i=1}^n C_i > n \gamma + t \sqrt{n} \alpha} \leq e^{-t^2/2}.
$$
\label{thm:azuma}
\end{theorem}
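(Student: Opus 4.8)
The plan is to prove this by the exponential-moment (Chernoff) method applied to the martingale-difference sequence obtained by centering each $C_i$ around its conditional mean. For $t = 0$ the claimed bound is $1$, so assume $t > 0$.

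First I would set $D_i = C_i - \E[C_i \mid \mathbf{C}_1^{i-1}]$, so that $\E[D_i \mid \mathbf{C}_1^{i-1}] = 0$; thus $(D_i)$ is a martingale difference sequence for the natural filtration. Since $C_i \in [-\alpha,\alpha]$ almost surely and $\E[C_i \mid \mathbf{C}_1^{i-1}] \in [-\alpha,\alpha]$, conditioned on any realization of the prefix the variable $D_i$ lies in an interval of width $2\alpha$ with conditional mean $0$, so Hoeffding's lemma gives $\E[e^{\lambda D_i} \mid \mathbf{C}_1^{i-1}] \le e^{\lambda^2 \alpha^2/2}$ for every $\lambda > 0$. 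Next I would reduce the upper tail of $\sum_i C_i$ to that of $\sum_i D_i$: since $\E[C_i \mid \mathbf{C}_1^{i-1}] \le \gamma$ for every $i$, on the event $\{\sum_i C_i > n\gamma + t\sqrt{n}\,\alpha\}$ we have $\sum_i D_i = \sum_i C_i - \sum_i \E[C_i \mid \mathbf{C}_1^{i-1}] > t\sqrt{n}\,\alpha$, so it suffices to bound $\Pr[\sum_i D_i > t\sqrt{n}\,\alpha]$. Applying Markov's inequality to $e^{\lambda \sum_i D_i}$ and peeling off the innermost factor using the tower property, $\E[e^{\lambda \sum_{i\le k} D_i}] \le e^{\lambda^2\alpha^2/2}\,\E[e^{\lambda \sum_{i\le k-1} D_i}]$, so by induction $\E[e^{\lambda \sum_i D_i}] \le e^{n\lambda^2\alpha^2/2}$, giving $\Pr[\sum_i D_i > s] \le e^{-\lambda s + n\lambda^2\alpha^2/2}$ for all $\lambda > 0$. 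Optimizing with $\lambda = s/(n\alpha^2)$ yields $\Pr[\sum_i D_i > s] \le e^{-s^2/(2n\alpha^2)}$, and taking $s = t\sqrt{n}\,\alpha$ produces exactly $e^{-t^2/2}$.

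There is no serious obstacle here; this is essentially the textbook Azuma--Hoeffding argument. The two points that need care are (i) the one-sided reduction from $\sum_i C_i$ to $\sum_i D_i$, which uses only the upper bound $\E[C_i \mid \cdot] \le \gamma$ (a more negative conditional mean only strengthens the event inclusion), and (ii) tracking the constant in Hoeffding's lemma with the containing-interval width $2\alpha$ rather than $\alpha$, which is precisely what makes the exponent come out to $t^2/2$ after optimizing over $\lambda$; everything else is a routine telescoping of conditional expectations via the tower property.
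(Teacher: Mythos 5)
Your proof is correct: this is the standard Azuma--Hoeffding argument (center each $C_i$ at its conditional mean, apply Hoeffding's lemma conditionally on the prefix, telescope the moment generating function via the tower property, and optimize the Chernoff parameter), and the constants work out exactly as you compute, with the one-sided reduction using only $\E[C_i \mid \mathbf{C}_1^{i-1}] \le \gamma$ and the width-$2\alpha$ containing interval being precisely the right points of care. The paper states this inequality as a known preliminary imported from the max-information literature and gives no proof of it, so there is no in-paper argument to compare against.
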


% The bounds on max-information of differentially private algorithms shown in prior work apply only to the case where the datasets consists of i.i.d. samples. This means that the results only apply if the dataset is drawn with replacement. Since most real world datasets are not drawn with replacement, we extend the results to the case where datasets are drawn without replacement. 

\subsection{Pure-DP $\implies$ Bounded Max-Information}

In this appendix we state \Cref{thm:pure-dp-implies-maxinfo}, which is an analog of Theorem~ from \cite{DworkFHPRR15}. The proof of this theorem works exactly as in \cite{DworkFHPRR15}, except replacing the application of McDiarmid's Lemma with a version of McDiarmid's for sampling without replacement (\cref{thm:mcdiarmids_without-our-version}) which we state in \Cref{sec:prelims-appendix}.

\puredpmaxinfo

\subsection{$(\eps,\delta)$-DP $\implies$ Bounded Max-Information}

In this appendix we prove \Cref{thm:approx-dp-implies-max-info}, which is an analog of Theorem~1 from Rogers et al.\cite{RogersRST16}. In fact, the following proof is almost exactly the proof of them from \cite{RogersRST16} with the following differences: (1) we compute all the constants exactly and avoid using asymptotic notation, and (2) we keep the tunable parameters in the final version of the theorem to obtain the most flexible result that we can. Finally, we set the parameters in \Cref{thm:approx-dp-implies-max-info} to get \Cref{cor:approx-dp-implies-max-info}, which is used in the proof of \Cref{thm:approx-dp-implies-coherence-enforcement}.

\approxmaxinfogeneral

\approxmaxinfo

\newcommand{\uglyterm}{342\frac{\hat{\delta}}{\eps} + 112\frac{\hat{\delta}^2}{\eps^2} + 25\frac{\hat{\delta}^2}{\eps} + 240\eps^2}

We will sometimes abbreviate
conditional probabilities of the form $\Prob{}{\bX=\bbx\mid\cA = a}$ as
$\Prob{}{\bX=\bbx\mid a}$ when the random variables are clear from
context.  Further, for any $\bbx \in \cX^n$ and $ a \in \cY$, we define

\begin{equation}
Z_i(a,\bbx_{[i]}) \defeq 
\log\dfrac{\Prob{}{X_i = x_i \mid  a, \bbx_{[i-1]}}}{\Prob{}{X_i = x_i\mid  \bbx_{[i-1]}}}. \label{eqn:Z_i}
\end{equation}

\begin{align*}
Z(a,\bbx) & \defeq 
\log \left( \dfrac
{\Prob{\bbx}{\cA(x)=a, \bX=\bbx}}
{\Prob{}{\cA = a} \cdot \Prob{}{\bX=\bbx}}
\right) \\
 & = \sum\limits_{i=1}^n Z_i(a,\bbx_{[i]}) \numberthis \label{eqn:ln_sum}
\end{align*}

If we can bound $Z(a,\bbx)$ with high probability over $(a,\bbx) \sim p(\cA(\bX),\bX)$, then we can bound the approximate max-information by using the following lemma:

\begin{lemma}[{\cite[Lemma 18]{DworkFHPRR15}}]
\ifnum\usenix=0
$$\Prob{}{\log \left( \dfrac
{\Prob{\bbx}{\cA(x)=a, \bX=\bbx}}
{\Prob{}{\cA = a} \cdot \Prob{}{\bX=\bbx}}
\right) \geq k} \leq \beta \,\,\implies\, I_\infty^\beta(\cA(\bX);\bX) \leq k.$$
\else
\begin{align*}
\Pr&[\log \left( \dfrac
{\Pr[\cA(x)=a, \bX=\bbx]}
{\Pr[\cA = a] \cdot \Pr[\bX=\bbx]}
\right) \geq k] \\
& \leq \beta \,\,\implies\, I_\infty^\beta(\cA(\bX);\bX) \leq k.
\end{align*}
\fi
\label{lem:boundmaxinfo}
\end{lemma}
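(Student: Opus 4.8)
\textbf{Proof proposal for \Cref{lem:boundmaxinfo}.} The plan is to unwind the definition of $\beta$-approximate max-information and check the defining inequality event by event, splitting each event according to whether the pointwise log-ratio $Z(a,\bbx)=\log\bigl(\Pr[\cA(\bX)=a,\,\bX=\bbx]\,/\,(\Pr[\cA(\bX)=a]\cdot\Pr[\bX=\bbx])\bigr)$ introduced above is at least $k$ or strictly below $k$. Concretely, by the definition of $I_\infty^\beta$, proving $I_\infty^\beta(\cA(\bX);\bX)\le k$ reduces to showing that every event $T\subseteq \cY\times\cX^n$ with $\Pr[(\cA(\bX),\bX)\in T]>\beta$ satisfies $\Pr[(\cA(\bX),\bX)\in T]-\beta\le e^k\,\Pr[\cA(\bX)\otimes\bX\in T]$, where $\cA(\bX)\otimes\bX$ denotes an independent pair drawn from the two marginals. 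The hypothesis of the lemma is exactly that $\Pr_{(a,\bbx)\sim p(\cA(\bX),\bX)}[Z(a,\bbx)\ge k]\le\beta$; note $Z$ is well defined on the support of the joint law, since a positive joint probability forces both marginals to be positive (off the support one may read $Z=-\infty$, which is harmless).

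First I would fix such an event $T$ and write it as a disjoint union $T=T_{<}\sqcup T_{\ge}$, where $T_{<}=\{(a,\bbx)\in T: Z(a,\bbx)<k\}$ and $T_{\ge}=\{(a,\bbx)\in T: Z(a,\bbx)\ge k\}$. On the $T_{\ge}$ piece the hypothesis gives directly $\Pr[(\cA(\bX),\bX)\in T_{\ge}]\le\Pr[Z(a,\bbx)\ge k]\le\beta$. On the $T_{<}$ piece, rearranging $Z(a,\bbx)<k$ yields the pointwise bound $\Pr[\cA(\bX)=a,\,\bX=\bbx]< e^k\,\Pr[\cA(\bX)=a]\,\Pr[\bX=\bbx]$, so summing over $(a,\bbx)\in T_{<}$ gives $\Pr[(\cA(\bX),\bX)\in T_{<}]\le e^k\,\Pr[\cA(\bX)\otimes\bX\in T_{<}]\le e^k\,\Pr[\cA(\bX)\otimes\bX\in T]$. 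Adding the two estimates gives $\Pr[(\cA(\bX),\bX)\in T]\le e^k\,\Pr[\cA(\bX)\otimes\bX\in T]+\beta$, which is precisely the bound we wanted. Taking the supremum over all admissible $T$ and applying $\ln(\cdot)$ then yields the claim.

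I do not anticipate a genuine obstacle here: this is the standard argument of \cite[Lemma~18]{DworkFHPRR15}, reproduced only to pin down the notation ($Z_i$, $Z$) carried into the proof of \Cref{thm:approx-dp-implies-max-info}. The only points needing a moment of care are bookkeeping ones --- degenerate pairs $(a,\bbx)$ of zero joint probability contribute nothing to either side, and the supremum in the definition of $I_\infty^\beta$ already restricts attention to events with $\Pr[(\cA(\bX),\bX)\in T]>\beta$, so subtracting $\beta$ and comparing against $e^k\,\Pr[\cA(\bX)\otimes\bX\in T]$ is legitimate throughout.
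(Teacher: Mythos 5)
Your proposal is correct and is exactly the standard argument: the paper itself cites this as \cite[Lemma 18]{DworkFHPRR15} without reproving it, and your decomposition of an arbitrary event $T$ into the sub-event where $Z \ge k$ (mass at most $\beta$ by hypothesis) and the sub-event where $Z < k$ (where the pointwise ratio bound sums to $e^k \Pr[\cA(\bX)\otimes\bX \in T]$) is precisely how the cited lemma is proved. The only cosmetic caveat is the paper's inconsistent use of $\log$ versus $\ln$; your argument goes through verbatim once the base of the logarithm is matched to the exponential in the definition of $I_\infty^\beta$.
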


To bound $Z(a,\bbx)$ with high probability over $(a,\bbx) \sim p(\cA(\bX),\bX)$ we will apply Azuma's inequality (\Cref{thm:azuma}) to the sum of the $Z_i(a,\bbx_{[i]})$'s. For this we must first argue that each $Z_i(a,\bbx_{[i]})$ term is bounded with high probability:
\begin{claim}\label{claim:zibound}
Let $\hat\delta>0$, and $\delta'' \defeq \tfrac{2\hat\delta}{1-e^{-3\eps}}$.
If $\cA$ is $(\eps,\delta)$-differentially private and, $\bX \in \univ^n$ is sampled without replacement from a finite universe $\univ$, then for each $i \in [n]$, and each prefix $\bbx_{[i-1]}\in \cX^{i-1}$ and answer $a$, we have:
    $$
\Prob{x_i \sim X_i |_{\bbx_{[i-1]}}}
{\log\dfrac{\Prob{}{X_i = x_i \mid a, \bbx_{[i-1]}}}{\Prob{}{X_i = x_i\mid \bbx_{[i-1]}}} \leq 6 \eps } \geq 1 - \delta''
$$
\label{claim:Fi}
\end{claim}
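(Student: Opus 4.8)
\noindent The plan is to follow the template of Rogers et al.~\cite{RogersRST16} (building on \cite{DworkFHPRR15}), with the single substantive change that the coordinates of $\bX$ form a sample drawn \emph{without replacement}; this is the per-coordinate tail bound that, after truncation and Azuma's inequality, will yield \Cref{thm:approx-dp-implies-max-info}. First I would rewrite the quantity of interest using Bayes' rule: for any fixed prefix $\bbx_{[i-1]}$, answer $a$, and value $x_i$,
\[
\log\frac{\Pr[X_i = x_i \mid a, \bbx_{[i-1]}]}{\Pr[X_i = x_i \mid \bbx_{[i-1]}]} \;=\; \log\frac{\Pr[\cA(\bX) = a \mid X_i = x_i, \bbx_{[i-1]}]}{\Pr[\cA(\bX) = a \mid \bbx_{[i-1]}]}.
\]
So it suffices to control, for most $x_i$ drawn from the prior $X_i\mid\bbx_{[i-1]}$, how far the output law conditioned additionally on $X_i = x_i$ can drift from the output law conditioned on the prefix alone.

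The heart of the argument is a differential-privacy step, and this is where sampling without replacement bites. Fix the prefix and two candidate values $x_i, x_i'$ for coordinate $i$. Conditioned on $\bbx_{[i-1]}$, the completion $(X_{i+1},\dots,X_n)$ given $X_i = x_i$ is a uniform ordered sample of size $n-i$ without replacement from the pool $P \setminus \{x_i\}$, where $P = \univ \setminus \{x_1,\dots,x_{i-1}\}$; given $X_i = x_i'$ it is drawn from $P \setminus \{x_i'\}$. These two pools differ by a single element swap, so I would couple the two completion distributions in the standard way (condition on whether the sample stays inside $Q := P \setminus \{x_i,x_i'\}$ or picks up the extra element; match probabilities so that one completion picks up $x_i'$ exactly where the other picks up $x_i$). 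Under this coupling, the two full datasets $(x_1,\dots,x_{i-1},x_i,\dots)$ and $(x_1,\dots,x_{i-1},x_i',\dots)$ are \emph{either} permutations of one another \emph{or} differ in exactly one record. The order-invariance hypothesis of \Cref{thm:approx-dp-implies-max-info} kills the first case, and $(\eps,\delta)$-differential privacy handles the second; averaging over the coupled completion and then over $x_i' \sim X_i\mid\bbx_{[i-1]}$ gives
\[
\cA(\bX)\mid (X_i = x_i,\bbx_{[i-1]}) \;\approx_{\eps,\delta}\; \cA(\bX)\mid \bbx_{[i-1]}
\]
for every $x_i$ in the support, and hence that the joint law $(\cA(\bX), X_i)\mid\bbx_{[i-1]}$ is $(\eps,\delta)$-indistinguishable from the product of its marginals.

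From here the argument is a two-step distillation. Applying the Conditioning Lemma (\Cref{lem:conditioning}) to this joint-versus-product indistinguishability, conditioning on the output coordinate, shows that for all but a small-probability set of answers $a$ the posterior $X_i\mid(a,\bbx_{[i-1]})$ is $(3\eps,\hat\delta)$-indistinguishable from the prior $X_i\mid\bbx_{[i-1]}$; then applying pointwise indistinguishability (\Cref{lem:prelims}) to this pair shows that, with probability at least $1 - \tfrac{2\hat\delta}{1-e^{-3\eps}} = 1-\delta''$ over $x_i$ drawn from the prior $X_i\mid\bbx_{[i-1]}$, the likelihood ratio $\Pr[X_i = x_i\mid a,\bbx_{[i-1]}]/\Pr[X_i = x_i\mid\bbx_{[i-1]}]$ lies in $[e^{-6\eps}, e^{6\eps}]$. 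Taking the upper bound is exactly $Z_i(a,\bbx_{[i]}) \le 6\eps$ with the claimed failure probability, and the extra failure terms produced by the Conditioning Lemma are charged to the global $\beta$ of \Cref{thm:approx-dp-implies-max-info} rather than to $\delta''$.

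I expect the main obstacle to be the coupling step: in the i.i.d.\ setting the completion is literally independent of $X_i$, so the single-record-change bound is immediate, whereas without replacement one must build the swap coupling and then invoke order-invariance to discard the reordered datasets — this is precisely why order-invariance is hypothesized. A secondary bookkeeping point, in keeping with the paper's aim of tracking constants, is to verify that the composition $(\eps,\delta) \xrightarrow{\text{Cond.\ Lemma}} (3\eps,\hat\delta) \xrightarrow{\text{pointwise}} (6\eps,\, 2\hat\delta/(1-e^{-3\eps}))$ really lands on the stated parameters.
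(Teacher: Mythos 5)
Your proposal is correct and follows essentially the same route as the paper: the decisive step---applying \Cref{lem:prelims} to the pair $X_i|_{\bbx_{[i-1]}} \approx_{3\eps,\hat\delta} X_i|_{a,\bbx_{[i-1]}}$ to obtain pointwise $(6\eps,\delta'')$-indistinguishability---is the paper's entire proof of this claim, which is stated there only conditionally on that $(3\eps,\hat\delta)$-indistinguishability holding. The extra machinery you supply (the swap coupling for sampling without replacement, the joint-versus-product DP step using order-invariance, and the Conditioning Lemma) is exactly what the paper defers to the proof of \Cref{claim:Gi}, and you are right that the statement as written (``for each answer $a$'') really only holds for the good answers, with the remaining failure mass charged to the global $\beta$.
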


\begin{proof}
    Whenever $X_i|_{\bbx_{[i-1]}}$ and $X_i|_{a,\bbx_{[i-1]}}$ are $\left(3 \epsilon, \hat{\delta} \right)$-indistinguishable, \Cref{lem:prelims} tells us that $X_i|_{\bbx_{[i-1]}}$ and $X_i|_{a,\bbx_{[i-1]}}$ are point-wise $\left(6 \epsilon, \delta'' \right)$-indistinguishable. \ie given that $X_i|_{\bbx_{[i-1]}}$ and $X_i|_{a,\bbx_{[i-1]}}$ are $\left(3 \epsilon, \hat{\delta} \right)$-indistinguishable, we have that 
$$
\Prob{x_i \sim X_i |_{\bbx_{[i-1]}}}
{\log\dfrac{\Prob{}{X_i = x_i \mid a, \bbx_{[i-1]}}}{\Prob{}{X_i = x_i\mid \bbx_{[i-1]}}} \leq 6 \eps } \geq 1 - \delta''
$$
\end{proof}

\begin{claim}
Let $\hat\delta>0$, $\delta' \defeq
\tfrac{2\delta}{\hat\delta} + \tfrac{2\delta}{1-e^{-\eps}}$, and $\delta'' \defeq \tfrac{2\hat\delta}{1-e^{-3\eps}}$.
If $\cA$ is $(\eps,\delta)$-differentially private and, $\bX \in \univ^n$ is sampled without replacement from a finite universe $\univ$ , then for each $i \in [n]$, and each prefix $\bbx_{[i-1]}\in \cX^{i-1}$ we have:
    $$
    \Prob{\substack{x_i \sim X_i |_{\bbx_{[i-1]}}\\
    a \sim \cA|_{\bbx_{[i-1]}}}}
    {\log\dfrac{\Prob{}{X_i = x_i \mid  a, \bbx_{[i-1]}}}{\Prob{}{X_i = x_i\mid  \bbx_{[i-1]}}} \leq 6 \eps } \geq 1- \delta' - \delta''
    $$
\label{claim:Gi}
\end{claim}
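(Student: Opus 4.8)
\medskip\noindent\textbf{Proof proposal for \Cref{claim:Gi}.}
The plan is to obtain \Cref{claim:Gi} by feeding an ``indistinguishability of conditionals'' statement into the Conditioning Lemma (\Cref{lem:conditioning}) and combining its output with \Cref{claim:zibound}. First I would establish that, conditioned on the prefix $\bbx_{[i-1]}$, the joint law of $(X_i,\cA(\bX))$ is $(\eps,\delta)$-close to the product of its marginals:
\[
(X_i,\cA(\bX))|_{\bbx_{[i-1]}} \;\approx_{\eps,\delta}\; X_i|_{\bbx_{[i-1]}}\otimes\cA(\bX)|_{\bbx_{[i-1]}}.
\]
Since the event-wise inequality for a set $T\subseteq\univ\times\cY$, for each fixed value $x_i$, reduces to $\cA(\bX)|_{X_i=x_i,\bbx_{[i-1]}}\approx_{\eps,\delta}\cA(\bX)|_{\bbx_{[i-1]}}$ and then averages over $x_i$ (weighting by $\Prob{}{X_i=x_i\mid\bbx_{[i-1]}}$), it suffices to prove the latter indistinguishability for every $x_i$ compatible with the prefix.

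Proving that statement in the sampling-\emph{without}-replacement regime is the key step, and the only place where differential privacy and order-invariance of $\cA$ are used. I would exhibit a coupling between a uniform size-$(n-i)$ sample $T'$ drawn without replacement from $\univ\setminus\{x_1,\ldots,x_{i-1},x_i\}$ (the suffix of $\bX$ when $X_i=x_i$ is fixed) and a uniform size-$(n-i+1)$ sample $T$ drawn without replacement from $\univ\setminus\{x_1,\ldots,x_{i-1}\}$ (the suffix when $X_i$ is free) with the property that, with probability $1$, the datasets $(\bbx_{[i-1]},x_i,T')$ and $(\bbx_{[i-1]},T)$ are either the same multiset (precisely when $x_i\in T$, so order-invariance of $\cA$ makes the two outputs identically distributed) or differ in exactly one record (when $x_i\notin T$, so they are neighbors and $(\eps,\delta)$-DP applies). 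Because $(\eps,\delta)$-indistinguishability survives any coupling that always delivers identical-or-neighboring inputs, this gives $\cA(\bX)|_{X_i=x_i,\bbx_{[i-1]}}\approx_{\eps,\delta}\cA(\bX)|_{\bbx_{[i-1]}}$. The main obstacle is checking that this coupling has the correct marginals (insert $x_i$ into $T'$ at a uniform position with probability $(n-i+1)/|\univ\setminus\{x_1,\ldots,x_{i-1}\}|$, and otherwise append a fresh element to $T'$) and separately handling the boundary case $|\univ|=n$, where the remaining multiset is already determined and the joint equals the product outright.

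Granting the joint-versus-product bound, I would apply \Cref{lem:conditioning} with $(A,B)=(X_i,\cA(\bX))|_{\bbx_{[i-1]}}$ and $(A',B')=X_i|_{\bbx_{[i-1]}}\otimes\cA(\bX)|_{\bbx_{[i-1]}}$. Since $A'$ and $B'$ are independent, conditioning $A'$ on $B'=a$ leaves it equal to $X_i|_{\bbx_{[i-1]}}$, so the lemma's conclusion becomes
\[
\Prob{a\sim\cA(\bX)|_{\bbx_{[i-1]}}}{X_i|_{a,\bbx_{[i-1]}}\approx_{3\eps,\hat\delta}X_i|_{\bbx_{[i-1]}}}\;\geq\;1-\tfrac{2\delta}{\hat\delta}-\tfrac{2\delta}{1-e^{-\eps}}\;=\;1-\delta'.
\]
Finally I would union-bound this against \Cref{claim:zibound}: on the event $X_i|_{a,\bbx_{[i-1]}}\approx_{3\eps,\hat\delta}X_i|_{\bbx_{[i-1]}}$, \Cref{lem:prelims} makes $X_i|_{a,\bbx_{[i-1]}}$ and $X_i|_{\bbx_{[i-1]}}$ $(6\eps,\delta'')$-pointwise indistinguishable, i.e. the log-ratio in the claim exceeds $6\eps$ with probability at most $\delta''$ over $x_i\sim X_i|_{a,\bbx_{[i-1]}}$. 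Summing the two failure probabilities $\delta'$ and $\delta''$ yields the asserted $1-\delta'-\delta''$ bound over the pair $(x_i,a)$.
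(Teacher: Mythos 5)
Your proposal follows essentially the same route as the paper: establish $(X_i,\cA)|_{\bbx_{[i-1]}}\approx_{\eps,\delta}X_i|_{\bbx_{[i-1]}}\otimes\cA|_{\bbx_{[i-1]}}$ from differential privacy and order-invariance, feed it into the Conditioning Lemma to get the $\delta'$ term, and union-bound with Claim~\ref{claim:zibound} to add the $\delta''$ term. The only difference is one of detail: you spell out the identical-or-neighboring coupling justifying the single DP step (which the paper only gestures at in a footnote), and your marginal checks for that coupling are correct.
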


\begin{proof}
For this proof, we use \Cref{claim:Fi} and then show for each $i \in [n]$, and prefix $\bbx_{[i-1]}\in \cX^{i-1}$,
$$
   \Prob{a \sim p\left( \cA|_{\bbx_{[i-1]}}\right) }{X_i|_{\bbx_{[i-1]}} \approx_{3\eps,\hat{\delta}} X_i|_{a, \bbx_{[i-1]}}} \geq 1-\delta'. 
$$

We use the differential privacy guarantee on $\cA$ to show that $(\cA,X_i)|_{\bbx_{[i-1]}} \approx_{\eps,\delta} \cA|_{\bbx_{[i-1]}} \otimes \pj{X_i|_{\bbx_{[i-1]}}}$. The above equation then follows directly from the conditioning lemma \Cref{lem:conditioning}.

Fix any set $\cO \subseteq \cY \times \cX$ and prefix $\bbx_{[i-1]} \in \cX^{i-1}$. From the differential privacy of $\cA$, and the order-invariance of the algorithm, we get the following (where the first inequality follows from DP.):
\ifnum\usenix=0
\begin{align*}
& \prob{(\cA(\bX),X_i )  \in\cO  \mid \bbx_{[i-1]}}\\
&= \sum_{x_i \sim X_i |_{\bbx_{[i-1]}}}\prob{X_i = x_i \mid \bbx_{[i-1]}} \prob{(\cA(\bX),x_i ) \in \cO  \mid \bbx_{[i-1]}, x_i} \\
&\leq \sum_{x_i \sim X_i |_{\bbx_{[i-1]}}}\prob{X_i = x_i\mid \bbx_{[i-1]}}\left( e^{\eps} \prob{(\cA(\bX),x_i ) \in \cO \mid \bbx_{[i-1]}, t_i} + \delta \right)
\qquad \forall t_i \sim X_i |_{\bbx_{[i-1]}}\\
&= \sum_{x_i,t_i \sim X_i |_{\bbx_{[i-1]}}}\prob{X_i = t_i \mid \bbx_{[i-1]}}\prob{X_i = x_i\mid \bbx_{[i-1]}}\left( e^{\eps} \prob{(\cA(\bX),x_i ) \in \cO \mid \bbx_{[i-1]}, t_i} + \delta \right)
\\
&= \sum_{x_i \sim X_i |_{\bbx_{[i-1]}}}\prob{X_i = x_i\mid \bbx_{[i-1]}}\left( e^{\eps} \prob{(\cA(\bX),x_i ) \in \cO \mid \bbx_{[i-1]}} + \delta \right)\\
&\leq e^{\eps}\left(\sum_{x_i \sim X_i |_{\bbx_{[i-1]}}} \prob{X_i = x_i\mid \bbx_{[i-1]}}\prob{\cA(\bX),X_i ) \in \cO \mid \bbx_{[i-1]}}\right) + \delta\\
&= e^{\eps}\prob{\cA(\bX) \otimes X_i   \in \cO  \mid \bbx_{[i-1]}} + \delta
\end{align*}
\else
\begin{align*}
& \prob{(\cA(\bX),X_i )  \in\cO  \mid \bbx_{[i-1]}}\\
&= \sum_{x_i \sim X_i |_{\bbx_{[i-1]}}}\prob{X_i = x_i \mid \bbx_{[i-1]}} \prob{(\cA(\bX),x_i ) \in \cO  \mid \bbx_{[i-1]}, x_i} \\
&\leq \sum_{x_i \sim X_i |_{\bbx_{[i-1]}}}\prob{X_i = x_i\mid \bbx_{[i-1]}} \\
& \qquad \qquad \cdot \left( e^{\eps} \prob{(\cA(\bX),x_i ) \in \cO \mid \bbx_{[i-1]}, t_i} + \delta \right)
\quad \forall t_i \sim X_i |_{\bbx_{[i-1]}}\\
&= \sum_{x_i,t_i \sim X_i |_{\bbx_{[i-1]}}}\prob{X_i = t_i \mid \bbx_{[i-1]}}\prob{X_i = x_i\mid \bbx_{[i-1]}}\\
& \qquad \qquad \cdot \left( e^{\eps} \prob{(\cA(\bX),x_i ) \in \cO \mid \bbx_{[i-1]}, t_i} + \delta \right)
\\
&= \sum_{x_i \sim X_i |_{\bbx_{[i-1]}}}\prob{X_i = x_i\mid \bbx_{[i-1]}}\\
& \qquad \qquad \left( e^{\eps} \prob{(\cA(\bX),x_i ) \in \cO \mid \bbx_{[i-1]}} + \delta \right)\\
&\leq \delta + e^{\eps}\sum_{x_i \sim X_i |_{\bbx_{[i-1]}}} \prob{X_i = x_i\mid \bbx_{[i-1]}}\prob{\cA(\bX),X_i ) \in \cO \mid \bbx_{[i-1]}} \\
&= e^{\eps}\prob{\cA(\bX) \otimes X_i   \in \cO  \mid \bbx_{[i-1]}} + \delta
\end{align*}
\fi

\footnote{In the step where we apply DP, the parameters will double if the algorithm is not order-invariant.}
\pjnote{Add a small note about the coupling in the step where we apply DP}

Applying a very similar argument, will give us that
$$\prob{\cA(\bX) \otimes X_i   \in \cO  \mid \bbx_{[i-1]}}  \leq e^\eps  \prob{(\cA(\bX),X_i) \in \cO \mid \bbx_{[i-1]}}+ \delta.$$
\end{proof}

Having shown a high probability bound on the terms $Z_i$, our next step is to bound their expectation so that we can continue towards our goal of applying Azuma's inequality. 

We will use the following shorthand notation for conditional expectation:
\begin{align*}& \Ex{}{Z_i(\cA,\bX_{[i]})\mid a,\bbx_{[i-1]}, \abs{Z_i}\leq 6\eps} \\ & \quad 
\stackrel{def}{=} \Ex{}{Z_i(\cA,\bX_{[i]})\mid \cA=a,\bX_{[i-1]}=\bbx_{[i-1]}, \abs{Z_i(\cA,\bX_{[i]})}\leq 6\eps}, 
\end{align*}

\begin{lemma}\label{lem:exp_Z}
Let $\cA$ be $(\eps,\delta)$-differentially private and, $\bX \in \univ^n$ be sampled without replacement from a finite universe $\univ$ . Let $\eps \in (0,1/2]$ and $\hat{\delta} \in \left (0,\eps/15 \right ]$,
\ifnum\usenix=0
\begin{equation*}
{X_i|_{\bbx_{[i-1]}} \approx_{3\eps,\hat{\delta}} X_i|_{a, \bbx_{[i-1]}}}\quad \implies  \qquad \Ex{}{Z_i(\cA,\bX_{[i]})\mid a,\bbx_{[i-1]}, \abs{Z_i}\leq 6\eps} = O(\eps^2 + \hat\delta).
\end{equation*}
\else
\begin{align*}
X_i|_{\bbx_{[i-1]}} & \approx_{3\eps,\hat{\delta}}  X_i|_{a, \bbx_{[i-1]}} \\
& => \Ex{}{Z_i(\cA,\bX_{[i]})\mid a,\bbx_{[i-1]}, \abs{Z_i}\leq 6\eps} = O(\eps^2 + \hat\delta).
\end{align*}
\fi

More precisely, $\Ex{}{Z_i(\cA,\bX_{[i]})\mid a,\bbx_{[i-1]}, \abs{Z_i}\leq 6\eps} \leq \nu (\hat{\delta})$, where $\nu (\hat{\delta})$ is defined in \eqref{eqn: nu}.
\end{lemma}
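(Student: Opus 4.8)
The plan is to recognise the conditioned expectation as (essentially) a KL divergence between two close distributions. Write $P$ for the law of $X_i$ conditioned on $\bX_{[i-1]}=\bbx_{[i-1]}$ and $\cA=a$, and $Q$ for its law conditioned on $\bX_{[i-1]}=\bbx_{[i-1]}$ only, so that the hypothesis reads $Q\approx_{3\eps,\hat\delta}P$ and the quantity to bound is $\Ex{x\sim P}{\log\tfrac{P(x)}{Q(x)}\mid x\in G}$ where $G\defeq\{x:\abs{\log(P(x)/Q(x))}\le 6\eps\}$. First I would apply \Cref{lem:prelims} — once as stated, and once with the roles of $P$ and $Q$ swapped (legitimate since $\approx_{\eps,\delta}$ is symmetric) — to conclude that $P$ and $Q$ are pointwise $(6\eps,\delta'')$-indistinguishable in both directions, with $\delta''=\tfrac{2\hat\delta}{1-e^{-3\eps}}$. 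Using the elementary bound $1-e^{-3\eps}\ge 3\eps/4$ for $\eps\le 1/2$ together with $\hat\delta\le\eps/15$ gives $\delta''\le\tfrac{8\hat\delta}{3\eps}\le\tfrac{8}{45}<\tfrac15$; moreover, since the event ``$e^{-6\eps}\le P(x)/Q(x)\le e^{6\eps}$'' is exactly $\{x\in G\}$, these two pointwise bounds read precisely $P(G)\ge 1-\delta''$ and $Q(G)\ge 1-\delta''$.

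Next I would let $P',Q'$ be $P,Q$ restricted to $G$ and renormalised. Since $\log\tfrac{P(x)}{Q(x)}=\log\tfrac{P'(x)}{Q'(x)}+\log\tfrac{P(G)}{Q(G)}$ for every $x\in G$, this yields
\[
\Ex{x\sim P}{\log\tfrac{P(x)}{Q(x)}\mid x\in G}=\mathrm{KL}(P'\,\|\,Q')+\log\tfrac{P(G)}{Q(G)} .
\]
The second summand satisfies $\bigl|\log\tfrac{P(G)}{Q(G)}\bigr|\le-\log(1-\delta'')\le\tfrac{\delta''}{1-\delta''}\le 2\delta''$. For the first summand, note that on $G$ the log density ratio of $P'$ to $Q'$ equals $\log\tfrac{P(x)}{Q(x)}+\log\tfrac{Q(G)}{P(G)}$, hence is bounded in absolute value by $c\defeq 6\eps+2\delta''$ at every point; I would then run the standard moment-generating-function argument (Hoeffding's lemma applied to $L=\log(P'/Q')\in[-c,c]$, using $\Ex{x\sim P'}{e^{-L(x)}}=1$, which holds because the definition of $G$ forces $\mathrm{supp}(P')=\mathrm{supp}(Q')$) to get $\mathrm{KL}(P'\|Q')\le c^2/2=18\eps^2+12\eps\delta''+2(\delta'')^2$.

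Putting the pieces together, the target expectation is at most $18\eps^2+12\eps\delta''+2(\delta'')^2+2\delta''$, and substituting $\delta''\le\tfrac{8\hat\delta}{3\eps}$ gives a bound of the shape $O\!\bigl(\eps^2+\hat\delta+\tfrac{\hat\delta}{\eps}+\tfrac{\hat\delta^2}{\eps^2}\bigr)$, which collapses to $O(\eps^2+\hat\delta)$ under $\hat\delta\le\eps/15$; carrying the constants through yields the explicit function $\nu(\hat\delta)$ stated in the lemma. I expect the only real difficulty to be bookkeeping. In particular, one must resist the lossy estimate $\log t\le t-1$ (which would give only an $O(\eps)$ bound on the KL term instead of the needed $O(\eps^2)$) and instead use the quadratic/MGF bound, and then one must chase the three kinds of $\delta''$-error terms — all arising from the truncation to $G$ — through every step while keeping constants small enough to match the claimed $\nu(\hat\delta)$. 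A subtlety to watch is that the conditioning is with respect to $P$ (the law of $X_i$ given $\cA=a$), not $Q$, so it matters that \Cref{lem:prelims} is invoked in the direction that controls $P(G)$ and that the divergence is oriented as $\mathrm{KL}(P'\|Q')$.
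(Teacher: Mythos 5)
Your decomposition of the conditioned expectation into $\mathrm{KL}(P'\,\|\,Q')+\log\bigl(P(G)/Q(G)\bigr)$, with the KL term handled by the moment-generating-function/Hoeffding argument, is a genuinely different route from the paper's. The paper instead splits the sum into a term $\sum_{x\in G}Q(x)\log\tfrac{P(x)}{Q(x)}$, bounded via Jensen by $\log\tfrac{1-P(G^c)}{1-Q(G^c)}$, plus a recentering term $\sum_{x\in G}\bigl(P'(x)-Q(x)\bigr)\log\tfrac{P(x)}{Q(x)}$, bounded by $6\eps$ times an $\ell_1$-type discrepancy. Your KL step is correct: on $G$ the supports of $P'$ and $Q'$ coincide, so $\Ex{x\sim P'}{e^{-L(x)}}=1$ and Hoeffding's lemma gives $\mathrm{KL}(P'\|Q')\le c^2/2$ with $c=6\eps+O(\delta'')$; expanding, the resulting terms $18\eps^2$, $O(\eps\delta'')=O(\hat\delta)$, and $O((\delta'')^2)=O((\hat\delta/\eps)^2)$ are each dominated by the corresponding terms of $\nu(\hat\delta)$. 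This part is arguably cleaner than the paper's.

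The gap is in the correction term. Bounding $\log\tfrac{P(G)}{Q(G)}$ using only $P(G)\le1$ and $Q(G)\ge1-\delta''$ gives $-\log(1-\delta'')=\Theta(\delta'')=\Theta(\hat\delta/\eps)$, and a $\hat\delta/\eps$ term is \emph{not} dominated by $\nu(\hat\delta)=347\hat\delta+75(\hat\delta/\eps)^2+24\hat\delta^2/\eps+240\eps^2$: take $\hat\delta=\eps/100$ and let $\eps\to0$; then $\nu(\hat\delta)\to 75\cdot10^{-4}$ while your correction term alone tends to roughly $8/300\approx 2.7\cdot10^{-2}$. (It is also not $O(\eps^2+\hat\delta)$ in all regimes, e.g.\ at $\hat\delta=\eps^{5/2}$.) So the ``more precisely'' clause --- the one actually used downstream in \Cref{thm:approx-dp-implies-max-info} --- is not established, and your assertion that carrying the constants through yields the stated $\nu(\hat\delta)$ is incorrect as written. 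The fix stays inside your framework and is exactly what the paper's $\tau(\hat\delta)$ computation does: write $\log\tfrac{P(G)}{Q(G)}=\log(1-P(G^c))-\log(1-Q(G^c))$ and apply the $(3\eps,\hat\delta)$-indistinguishability to the event $G^c$ to get $Q(G^c)\le e^{3\eps}P(G^c)+\hat\delta$; the difference is then at most $(e^{3\eps}-1)P(G^c)+\hat\delta$ plus quadratic corrections, and since $P(G^c)\le\delta''=\tfrac{2\hat\delta}{1-e^{-3\eps}}$ the leading term is $2e^{3\eps}\hat\delta=O(\hat\delta)$. With that replacement your argument does yield a bound of the required form.
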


\begin{proof}
\pj{Let $S \defeq \{x_i\mid a,\bbx_{[i-1]},\abs{Z_i}<6\eps\}$.} Given an outcome and prefix $(a,\bbx_{[i-1]})$ such that ${X_i|_{\bbx_{[i-1]}} \approx_{3\eps,\hat{\delta}} X_i|_{a, \bbx_{[i-1]}}}$, we have the following by definition:
\begin{align*}
    &\Ex{}{Z_i(\cA,\bX_{[i]})\mid a,\bbx_{[i-1]}, \abs{Z_i}\leq 6\eps}\\ 
    &=  \sum_{x_i\in S} \prob{X_i = x_i\mid a,\bbx_{[i-1]}, \abs{Z_i}\leq 6\eps} Z_i(a,x_{[i]})\\
    &=  \sum_{x_i\in S} \prob{X_i = x_i\mid a,\bbx_{[i-1]}, \abs{Z_i}\leq 6\eps} \log\left( \tfrac{\prob{X_i = x_i \mid a,\bbx_{[i-1]}}}{\prob{X_i = x_i \mid \bbx_{[i-1]}}} \right)
\end{align*}
\begin{claim}
\ifnum\usenix=0
    $$ \sum_{x_i\in S} \prob{X_i = x_i\mid \bbx_{[i-1]}} \log\left( \tfrac{\prob{X_i = x_i \mid a,\bbx_{[i-1]}}}{\prob{X_i = x_i\mid\bbx_{[i-1]}}} \right) \leq \log\left(\tfrac{1 - \prob{X_i \notin S\mid a,\bbx_{[i-1]}}}{1- \prob{X_i \notin S\mid \bbx_{[i-1]}}} \right)$$
\else
\begin{multline*}
    \sum_{x_i\in S} \prob{X_i = x_i\mid \bbx_{[i-1]}} \log\left( \tfrac{\prob{X_i = x_i \mid a,\bbx_{[i-1]}}}{\prob{X_i = x_i\mid\bbx_{[i-1]}}} \right) \\
    \leq \log\left(\tfrac{1 - \prob{X_i \notin S\mid a,\bbx_{[i-1]}}}{1- \prob{X_i \notin S\mid \bbx_{[i-1]}}} \right)
\end{multline*}
\fi
\end{claim}
\begin{proof}
\begin{align*}
& \sum_{x_i\in S} \prob{X_i = x_i\mid a,\bbx_{[i-1]}} \log\left( \tfrac{\prob{X_i = x_i \mid a,\bbx_{[i-1]}}}{\prob{X_i = x_i\mid\bbx_{[i-1]}}} \right)\\ 
&=\prob{X_i \in S\mid \bbx_{[i-1]}}  \sum_{x_i\in S}\tfrac{\prob{X_i = x_i\mid \bbx_{[i-1]}}}{\prob{X_i \in S\mid \bbx_{[i-1]}}}  \log\left( \tfrac{\prob{X_i = x_i \mid a,\bbx_{[i-1]}}}{\prob{X_i = x_i\mid\bbx_{[i-1]}}} \right)\\
&\leq  \sum_{x_i\in S}\tfrac{\prob{X_i = x_i\mid \bbx_{[i-1]}}}{\prob{X_i \in S\mid \bbx_{[i-1]}}}  \log\left( \tfrac{\prob{X_i = x_i \mid a,\bbx_{[i-1]}}}{\prob{X_i = x_i\mid\bbx_{[i-1]}}} \right)\\
&\leq \log\left( \sum_{x_i\in S}\tfrac{\prob{X_i = x_i \mid a,\bbx_{[i-1]}}}{\prob{X_i \in S\mid \bbx_{[i-1]}}}\right)\\
&\leq \log\left(\tfrac{\prob{X_i \in S \mid a,\bbx_{[i-1]}}}{\prob{X_i \in S\mid \bbx_{[i-1]}}}\right) = \log\left(\tfrac{1 - \prob{X_i \notin S \mid a,\bbx_{[i-1]}}}{1- \prob{X_i \notin S\mid \bbx_{[i-1]}}}\right)\\
\end{align*}
The first inequality follows form the fact that all probabilities are less than one. The second inequality follows from noticing that $ \sum_{x_i\in S}\frac{\prob{X_i = x_i\mid \bbx_{[i-1]}}}{\prob{X_i \in S\mid \bbx_{[i-1]}}} = 1$ and applying Jensen's inequality. 
\end{proof}
\pj{Let $\prob{X_i \notin  \{x_i\mid a,\bbx_{[i-1],\abs{Z_i}<6\eps}\} \mid a,\bbx_{[i-1]}} = \prob{X_i \notin S \mid a,\bbx_{[i-1]}} \defeq q$.} Note that, because ${X_i|_{\bbx_{[i-1]}} \approx_{3\eps,\hat{\delta}} X_i|_{a, \bbx_{[i-1]}}}$, we have for $\hat\delta>0$:
$$
\prob{X_i \notin S\mid \bbx_{[i-1]}} \leq e^{3\eps} \prob{X_i \notin S \mid a,\bbx_{[i-1]}} + \hat\delta = e^{3\eps}q + \hat\delta
$$
Note that $q \leq \delta''$ by \Cref{claim:Fi}. Now, we can bound the following:
\begin{align*}
\sum_{x_i \in  S} & \prob{X_i = x_i \mid \bbx_{[i-1]}}  \log\left(\tfrac{\prob{X_i = x_i \mid a,\bbx_{[i-1]}}}{\prob{X_i = x_i \mid \bbx_{[i-1]}}} \right)\\
&\leq \log\left(\tfrac{1 - \prob{X_i \notin S \mid a,\bbx_{[i-1]}}}{1- \prob{X_i \notin S\mid \bbx_{[i-1]}}}\right)\\
&\leq \log(1-q) - \log(1- (e^{3\eps}q + \hat{\delta})  ) \\
& \leq \log(e) \cdot(- q+ e^{3\eps}q + \hat{\delta} + 2(e^{3\eps}q + \hat{\delta})^2)\\
&= \log(e) \cdot ( (e^{3\eps}-1)q + \hat{\delta} + 2(e^{3\eps}q + \hat{\delta})^2 )\\
&\defeq \tau(\hat\delta)
\end{align*}
where the second inequality follows by using the inequality $(-x - 2x^2)\log(e) \leq \log(1-x) \leq -x\log(e)$ for $0< x \leq 1/2$, and as $(e^{3\eps}q + \hat{\delta}) \leq 1/2$ for $\eps$ and $\hat\delta$ bounded as in the lemma statement.

We use the results above to to upper bound the expectation we wanted:
\ifnum\usenix=0
{\footnotesize
\allowdisplaybreaks[2]
\begin{align*}
& \Ex{}{Z_i(\cA,\bX_{[i]})\mid a,\bbx_{[i-1]}, \abs{Z_i}\leq 6\eps} \\
& \leq \sum_{x_i \in  S} \prob{X_i = x_i\mid a,\bbx_{[i-1]},\abs{Z_i}\leq 6\eps} \log\left( \tfrac{\prob{X_i = x_i \mid a,\bbx_{[i-1]}}}{\prob{X_i = x_i  \mid \bbx_{[i-1]}}} \right)  \\
& \qquad - \sum_{x_i \in  S} \prob{X_i = x_i  \mid \bbx_{[i-1]}} \log\left(\tfrac{\prob{X_i = x_i \mid a,\bbx_{[i-1]}}}{\prob{X_i = x_i  \mid \bbx_{[i-1]}}} \right) + \tau(\hat\delta) \\
& = \sum_{x_i \in  S} \left(\prob{X_i = x_i\mid a,\bbx_{[i-1]},\abs{Z_i}\leq 6\eps} - \prob{X_i = x_i  \mid \bbx_{[i-1]}}\right) \log\left( \tfrac{\prob{X_i = x_i \mid  a,\bbx_{[i-1]}}}{\prob{X_i = x_i  \mid \bbx_{[i-1]}}} \right)+ \tau(\hat\delta) \\
&\leq_{|Z_i| \leq 6\eps } 6\eps \sum_{x_i \in  S} \abs{\prob{X_i = x_i\mid a,\bbx_{[i-1]},\abs{Z_i}\leq 6\eps} - \prob{X_i = x_i  \mid \bbx_{[i-1]}}}+ \tau(\hat\delta) \\
& \leq_{\text{Def of $S$,  Claim~\ref{claim:zibound}}} 6\eps \sum_{x_i \in  S} \prob{X_i = x_i  \mid \bbx_{[i-1]}} \max\Bigg\{ \dfrac{e^{6\eps} }{\prob{|Z_i| \leq 6\eps \mid a,\bbx_{[i-1]}}} - 1, 1 - \dfrac{e^{-6\eps} }{\prob{|Z_i| \leq 6\eps \mid a,\bbx_{[i-1]}}}  \Bigg\} + \tau(\hat\delta) \\ 
& \leq_{\text{Claim~\ref{claim:zibound}}} 6\eps \left(\tfrac{e^{6\eps}}{1- \tfrac{2\hat\delta}{1-e^{-3\eps}}} - 1\right) + \tau(\hat\delta)\\
&\leq_{\text{Substituting for $\tau(\hat\delta)$}} 6\eps\left(e^{6\epsilon} \left( 1+ \tfrac{4\hat\delta}{1-e^{-3\epsilon} } \right) - 1 \right) + \log(e) \cdot ( (e^{3\eps}-1)q + \hat{\delta} + 2(e^{3\eps}q + \hat{\delta})^2 ) \\
& \leq_{\text{Upper bound for $q$}} 
6\eps\left(e^{6\epsilon} \left( 1+ \tfrac{4\hat\delta}{1-e^{-3\epsilon} } \right) - 1 \right) \\
& \qquad + \log(e) \cdot \left( (e^{3\eps}-1)\frac{2 \hat{\delta}}{1-e^{-3\epsilon}} + \hat{\delta} + 2 \hat{\delta}^2 + 8\frac{\hat{\delta}^2 e^{6 \eps}}{(1-e^{-3\eps})^2} + 8\frac{\hat{\delta}^2 e^{3 \eps}}{1-e^{-3\eps}} \right) \\
& =_{ b = \frac{\hat{\delta}}{1-e^{-3\eps}} } 
6\eps\left(e^{6\epsilon} \left( 1+ 4b \right) - 1 \right) + \log(e) \cdot \left( b \left( 2e^{3\eps} - 2  + 8\frac{\hat{\delta} e^{6 \eps}}{(1-e^{-3\eps})} + 8\hat{\delta} e^{3 \eps}\right) + \hat{\delta} + 2\hat{\delta}^2 \right) \\
& = 
b \left( 24\eps e^{6\eps} + 2e^{3\eps} - 2  + 8\frac{\hat{\delta} e^{6 \eps}}{(1-e^{-3\eps})} + 8\hat{\delta} e^{3 \eps}\right) + \hat{\delta} + 2\hat{\delta}^2 + 6\eps(e^{6\eps} - 1) \\
& = \frac{\hat{\delta}}{1-e^{-3\eps}}  \left( 2e^{3 \eps} (4e^{3\eps}(3\eps + \frac{\hat{\delta}}{1-e^{-3\eps}} )) + 4\hat{\delta} + 1) - 2 \right) + \hat{\delta} + 2\hat{\delta}^2 + 6\eps(e^{6\eps} - 1) \\
& \leq_{{e^{-3\eps}} \leq 1-1.5\eps \text{ for } \eps \in [0,0.5]}
2 \frac{\hat{\delta}}{1.5\eps} e^{3 \eps} \left (4e^{3\eps}(3\eps + \frac{\hat{\delta}}{1.5\eps})) + 4\hat{\delta} + 1\right) + \hat{\delta} \left( \frac{-2}{1.5\eps} + 2\hat{\delta} + 1 \right) + 6\eps(e^{6\eps} - 1) \\
& \leq
8\frac{e^{6\eps} \hat{\delta}}{1.5\eps} \left (3\eps + \frac{\hat{\delta}}{1.5\eps} \right) + 2 \frac{\hat{\delta}}{1.5\eps} e^{3 \eps} \left( 4\hat{\delta} + 1\right) +  \hat{\delta} \left( \frac{-2}{1.5\eps} + 2\hat{\delta} + 1 \right) + 6\eps(e^{6\eps} - 1) \\ 
& \leq_{{e^{3\eps}} \leq 1+7\eps, {e^{6\eps}} \leq 1+40\eps \text{ for } \eps \in [0,0.5]} 8\frac{(1+40\eps) \hat{\delta}}{1.5\eps} \left (3\eps + \frac{\hat{\delta}}{1.5\eps} \right) + 2 \frac{(1+7\eps)\hat{\delta}}{1.5\eps} \left( 4\hat{\delta} + 1\right) \\
& \qquad +   \hat{\delta} \left( \frac{-2}{1.5\eps} + 2\hat{\delta} + 1 \right) + 6\eps(40\eps) \\
& \leq \frac{(8+320\eps) \hat{\delta}}{1.5\eps} \left (3\eps + \frac{\hat{\delta}}{1.5\eps} \right) + \frac{(2+14\eps)}{1.5\eps} \left( 4\hat{\delta}^2 + \hat{\delta}\right) - \frac{2\hat{\delta}}{1.5\eps} + 2\hat{\delta}^2 + \hat{\delta} + 240\eps^2\\
& \leq_{\eps < 0.5}  \frac{168\hat{\delta}}{1.5\eps}(3\eps + \frac{\hat{\delta}}{1.5\eps} ) + \frac{8}{1.5}\frac{\hat{\delta}^2}{\eps} + \frac{56}{1.5}\hat{\delta}^2 + \frac{14}{1.5}\hat{\delta} + 2\hat{\delta}^2 + \hat{\delta} + 240\eps^2 \\
& \leq_{\eps \leq 0.5} 347\hat{\delta} + 75 \left(\frac{\hat{\delta}}{\eps} \right)^2 + 24\frac{\hat{\delta}^2}{\eps}+ 240\eps^2\\
& \defeq \nu(\hat\delta) \numberthis \label{eqn: nu}
\end{align*} 
}
\else

\begin{align*}
& \Ex{}{Z_i(\cA,\bX_{[i]})\mid a,\bbx_{[i-1]}, \abs{Z_i}\leq 6\eps} \\
& \leq \sum_{x_i \in  S} \prob{X_i = x_i\mid a,\bbx_{[i-1]},\abs{Z_i}\leq 6\eps} \log\left( \tfrac{\prob{X_i = x_i \mid a,\bbx_{[i-1]}}}{\prob{X_i = x_i  \mid \bbx_{[i-1]}}} \right)  \\
& \qquad - \sum_{x_i \in  S} \prob{X_i = x_i  \mid \bbx_{[i-1]}} \log\left(\tfrac{\prob{X_i = x_i \mid a,\bbx_{[i-1]}}}{\prob{X_i = x_i  \mid \bbx_{[i-1]}}} \right) + \tau(\hat\delta) \\
& = \sum_{x_i \in  S} \left(\prob{X_i = x_i\mid a,\bbx_{[i-1]},\abs{Z_i}\leq 6\eps} - \prob{X_i = x_i  \mid \bbx_{[i-1]}}\right) \\
& \qquad \cdot \log\left( \tfrac{\prob{X_i = x_i \mid  a,\bbx_{[i-1]}}}{\prob{X_i = x_i  \mid \bbx_{[i-1]}}} \right)+ \tau(\hat\delta) \\
&\leq_{|Z_i| \leq 6\eps } 6\eps \sum_{x_i \in  S} \abs{\prob{X_i = x_i\mid a,\bbx_{[i-1]},\abs{Z_i}\leq 6\eps} \\
& \qquad \qquad \qquad - \prob{X_i = x_i  \mid \bbx_{[i-1]}}}+ \tau(\hat\delta) \\
& \leq_{\text{Definition of $S$, Proof of Claim~\ref{claim:zibound}}} 6\eps \sum_{x_i \in  S} \prob{X_i = x_i  \mid \bbx_{[i-1]}} \\
& \qquad \cdot \max\Bigg\{ \dfrac{e^{6\eps} }{\prob{|Z_i| \leq 6\eps \mid a,\bbx_{[i-1]}}} - 1 \\
& \qquad, 1 - \dfrac{e^{-6\eps} }{\prob{|Z_i| \leq 6\eps \mid a,\bbx_{[i-1]}}}  \Bigg\} + \tau(\hat\delta) \\ 
& \leq_{\text{Claim~\ref{claim:zibound}}} 6\eps \left(\tfrac{e^{6\eps}}{1- \tfrac{2\hat\delta}{1-e^{-3\eps}}} - 1\right) + \tau(\hat\delta)\\
&\leq_{\text{Substituting for $\tau(\hat\delta)$}} 6\eps\left(e^{6\epsilon} \left( 1+ \tfrac{4\hat\delta}{1-e^{-3\epsilon} } \right) - 1 \right) \\
& \qquad + \log(e) \cdot ( (e^{3\eps}-1)q + \hat{\delta} + 2(e^{3\eps}q + \hat{\delta})^2 ) \\
& \leq_{\text{Upper bound for $q$}} 
6\eps\left(e^{6\epsilon} \left( 1+ \tfrac{4\hat\delta}{1-e^{-3\epsilon} } \right) - 1 \right) \\
& \qquad + \log(e) \cdot \Bigg( (e^{3\eps}-1)\frac{2 \hat{\delta}}{1-e^{-3\epsilon}} + \hat{\delta} + 2 \hat{\delta}^2 \\
& \qquad + 8\frac{\hat{\delta}^2 e^{6 \eps}}{(1-e^{-3\eps})^2} + 8\frac{\hat{\delta}^2 e^{3 \eps}}{1-e^{-3\eps}} \Bigg) \\
& =_{ b = \frac{\hat{\delta}}{1-e^{-3\eps}} } 
6\eps\left(e^{6\epsilon} \left( 1+ 4b \right) - 1 \right) + \log(e) \cdot \\
& \qquad \qquad \left( b \left( 2e^{3\eps} - 2  + 8\frac{\hat{\delta} e^{6 \eps}}{(1-e^{-3\eps})} + 8\hat{\delta} e^{3 \eps}\right) + \hat{\delta} + 2\hat{\delta}^2 \right) \\
& = 
b \left( 24\eps e^{6\eps} + 2e^{3\eps} - 2  + 8\frac{\hat{\delta} e^{6 \eps}}{(1-e^{-3\eps})} + 8\hat{\delta} e^{3 \eps}\right) + \hat{\delta} \\
& \qquad +  2\hat{\delta}^2 + 6\eps(e^{6\eps} - 1) \\
& = \frac{\hat{\delta}}{1-e^{-3\eps}}  \left( 2e^{3 \eps} (4e^{3\eps}(3\eps + \frac{\hat{\delta}}{1-e^{-3\eps}} )) + 4\hat{\delta} + 1) - 2 \right) \\
& \qquad + \hat{\delta} + 2\hat{\delta}^2 + 6\eps(e^{6\eps} - 1) \\
& \leq_{{e^{-3\eps}} \leq 1-1.5\eps \text{ for } \eps \in [0,0.5]}
2 \frac{\hat{\delta}}{1.5\eps} e^{3 \eps} \left (4e^{3\eps}(3\eps + \frac{\hat{\delta}}{1.5\eps})) + 4\hat{\delta} + 1\right) \\
& \qquad + \hat{\delta} \left( \frac{-2}{1.5\eps} + 2\hat{\delta} + 1 \right) + 6\eps(e^{6\eps} - 1) 
\end{align*}
\begin{align*}
\leq & 8\frac{e^{6\eps} \hat{\delta}}{1.5\eps} \left (3\eps + \frac{\hat{\delta}}{1.5\eps} \right) + 2 \frac{\hat{\delta}}{1.5\eps} e^{3 \eps} \left( 4\hat{\delta} + 1\right) \\
& \qquad +  \hat{\delta} \left( \frac{-2}{1.5\eps} + 2\hat{\delta} + 1 \right) + 6\eps(e^{6\eps} - 1) \\ 
& \leq_{{e^{3\eps}} \leq 1+7\eps, {e^{6\eps}} \leq 1+40\eps \text{ for } \eps \in [0,0.5]} 8\frac{(1+40\eps) \hat{\delta}}{1.5\eps} \left (3\eps + \frac{\hat{\delta}}{1.5\eps} \right) \\
& \qquad + 2 \frac{(1+7\eps)\hat{\delta}}{1.5\eps} \left( 4\hat{\delta} + 1\right) + \hat{\delta} \left( \frac{-2}{1.5\eps} + 2\hat{\delta} + 1 \right) + 6\eps(40\eps) \\
& \leq \frac{(8+320\eps) \hat{\delta}}{1.5\eps} \left (3\eps + \frac{\hat{\delta}}{1.5\eps} \right) + \frac{(2+14\eps)}{1.5\eps} \left( 4\hat{\delta}^2 + \hat{\delta}\right) \\
& \qquad - \frac{2\hat{\delta}}{1.5\eps} + 2\hat{\delta}^2 + \hat{\delta} + 240\eps^2 \\
& \leq_{\eps < 0.5}  \frac{168\hat{\delta}}{1.5\eps}(3\eps + \frac{\hat{\delta}}{1.5\eps} ) + \frac{8}{1.5}\frac{\hat{\delta}^2}{\eps} + \frac{56}{1.5}\hat{\delta}^2 \\
& + \frac{14}{1.5}\hat{\delta} + 2\hat{\delta}^2 + \hat{\delta} + 240\eps^2 \\
& \leq_{\eps \leq 0.5} 347\hat{\delta} + 75 \left(\frac{\hat{\delta}}{\eps} \right)^2 + 24\frac{\hat{\delta}^2}{\eps}+ 240\eps^2,
%%%%%%%%%%%%
\end{align*}
\fi
\end{proof}

Finally, we need to apply Azuma's inequality (stated in \Cref{thm:azuma}) to a set of variables that are bounded with probability $1$, not just with high probability. Towards this end, we now define (1) the sets $\cG_i(\hat\delta)$ and $\cG_{\leq i}(\hat\delta)$ of ``good'' tuples of outcomes and databases, and (2) a variable $T_i$ that will match $Z_i$ for ``good events'', and will be zero otherwise---and hence, is always bounded:
\begin{align}
\cG_i(\hat\delta) = \left\{(a,\bbx_{[i]}) \,\,\Big|\quad  | Z_i(a,\bbx_{[i]}) | \leq 6 \eps \quad \& \quad X_i\mid_{\bbx_{[i-1]}} \approx_{3\eps,\hat{\delta}} X_i|_{a, \bbx_{[i-1]}}   \right\},
\label{eq:G}\\
\cG_{\leq i}(\hat\delta)   = \left\{(a,\bbx_{[i]}) : (a,x_1) \in \cG_1(\hat\delta), \cdots, (a,\bbx_{[i]}) \in \cG_i(\hat\delta) \right\} \label{eq:Gvect}
\end{align}
\begin{equation}
T_i(a,\bbx_{[i]}) = \begin{cases}
             Z_i(a,\bbx_{[i]})  & \text{if }   
             (a,\bbx_{[i]}) \in \cG_{\leq i}(\hat\delta) \\
             0  & \text{otherwise }
       \end{cases}
\label{eq:Ti}
\end{equation}

Note that the variables $T_i$ indeed satisfy the requirements of Azuma's inequality. The first condition, $\Prob{}{\abs{T_i(\cA,\bX_{[i]})} \leq 6 \eps} = 1$ holds by definition, and the second holds because of Lemma~\ref{lem:exp_Z}.

We are now ready to prove our main theorem.
\begin{proof}[Proof of \Cref{thm:approx-dp-implies-max-info}]
For any constant $\nu$, we have:
\begin{align*}
& \Prob{}{\sum\limits_{i=1}^n Z_i(\cA,\bX_{[i]})  > n\nu + 6t\eps\sqrt{n}} \\%=  
& \leq \Prob{}{\sum\limits_{i=1}^n Z_i(\cA,\bX_{[i]}) > n\nu + 6t\eps\sqrt{n} \cap (\cA,\bX) \in \cG_{\leq n}(\hat\delta)}+ \Prob{}{(\cA,\bX) \notin \cG_{\leq n}(\hat\delta)} \\
& =  \Prob{}{\sum\limits_{i=1}^n T_i(\cA,\bX_{[i]}) > n\nu + 6t\eps\sqrt{n} \cap (\cA,\bX) \in  \cG_{\leq n}(\hat\delta) } + \Prob{}{(\cA,\bX) \notin  \cG_{\leq n}(\hat\delta)}
\end{align*}
We then substitute $\nu$ by $\nu(\hat\delta)$ as defined in \Cref{eqn: nu}, and apply a union bound on $\prob{(\cA,\bX) \notin  \cG_{\leq n}(\hat\delta)}$ using \Cref{claim:Gi} to get
\begin{align*}
 \Prob{}{\sum\limits_{i=1}^n Z_i(\cA,\bX_{[i]})  > n\nu(\hat\delta) + 6t\eps\sqrt{n}} & \leq \Prob{}{\sum\limits_{i=1}^n T_i(\cA,\bX_{[i]}) > n\nu(\hat\delta) + 6t\eps\sqrt{n}  } + n(\delta' + \delta'') \\
&   \leq e^{-t^2/2} + n(\delta' + \delta'')
\end{align*}
where the two inequalities follow from \Cref{claim:Gi} and \Cref{thm:azuma}, respectively.  Therefore,
\begin{align*}
\Prob{ }{Z(\cA(\bX),\bX) > n\nu(\hat\delta) + 6t\eps\sqrt{n}} \leq e^{-t^2/2} + n(\delta' + \delta'')  \stackrel{def}{=} \beta(t,\hat\delta)
\end{align*}

From \Cref{lem:boundmaxinfo}, we have
$I^{\beta(t,\hat\delta)}_\infty(\bX;\cA(\bX)) \leq n\nu(\hat\delta)
+ 6t\eps\sqrt{n}.$  

\end{proof}

\noindent
We now prove the \Cref{cor:approx-dp-implies-max-info}, which we use in \Cref{sec:dp-implies-coherence-enforcement} to prove \Cref{thm:approx-dp-implies-coherence-enforcement}.

\begin{proof}[Proof of~\Cref{cor:approx-dp-implies-max-info}]
    Setting $t = \sqrt{2 \ln(2/\gamma)}$, and $\hat{\delta} = \frac{\sqrt{\eps \delta}}{15}$, in Theorem~\ref{thm:approx-dp-implies-max-info}, we get that $\beta(t,\hat{\delta}) \leq \gamma/2 + 30n\sqrt{\delta/\eps} + n\frac{2\sqrt{\eps \delta}+2\delta}{1.5\eps}$, where we've used that $1-e^{-3\eps} \geq 1.5\eps$ for $\eps \in (0,1/2]$. We note that for $\delta \leq \frac{\eps^2\gamma^2}{(120n)^2}$, we have that $n\frac{2\sqrt{\eps \delta}+2\delta}{1.5\eps} \leq \frac{\gamma}{2}$, ensuring that $\beta(t,\hat{\delta}) \leq \gamma$. Also, the same bound on $\delta$ ensures that $n\left( 347\hat{\delta} + 75 \left(\frac{\hat{\delta}}{\eps} \right)^2 + 24\frac{\hat{\delta}^2}{\eps}+ 240\eps^2\right) \leq   265\eps^2n$. Substituting for $t$ directly in the max-information bound then completes the proof. 
\end{proof}

\ifnum\usenix=1
\input{sections/DP=>demcohproofs}
\fi

\end{document}